\documentclass[reqno]{amsart}
\usepackage{amssymb}
\usepackage{pdfsync}
\usepackage{hyperref}
\usepackage{color}
\usepackage{graphics,pstricks}
\usepackage{tikz}

\newtheorem{theorem}{Theorem}[section]
\newtheorem{proposition}[theorem]{Proposition}
\newtheorem{lemma}[theorem]{Lemma}

\newtheorem{definition}[theorem]{Definition}
\newtheorem{remark}[theorem]{Remark}

\numberwithin{equation}{section}
\newenvironment{acknowledgement}{\emph{Acknowledgement.}}


\DeclareMathOperator{\dv}{div}

\DeclareMathOperator{\tr}{tr}

\DeclareMathOperator{\im}{Im}

\newcommand{\RR}{\mathcal{R}}
\newcommand{\B}{\mathcal{B}}
\newcommand{\I}{\mathcal{I}}

\newcommand{\A}{\mathcal{A}}
\newcommand{\eps}{\epsilon}
\newcommand{\h}{\mathcal{H}}
\newcommand{\R}{\mathbb{R}}

\newcommand{\ee}{\mathsf{E}}

\newcommand{\z}{\mathbb{Z}}
\newcommand{\N}{\mathbb{N}}
\newcommand{\M}{\mathcal{M}}
\newcommand{\m}{\mathsf{M}}
\newcommand{\Es}{\mathbb{E}}
\newcommand{\T}{\mathcal{T}}
\newcommand{\cc}{\mathcal{C}}
\newcommand{\w}{\mathcal{W}}

\newcommand{\proba}{\mathbb{P}}
\newcommand{\LL}{\Lambda}
\newcommand{\p}{P_\omega^{(E)}}
\newcommand{\pp}{P_\omega^{(E) \bot}}
\newcommand{\la}{\langle}
\newcommand{\ra}{\rangle}

\newcommand{\e}{\mathrm{e}}
\newcommand{\dd}{\mathrm{d}}

\newcommand{\com}[1]{\left[#1\right]}
\newcommand{\pnorm}[1]{{\left\lVert #1 \right\rVert}_p}
\newcommand{\abs}[1]{\left\lvert #1 \right\rvert}
\newcommand{\norm}[1]{\left\lVert #1 \right\rVert}
\newcommand{\trnorm}[1]{{\left\lVert #1 \right\rVert}_1}
\newcommand{\normsch}[1]{{\left\lVert #1 \right\rVert}_2}
\newcommand{\scal}[1]{\la #1 \ra}
\newcommand{\f}{\mathcal{X}}
\newcommand{\style}{\displaystyle}

\begin{document}

\title[Bulk-edge equality of conductances]{Equality of bulk and edge Hall conductances for continuous magnetic random Schr\"odinger operators}
\author{Amal Taarabt}
\address{ Universit\'e de Toulon, CNRS, CPT, UMR 7332, 83957 La Garde, France}
\email{amal.taarabt@univ-tln.fr}


\begin{abstract} 
In this note, we prove the equality of the quantum bulk and the edge Hall conductances in mobility edges and in presence of disorder. 
The bulk and edge perturbations can be either of electric or magnetic nature. 
The edge conductance is regularized in a suitable way to enable the Fermi level to lie in a region of localized states.
\end{abstract}

\maketitle
\tableofcontents

\section{Introduction}\label{intro}

A large literature has emerged a few years only after the discovry of the integer quantum Hall effect \cite{KDP}.
Laughlin followed by Halperin, argued that the occurence of the \textit{plateaux} is due to the localization phenomenon \cite{Hal,L}.
The presence of impurities is indeed imperative in order to observe the quantum Hall effect \cite{B,BESB}.
In a disorder media, the energy spectrum consists in bands of extended states separated by energy regions of localized states or energy gaps 
\cite{BESB,GKS}.
If the Fermi energy lies in the extremities of these bands, where localization holds, the Hall conductance is constant. The quantum Hall conductance jumps from one integer value to another 
near the centers until to find a new localization region.
Halperin formulates the existence of the edge currents in the Hall systems \cite{Hal}. 
Indeed, the electrons flowing along the edge of the system rebond and induce then currents which are quantized through the edge conductance. 
He established that these conductances are \textit{a priori} equal.

The mathematical study of the quantization of Hall conductances has been first developed in parallel. While Bellissard and followers \cite{B,BESB,ASS,BoGKS,GKS1,GKS2} were 
interested in the Hall conductance, its topological nature, its quantization, and its derivation from a Kubo formula of the quantum Hall effect which is a part 
of the theory of noncommutative geometry, \cite{CG,CGH,DGR1,KSB,KRSB1,KRSB2} are rather devoted to the edge currents and their quantization. 
These similtanuous quantizations highlight the equality of the edge and bulk conductances that \cite{EG, EGS} showed by derivation in the discrete case. 
Elbau and Graf showed that the bulk and the edge conductances matches and are equal under a gap condition \cite{EG}. It was later improved in \cite{EGS} for energy
intervals lying in localization region of the bulk Hamiltonians.
Our goal in this paper is to prove that equality within the context of random magnetic Schr\"odinger operators in the continuum and in presence of electric or magnetic wall.

A great interest has been focused in the recent years on the study of random magnetic fields and their localisation properties \cite{AHK,BSK,CH,CHKR,DGR2,GhHK, W}.
To describe the bulk in our model, we consider electric and magnetic random perturbations of the free Landau Hamiltonian of Anderson type.
The associated Hall conductance is stemmed from the Kubo formula.
We then introduce a confining wall, that will be sent to infinity.
The models that we deal with are purely electric or purely magnetic (wall and random pertubation). However, we could also consider variants
with an electric random operator and a magnetic wall and vice-versa.
We define the associated operators and edge conductance.

It is important to emphasize that a localization regime requires a regularization of the usual edge conductance.
These regularizations are intended to cancel the contributions of states living away from the edge that might generate extra currents
and to restore the trace class property which could be destroyed.
We shall make use a regularization introduced in \cite{EGS}, and establish the equality of the bulk and edge Hall conductances  
by deriving one from the other, and not by separate quantization as in \cite{CG}.

The paper is organized as follow. In section~\ref{Bulk}, we introduce our bulk models and formulate the localization assumption. 
The section~\ref{Edge} is devoted to the description of models with electric or magnetic walls and the associated edge conductance.
The section\ref{results}, we state our main result and we sketch the strategy of its proof.
In section~\ref{proofs}, we provide the full proofs of the key steps described in section~\ref{results}.
Appendix~\ref{A} and Appendix~\ref{s-cv} contain some technical tools and trace-class properties.

\section{Bulk models}\label{Bulk}

We consider the Landau Hamiltonian 
\begin{equation}\label{landau} 
H_B=(-i\nabla-\A_0)^2 \quad{\mathrm{with}} \ \ \A_0(x_1,x_2)=\frac B2(-x_2,x_1), 
\end{equation}
where $\A_0$ is the vector potential generating a magnetic field with strength $B>0$ constant.
We shall consider electric and magnetic perturbations $V$ and $A$ of $H_B$ and we set $$H_B(A,V):=(-i\nabla-\A_0-A)^2 +V.$$
\newline
We recall the Leinfelder-Simader conditions (LS) for an operator of the form  
\begin{equation}\label{H(A,V)}
H(A,V):=(-i\nabla-A)^2+V.
 \end{equation}
We say that the magnetic potential $A$ and the electric potential $V$ satisfy the Leinfelder-Simader conditions if
\begin{enumerate}
 \item $A\in\mathrm{L}_\mathrm{loc}^2(\R^2,\R^2)$ avec $\dv A\in\mathrm{L}_{\mathrm{loc}}^2(\R^2,\R)$.
 \item $V(x)=V_+(x)-V_-(x)$ with $V_\pm\ge0, V_\pm\in\mathrm{L}_\mathrm{loc}^2(\R^2,\R)$ and $V_-$ relatively bounded with respect to 
$-\Delta$ with relative bound $<1$ such that there exist $0\le\alpha<1$ and $\beta\ge0$ independent of $\omega$ so that for all $\psi\in\mathcal{D}(\Delta)$, we have
\begin{equation*}
 \norm{V_-\psi}\le\alpha\norm{\Delta\psi}+\beta\norm{\psi}.
\end{equation*}
\end{enumerate}
Under these conditions, the operator $H(A,V)$ in \eqref{H(A,V)} is essentially self-adjoint on $\mathcal{C}_c^\infty(\R^2)$ \cite{LS}. 
\bigskip

In this work, we are interested in random perturbations of $H_B$ of electric and magnetic nature.

\subsection{Electric model}

We consider the random Landau Hamiltonian 
\begin{equation}\label{Bulk H elec} H_\omega^\ee=H_B + V_\omega \ \quad{\mathrm{on}} \ \ L^2(\R^2),\end{equation}
with $V_\omega$ a random potential of Anderson-type
\begin{equation}\label{V_omega}
 V_\omega :=\sum_{\gamma\in\z^2} \omega_\gamma u(.-\gamma),
\end{equation}
where $\omega=(\omega_\gamma)_{\gamma\in\z^2}$ is a family of independant and identically
distributed (iid) random variables
and the single site potential $u$ is a nonnegative bounded measurable function on $\R^2$ with compact support 
such that $-M_1\le V_\omega \le M_2$ with $0\le M_1,M_2<\infty$. 
We assume that the family $(\omega_\gamma)_\gamma$ has a common non-degenerate
probability distribution $\mu$ with bounded density $\rho$.
We write $(\Omega,\mathbb{P})$ for the underlying probability space and $\Es$ for the corresponding expectation. 
\bigskip

Using the magnetic translation $U_\alpha$ defined by
\begin{equation}\label{magn transl} 
 (U_\alpha \psi)(x)=\e^{-i\frac{B}{2}\alpha\wedge x}\psi(x-\alpha) \ \ \mathrm{for} \ \alpha\in\R^2,
\end{equation}  
where $\alpha\wedge x=\alpha_1 x_2-\alpha_2 x_1 $,
it follows that the random operator $H_\omega$ is $\z^2$-ergodic and is essentially self-adjoint with core $\mathcal{C}_c^\infty(\R^2)$. 
Hence, it follows from \cite{CL} that $H_\omega^\ee$ has a nonrandom spectrum and there exists a deterministic set $\Sigma_\ee\subset\R$ such that
$\sigma(H_\omega^\ee)=\Sigma_\ee$ with probability one.
\bigskip

The spectrum of the free Landau Hamiltonian $H_B$ given in \eqref{landau} consists in a sequence of infinitely degenerated eigenvalues, 
called Landau levels 
\begin{equation} \label{Landau levels} B_n=(2n-1)B, \ n=1,2,\dots \end{equation}
with the convention $B_0=-\infty$. And we have 
\begin{equation}\Sigma_\ee\subset\bigcup_n [B_n-M_1, B_n+M_2],\end{equation}
and there is no overlap provided that the open gap condition $M_1+M_2<2B$ is fulfilled. 

\begin{remark}
The assumption on the bounded density $\rho$ of the random variables is made in order to cover models that are known to exhibit dynamical localisation. 
\end{remark}

\subsection{Pure magnetic model}\label{magn mod}
Let $\A_\omega$ be a random vector potential of the form 
$$\style \A_\omega = \sum_{\gamma\in\z^2} \omega_\gamma u(.-\gamma),$$
satisfaying the Leinfelder-Simader conditions \cite{LS} almost surely.
The single site functions $u=(u_1,u_2)\in \cc^1(\R^2,\R^2)$ are compactly supported and
the random variables $\omega=(\omega_\gamma)_{\gamma\in\z^2}$ are independant and identically distributed (iid) with common probability distribution. 
The probability space is again denoted by $(\Omega,\proba)$.
We consider the magnetic random operator
\begin{equation}\label{Bulk H magn}
  H_\omega^\m= (-i\nabla-\A_0-\A_\omega)^2 \ \quad{\mathrm{on}} \ \ L^2(\R^2),
 \end{equation}
which is essentially self-adjoint on $\mathcal{C}_c^\infty(\R^2)$ and uniformly bounded from below
for $\proba-\mathrm{a.e} \ \omega$.
By ergodicity, we denote its spectrum $\sigma(H_\omega^\m)$ by $\Sigma_\m$. 
\\

The operators $H_\omega^\ee$ and $H_\omega^\m$ are essentially self-adjoint
and bounded from below:
there exists $\Theta_\bullet\ge 1$ such that $H_\omega^\bullet +\Theta_\bullet\ge 1$ \cite{BoGKS}
and where $\bullet=\ee,\m$.
For simplification and since our analysis remains essentially the same for both models, we may omit $\ee$ and $\m$ from the notations and write 
$H_\omega$ to denote $H_\omega^\ee$ and $H_\omega^\m$. 
Neverthless, when needed, we will specify the case we deal with.
\subsection{Localisation}

For $m>0$ and $\zeta\in (0,1]$ given, we introduce the random $(m,\zeta)$-subexponential moment at time $t$ for the time evolution, 
initially localized around the origin and localized in energy by the function $\f\in\mathcal{C}_{c,+}^\infty(I)$,
\begin{align}\label{moment} 
M_{\omega}(m,\zeta,\f,t):= &\normsch{\e^{{\frac{m}{2}}\abs{X}^{\zeta}} \e^{-itH_\omega} \f(H_\omega) \chi_0}^2. 
\end{align}
We define its time average expectation as 
\begin{equation}\label{avrg mom}
\M(m,\zeta,\f,T) := \frac{1}{T}\int_{0}^{\infty} \e^{-t/T} \ \Es\{ M_{\omega}(m,\zeta,\f,t)\} \ \dd t.
\end{equation}
Given an energy $E\in\R$, we consider the Fermi projector $\p=\chi_{(-\infty,E]}(H_\omega)$, the spectral projection of 
$H_\omega$ onto energies below $E$.

\begin{definition}\mbox{}
\begin{enumerate}
\item[{\bf(Loc)}]We say that the operator $H_\omega$ exhibits localization in an open interval $I$ if
there exist $m>0,\zeta\in (0,1)$ so that for any $\f\in C_{c,+}^{\infty}(I)$, we have

\begin{equation}\label{loc} \sup_{T} \M(m,\zeta,\f,T)<\infty.\end{equation}
We denote by $\varSigma_{loc} $ the region of localization 
\begin{equation}\label{loc region}
 \varSigma_{loc} :=\{ E\in\R: H_\omega \ \mathrm{exhibits \ localization \ in \ a \ neighborhood \ of} \ E\}.
\end{equation}

\item[{\bf(DFP)}]The Fermi projection $\p$ exhibits sub-exponential decay if the Fermi energy $E\in\varSigma_{loc}$ and if there exist 
$m>0,\zeta\in (0,1)$ such that we have
\begin{equation}\label{DFP_exp}
 \Es\left\{\normsch{\chi_x \p \chi_y}^2\right\}\leq C_{m,\zeta,B,E} \ \e^{-m|x-y|^\zeta} \ \ \mathrm{for \ all} 
\ x,y\in\z^2,
\end{equation}
where the constant $C_{m,\zeta,B,E}$ is locally bounded in $E$.
As a consequence, for any $\eps>0$ and $\proba$-a.e $\omega$, we have
\begin{equation}\label{DFP_omega}
\normsch{\chi_x \p \chi_y} \leq C_{\omega,m,\zeta,\eps,B,E} \ \e^{\eps|x|^\zeta} \e^{-m|x-y|^\zeta} \ \ \mathrm{for \ all} \ x,y\in\z^2.
\end{equation}
\end{enumerate}
\end{definition}

The existence of the region of localization \eqref{loc region} has been proven in \cite{CH,GK2,DGR2}.
Moreover, it corresponds to the region where the bootstrap multiscale analysis \textit{(MSA)} can be performed \cite{GK1,GK2}.
The magnetic models are traited in \cite{DGR2,GhHK}.
The {\bf (DFP)} property and \eqref{DFP_omega} play an important role in the study and the definition of the Hall conductance.

\subsection{Hall conductance}

Consider a smooth characteristic function $ \LL(s)$ which is equal to $1$ for $s\le-\frac12$ and $0$ for $s\ge\frac12$ such that 
$\mathrm{supp} \ \LL'\subset(-\frac12,\frac12)$. Let $\LL_j$ denotes the multiplication operator by the function $\LL_j(x)=\LL(x_j)$ for $j=1,2$.

\begin{definition}
The Hall conductance at a Fermi energy $E$ is defined by 

\begin{equation}\label{Hall conductane}
 \sigma_{\mathrm{Hall}}(B,\omega,E):= -i\tr\com{P_\omega^{(E)} \LL_2 \p, \p \LL_1 \p}. 
\end{equation}
\end{definition}
In view of \eqref{DFP_exp}, it is well defined in $\varSigma_{loc}$ (see \cite{GKS1}). 
The ergodicity property implies that \eqref{Hall conductane} 
is a nonrandom quantity in the sense that for $\mathbb{P}$-a.e $\omega$,
 \begin{equation}
  \sigma_{\mathrm{Hall}}(B,E):=\Es\left\{\sigma_{\mathrm{Hall}}(B,\omega,E)\right\}=\sigma_{\mathrm{Hall}}(B,\omega,E). 
 \end{equation}
Notice that the operators $\p\LL_2 \p$ and  $\p\LL_1 \p$ in \eqref{Hall conductane} are not separatly trace class otherwise the commutator would be zero.
\bigskip

The Hall conductance $\sigma_\mathrm{Hall}(B,E)$ is known to be constant in $\varSigma_{loc}$ \cite{GKS}.
This corresponds to the occurence of the well-known plateaux in the QHE.

\begin{remark}
 There are alternative definitions to \eqref{Hall conductane}, namely 
\begin{equation}\label{alt def 1}
-i\tr{\p \com{ \com{\p,\LL_2},\com{\p,\LL_1}}}.
\end{equation}
Note that the operator $\com{\p,\LL_2}\com{\p,\LL_1} $ in \eqref{alt def 1} is morally supported near the origin.
One can also consider
\begin{equation}\label{alt def 2}
  -i\tr\{\chi_0 \ \p \com{\com{\p,X_2},\com{\p,X_1}}\chi_0\},
\end{equation}
where $X_i$ is the multiplication operator by the coordinate $x_i$ for $i=1,2$.
\end{remark}

\section{Models with walls}\label{Edge}

In this note, we are interested in soft walls of magnetic or electric nature.

\subsection{Electric edge}

Let  $U\in\mathcal{C^{\infty}}(\R^2)$  be an $x_2$-invariant decreasing function
such that 
\begin{equation}\label{assump U}
\begin{cases} 
\style\lim_{x_1 \to-\infty} U(x_1)=U_-<\infty,\\
 U(x_1)=0 \ \ \ \ \ \mbox{for} \ \ \ x_1\ge0.
\end{cases}
\end{equation}
We should consider $U_-$ sufficiently large compared to the energy zone where we work.
The electric edge operator is giving by 

\begin{equation}\label{elec edge} 
H_{\omega,a}^\ee=H_B + U_a +V_\omega, 
\end{equation}
where $a>0$ and $U_a$ is the multiplication  
by the function $U_a (x_1)=U(x_1+a)$ which translate the wall and placing it at $x_1=-a$. It is a soft and left confining wall in the sense
that the particle remains trapped and confined on the right side of the plane.
\subsection{Magnetic edge}
Let $\A=(\A_1,\A_2)$ be a vector potential generating the magnetic field $\B:\R^2 \to \R$, i.e,
\begin{equation}
 \nabla\wedge  \A(x)=\B(x), \ \mathrm{for} \ x=(x_1,x_2)\in\R^2,
\end{equation}
where $\B$ is a smooth decreasing $x_2$-invariant function so that 
\begin{equation}\label{assump B}
\begin{cases} 
\style\lim_{x_1 \to-\infty} \B(x_1)=B_-<\infty,\\
\B(x_1)=0 \ \ \ \mbox{for} \ \ x_1\ge1. \end{cases}
\end{equation}
Once again, like the electric case abose, we translate this wall with a parameter $a>0$ so that
\begin{equation}
 \frac{\partial \A_2}{\partial x_1}(x_1+a,x_2) - \frac{\partial \A_1}{\partial x_2}(x_1+a,x_2)=\B(x_1+a):=\B_a(x_1).
\end{equation}
In that case, the Magnetic edge operator is  

\begin{equation}\label{magn edge}
 H_{\omega,a}^\m=(-i\nabla-\A_0-\A_a-\A_\omega)^2.
\end{equation}
If we set $\A_a^{\mathrm{Iw}}=\A_0 + \A_a$, we obtain the so-called Iwatsuka magnetic field with limits in $+\infty$ and $-\infty$ given by $B+B_-$ and $B$
respectively \cite{CFKS,DGR1,E,I}.

\bigskip

In view of the gauge invariance for magnetic operators, one can choose a suitable transformation and simplify the spectral studies of
magnetic operators of the form $(-i\nabla-\A)^2$. Let us consider the Laudau gauge and take $\A=(0,\A_2)$ where 
$\A_2=\beta(x_1):=\int_0^{x_1} \B(s) \dd s$. The invariance in $x_2$-direction allows the performance of the partial Fourier 
transform with respect to the variable $x_2$. Hence, the operator $H(\A)$ can be written as 
\begin{equation}H(\A)=-\frac{\partial^2}{{\partial x_1}^2} + (-i\frac{\partial}{\partial x_2} -\beta(x_1))^2. \end{equation}
 Then it is unitary equivalent to
 \begin{equation}
  h(k):= -\frac{\dd^2}{{\dd x_1}^2} + (k -\beta(x_1))^2 , \ \mathrm{for} \ k\in\R,
 \end{equation}
whose spectrum is discrete \cite{E}.

 \bigskip
 
By $H_{\omega,a}$, we mean both $H_{\omega,a}^\ee$ and $H_{\omega,a}^\m$.
Notice that the edge operators $H_{\omega,a}$ converge to $H_\omega$ in strong resolvent sense.
Hence $H_{\omega,a}\to H_\omega$ in the strong resolvent sense (see appendix~\ref{strong cv R}).
In order to justify this strong convergence, we the resolvent identity and we consider the difference operator 
\begin{equation}
 \Gamma_{\omega,a}^\m=H_{\omega,a}^\m-H_\omega^\m= -2 \A_a.(-i\nabla-\A_0-\A_\omega)+i\dv \A_a+|\A_a|^2,
\end{equation} 
and 
\begin{equation}
 \Gamma_{\omega,a}^\ee=H_{\omega,a}^\ee-H_\omega^\ee= U_a.
\end{equation} 
Since the operator $\Gamma_{\omega,a}^\bullet R_{\omega,a}$ is uniformely bounded in $a$ for $\bullet=\ee,\m$ and the compactly supported functions are dense in $\h$,
it suffices to verify this strong convergence in $\mathcal{C}_0^\infty(\R^2) $. We consider a test function 
$\phi\in\mathcal{C}_0^\infty(\R^2) $ leaving far apart from the wall such that $\mathrm{supp}\ \phi \cap \mathrm{supp}\ \B_a=\emptyset$, according to \cite{T}.

\subsection{Edge conductance}
We start with the definition of switch functions.

\begin{definition}\label{switch function}
 Let $g:\R\rightarrow[0,1]$ be a smooth decreasing function. We say that $g$ is a switch function if it has a compactly supported
derivative such that $g\equiv 1$ on the left side of $\mathrm{supp} \ g'$ and $g\equiv 0$ on the right one.
\newline
We say that $g$ is a switch function of an interval $I$ if $\mathrm{supp}\ g'\subset I$.
\end{definition}

Heuristically, the current along $x_1=-a$ and in direction $x_2$ induced by states with energy support in an interval $I$, is given by
\begin{equation*}
 J(I)=\tr(E_I(H_{\omega,a}) i\com{H_{\omega,a},\LL_2}),
\end{equation*} 
where $E_I(H_{\omega,a})$ is the spectral projection of $H_{\omega,a}$ on $I$. 
The edge conductance is then the ratio

\begin{equation*}
 \sigma_e(\omega,I)=\frac{J(I)}{|I|}\approx-i\tr(g'(H_{\omega,a})\com{H_{\omega,a},\LL_2}),
\end{equation*}
since
$$\style-g'(H_{\omega,a})\approx\frac{E_I(H_{\omega,a})}{|I|},$$
where $I$ lies in a spectral gap of $H_{\omega}$. However, it is more relevant for  physical interest, 
to consider the case where $\I$ falls into $\varSigma_{loc}$, region of localized states so that $I\cap\varSigma_{loc}\ne\emptyset$.
In fact, such states might generate spurious currents that we have to cancel.
In order to treat this case, a regularization of the edge conductance is required.
Some regularizations have been proposed in \cite{CG,CGH} and \cite{EGS}. The second candidate of \cite{EGS} is a time-average regularization where they considered
the Heinsenberg evolution of $\LL_1$ and time-averaged the final expression.
It is the regularization that we shall consider.
\bigskip

\begin{definition}\label{def g}
Let $I\subset (B_n,B_{n+1}) \cap \varSigma_{\mathrm{loc}}$ be a given interval for some $n$.
Let $g$ be a decreasing switch function of $I$. The regularized edge conductance of $H_\omega$ in $I$ is 
defined as
\begin{equation}\label{edge_cond}
\sigma_{e,\omega}^\mathrm{reg}:=\lim_{T\to\infty}\lim_{a\to\infty} \frac{1}{T} \int_0^T -i\tr g'(H_{\omega,a})
\com{H_{\omega,a},\LL_2}\LL_{1,a}^\omega(t)\dd t,
\end{equation}
whenever the limits exist and where $\LL_{1,a}^\omega(t):=\e^{itH_{\omega,a}}\LL_1 \ \e^{-itH_{\omega,a}}$. 
\end{definition}

Since the operator $g'(H_{\omega,a})\com{H_{\omega,a},\LL_2}\LL_{1,a}^\omega(t)$ is bounded, we only have to verify that the trace in \eqref{edge_cond} is well defined
and that such limits exist. The idea relies on the fact that far from the edge, the dynamic of $ \LL_{1,a}^\omega$ approaches that of 
$ \e^{itH_\omega}\LL_1\e^{-itH_\omega} $.

\begin{remark}
 Notice that both definitions \eqref{Hall conductane} and \eqref{edge_cond} do not depend either on $g$ as long as $\mathrm{supp}g'\subset I$ nor on $\LL_j$ for $j=1,2$.
\end{remark}

\section{Main result}\label{results}
\subsection{Bulk-Egde equality}
Our main result states that in the localization zone \eqref{loc region} of the Bulk operator $H_\omega$ and in presence of a confining edge, 
the Hall and edge conductances match and they are equal.
This result extends the main result of \cite{EGS} to the continuous setting and to purely random magnetic Schr\"odinger operators. 

\begin{theorem}\label{equality}
Let $I \subset (B_n,B_{n+1}) \cap \varSigma_{loc} $ be an interval for some $n\in\N$ such that $U_-,B_->\sup I$. 
Then for any switch function $g$ of $I$ and any $E\in \mathrm{supp}\ g'$,
the edge conductance \eqref{edge_cond} is well defined and we have
\begin{equation}
\sigma_{e,\omega}^\mathrm{reg}=\sigma_{\mathrm{Hall}}(B,\omega,E)  \ \ \mathrm{for} \ \proba-\mathrm{a.e} \ \omega . \end{equation}
\end{theorem}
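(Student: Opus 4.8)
The plan is to follow the strategy of Elbau--Graf--Schenker \cite{EGS}, adapted to the continuum and to magnetic walls, by interpolating between the edge and bulk pictures through the parameter $a$. The starting point is to rewrite the regularized edge conductance \eqref{edge_cond} in a form that can be compared directly with \eqref{Hall conductane}. First I would exploit the cyclicity of the trace together with the identity $i\com{H_{\omega,a},\LL_2}=i\com{H_{\omega,a}-H_\omega,\LL_2}+i\com{H_\omega,\LL_2}$, using the fact that $H_{\omega,a}-H_\omega=\Gamma_{\omega,a}^\bullet$ is supported near the wall at $x_1=-a$. Since $g'(H_{\omega,a})$ sees energies in $I\subset(B_n,B_{n+1})\cap\varSigma_{loc}$ and $U_-,B_->\sup I$, a Combes--Thomas estimate (collected in Appendix~\ref{A}) shows that $g'(H_{\omega,a})$ decays exponentially away from the bulk region $x_1\gtrsim 0$; hence the contribution of the $\Gamma_{\omega,a}^\bullet$ term vanishes as $a\to\infty$, and only $\tr g'(H_{\omega,a})\com{H_\omega,\LL_2}\LL_{1,a}^\omega(t)$ survives. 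Simultaneously, in that same region $H_{\omega,a}$ and $H_\omega$ are close in a strong sense, and $\LL_{1,a}^\omega(t)$ converges to $\e^{itH_\omega}\LL_1\e^{-itH_\omega}$ on the relevant subspace; the trace-class bounds of Appendix~\ref{s-cv} should let one pass the $a\to\infty$ limit inside, replacing $H_{\omega,a}$ throughout by $H_\omega$.

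The second step is the time average. After the $a\to\infty$ limit one is left with $\lim_{T\to\infty}\frac1T\int_0^T -i\tr\big(g'(H_\omega)\com{H_\omega,\LL_2}\e^{itH_\omega}\LL_1\e^{-itH_\omega}\big)\dd t$. Writing $\e^{itH_\omega}\LL_1\e^{-itH_\omega}=\LL_1+\int_0^t \e^{isH_\omega} i\com{H_\omega,\LL_1}\e^{-isH_\omega}\dd s$ and using that $i\com{H_\omega,\LL_1}$ is localized near $x_1=0$ while $g'(H_\omega)$ acts in the localization regime, the key point is that the Heisenberg evolution of this localized operator spreads only sub-exponentially fast because of \textbf{(Loc)}/\textbf{(DFP)}; this is precisely what makes the Cesàro average converge and kills the oscillatory contributions, leaving a limit that no longer depends on the detailed time dynamics. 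Concretely I expect the time-averaged expression to collapse, after another use of cyclicity and of $g'(H_\omega)f(H_\omega)$-calculus, to $-i\tr\big(\p\,\com{\com{\p,\LL_2},\com{\p,\LL_1}}\big)$ evaluated with $E\in\supp g'$, i.e. to the alternative form \eqref{alt def 1} of the Hall conductance; here one uses that $g'$ integrates to $-1$ across $\supp g'\subset I$ and that the Fermi projector $\p=\chi_{(-\infty,E]}(H_\omega)$ has the sub-exponential off-diagonal decay \eqref{DFP_exp}, so that all the commutator products are trace class and the trace is independent of $E\in\supp g'$ and of the choice of $\LL_j$.

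Finally I would invoke the already-known equality of \eqref{Hall conductane} with its alternative form \eqref{alt def 1} (a standard manipulation valid in $\varSigma_{loc}$, cf. \cite{GKS1}) to conclude $\sigma_{e,\omega}^{\mathrm{reg}}=\sigma_{\mathrm{Hall}}(B,\omega,E)$, and note that the right-hand side is $\proba$-almost surely nonrandom by ergodicity. The main obstacle, I expect, is making the double limit $\lim_{T\to\infty}\lim_{a\to\infty}$ rigorous: one must show the $a$-limit exists with enough uniformity in $t$ (or at least locally uniformly) that the subsequent time average makes sense, and that the trace-class norms are bounded uniformly in both $a$ and $t$. This is where the continuum causes genuine extra work compared to \cite{EGS}: the geometric resolvent/Combes--Thomas inputs must be quantitative in the wall position $a$, and for the magnetic wall one has to control the unbounded first-order term $-2\A_a\cdot(-i\nabla-\A_0-\A_\omega)$ in $\Gamma_{\omega,a}^\m$ — handled by writing $\Gamma_{\omega,a}^\m R_{\omega,a}$ and using that this product is uniformly bounded in $a$, as noted above, together with the disjoint-support choice of test functions. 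A secondary technical point is justifying that $g'(H_{\omega,a})\com{H_{\omega,a},\LL_2}\LL_{1,a}^\omega(t)$ is trace class for each finite $a$ (not merely bounded), which should follow from the localization of $g'(H_{\omega,a})$ near the edge combined with the Schatten-class estimates of Appendix~\ref{s-cv}.
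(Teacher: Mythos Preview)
Your outline has a genuine gap at the very first step, and it is precisely the place where the paper does the real work.

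You propose to pass directly to the limit $a\to\infty$ in
\[
\tr\bigl(g'(H_{\omega,a})\com{H_{\omega,a},\LL_2}\LL_{1,a}^\omega(t)\bigr),
\]
arriving at the bulk quantity $\tr\bigl(g'(H_\omega)\com{H_\omega,\LL_2}\LL_1^\omega(t)\bigr)$. But this limit object is \emph{not} trace class: the commutator $\com{H_\omega,\LL_2}$ localizes near $x_2=0$, yet nothing localizes in $x_1$, and $g'(H_\omega)$ carries no spatial localization of its own (supp $g'$ lies in $\varSigma_{loc}$, where the spectrum is typically dense pure point, so $g'(H_\omega)$ is not spatially confined). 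The trace-class property of $g'(H_{\omega,a})\com{H_{\omega,a},\LL_2}\LL_1$ at finite $a$ is obtained in the paper only by comparison with an auxiliary operator $\tilde H_{\omega,a}$ whose spectrum lies above $\sup I$ (so $g'(\tilde H_{\omega,a})=0$), and that comparison is tied to the wall; it does not survive $a\to\infty$. Your statement that ``$g'(H_{\omega,a})$ decays exponentially away from the bulk region $x_1\gtrsim 0$'' is therefore not correct as phrased: it is the \emph{difference} $g'(H_{\omega,a})-g'(\tilde H_{\omega,a})$ that picks up localization near the edge, not $g'(H_{\omega,a})$ itself.

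The paper avoids this obstruction by a rewriting you do not have. Applying the Helffer--Sj\"ostrand formula to the \emph{primitive} $G(x)=\int_x^\infty g(s)\,\dd s$ (so as to gain an extra resolvent power) yields
\[
\tr g'(H_{\omega,a})\com{H_{\omega,a},\LL_2}\LL_{1,a}^\omega(t)
=\tr \com{g(H_{\omega,a}),\LL_2}\LL_{1,a}^\omega(t)+\tr\RR_{\omega,a}(t),
\]
with an explicit remainder $\RR_{\omega,a}(t)$ built from products of resolvents and the two commutators $\com{H_{\omega,a},\LL_2}$, $\com{H_{\omega,a},\LL_{1,a}^\omega(t)}$. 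The remainder is doubly localized and is shown to vanish in the double limit. For the main term one proves the crucial fact $\tr\com{g(H_{\omega,a}),\LL_2}\LL_1=0$ (again via the auxiliary $\tilde H_{\omega,a}$), so one may subtract $\LL_1$ for free and study instead
\[
\com{g(H_{\omega,a}),\LL_2}\bigl(\LL_{1,a}^\omega(t)-\LL_1\bigr),
\]
which, by Duhamel, is now localized in \emph{both} coordinates. Only at this point can the $a\to\infty$ limit be taken in trace norm, producing the genuinely trace-class bulk object $\com{g(H_\omega),\LL_2}(\LL_1^\omega(t)-\LL_1)$. The subsequent time average is then handled not by appealing to slow spreading under \textbf{(Loc)}, but by the spectral decomposition $g(H_\omega)=-\int g'(E)\,\p\,\dd E$, writing the trace as $\tr\Pi_E(t)-\tr\Pi_E$ with $\Pi_E$ as in \eqref{pi_E}, and using that $\frac1T\int_0^T e^{-it(\mu-\lambda)}\dd t\to 0$ for $\lambda\ne\mu$ together with dominated convergence. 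The identification with $\sigma_{\mathrm{Hall}}$ is then the short computation $\sigma_{\mathrm{Hall}}(B,\omega,E)=i\tr\Pi_E$, plus constancy of $\sigma_{\mathrm{Hall}}$ on $I\subset\varSigma_{loc}$ to strip the $\int g'(E)\,\dd E$.

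In short: the splitting $\com{H_{\omega,a},\LL_2}=\com{\Gamma_{\omega,a},\LL_2}+\com{H_\omega,\LL_2}$ does not create the $x_1$-localization you need, and the direct bulk limit you take is ill-defined. The missing ingredients are (i) the passage from $g'(H_{\omega,a})\com{H_{\omega,a},\LL_2}$ to $\com{g(H_{\omega,a}),\LL_2}$ plus a controllable remainder via Helffer--Sj\"ostrand on the primitive $G$, and (ii) the zero-trace subtraction $\LL_{1,a}^\omega(t)\rightsquigarrow \LL_{1,a}^\omega(t)-\LL_1$, which is what makes the bulk limit trace class.
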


\subsection{Strategy of the proof}\label{strategy}

Throughout the next sections, we fix $\omega\in\Omega$ and we let $I$ to be an interval such that $I\subset (B_n,B_{n+1})\cap \varSigma_{loc}$ for some $n\in\N$ given. 
Let $g$ be a switch function of $I$.
The core of the proof of Theorem~\ref{equality} is based on some intermediate steps that we shall outline below.
\bigskip

First, we compare the operators $g'(H_{\omega,a})\com{H_{\omega,a},\LL_2}$ and $\com{g(H_{\omega,a}),\LL_2} $ using
the Helffer-Sj\"{o}strand formulas but applied to the primitive function 
$$G(x):=\int_{x}^\infty g(s)\dd s.$$
We thus have
\begin{equation}\label{sj}
 g(H_{\omega,a})=-\frac{1}{2\pi}\int \overline{\partial}\tilde G(z) R_{\omega,a}^2 (z) \ \dd u \dd v 
\end{equation}
and
\begin{equation}\label{sj'}
 g'(H_{\omega,a})=\frac{1}{\pi}\int_{\R^2} \overline{\partial}\tilde G(z) R_{\omega,a}^3 (z) \ \dd u \dd v, 
\end{equation}
where $R_{\omega,a}(z)=(H_{\omega,a}-z)^{-1}$ and $z=u+iv$
and $\tilde G$ is a quasi-analytic extension of $G$ of order $k$ for $k=1,2,\dots$ \cite{D}.
Next, we use the second order resolvent identity 

\begin{equation}\label{R identity}
 \com{R_{\omega,a}^2(z),\LL_j}=- R_{\omega,a}^2(z)\com{H_{\omega,a},\LL_j}R_{\omega,a}(z)
-R_{\omega,a}(z)\com{H_{\omega,a},\LL_j}R_{\omega,a}^2(z)
\end{equation}
to write
\begin{align}
 \com{g(H_{\omega,a}),\LL_2}\LL_{1,a}^\omega(t)
&=\frac{1}{2\pi}\int_{\R^2} \overline{\partial}\tilde G(z) R_{\omega,a}^2(z) \com{H_{\omega,a},\LL_2} R_{\omega,a}(z) 
\LL_{1,a}^\omega(t)
\dd u \dd v\notag\\
&+\frac{1}{2\pi}\int_{\R^2} \overline{\partial}\tilde G(z) R_{\omega,a}(z) \com{H_{\omega,a},\LL_2} R_{\omega,a}^2(z)
\LL_{1,a}^\omega(t) \dd u \dd v.\label{g lambda}
\end{align}
We claim that the operators 
$$\com{g(H_{\omega,a}),\LL_2}\LL_{1,a}^\omega(t)\ \ \ \mathrm{and} \ \ \ g'(H_{\omega,a})\com{H_{\omega,a},\LL_2}\LL_{1,a}^\omega(t)$$
are both trace class according to Lemma~\ref{0 trace}. Together with \eqref{sj'} and the cyclicity of the trace, we have
\begin{align}
 \tr R_{\omega,a}^3(z)\com{H_{\omega,a},\LL_2}\LL_{1,a}^\omega(t)
&= \frac{1}{2}\tr R_{\omega,a}^2(z)\com{H_{\omega,a},\LL_2}\LL_{1,a}^\omega(t) R_{\omega,a}(z)\\
&+\frac{1}{2}\tr R_{\omega,a}(z)\com{H_{\omega,a},\LL_2}\LL_{1,a}^\omega(t)R_{\omega,a}^2(z).\label{g' lambda}
\end{align}
We thus compare \eqref{g lambda} and \eqref{g' lambda} and obtain an operator $\RR_{\omega,a}(t)$ that we call the remainder operator. We thus get
\begin{equation}\label{link}
\tr g'(H_{\omega,a})\com{H_{\omega,a},\LL_2}\LL_{1,a}^\omega(t)
=\tr \com{g(H_{\omega,a,}),\LL_2}\LL_{1,a}^\omega(t) +\tr \RR_{\omega,a}(t),
\end{equation}
with  
\begin{align}\label{Rest R}
\RR_{\omega,a}(t)
&=\frac{1}{2\pi}\int \overline{\partial} \tilde G(z) R_{\omega,a}^2(z) \com{H_{\omega,a},\LL_2} 
R_{\omega,a}(z) \com{H_{\omega,a},\LL_{1,a}^\omega(t)}R_{\omega,a}(z) \dd u \dd v \notag\\
&+ \frac{1}{2\pi}\int \overline{\partial}\tilde G(z) R_{\omega,a}(z) 
\com{H_{\omega,a},\LL_2}R_{\omega,a}^2(z) \com{ H_{\omega,a}, \LL_{1,a}^\omega(t)}R_{\omega,a}(z)  \dd u \dd v\notag\\
&+ \frac{1}{2\pi}\int \overline{\partial}\tilde G(z) R_{\omega,a}(z) 
\com{H_{\omega,a},\LL_2}R_{\omega,a}(z) \com{ H_{\omega,a}, \LL_{1,a}^\omega(t)}R_{\omega,a}^2(z)  \dd u \dd v.
\end{align}

We note that we have intentionally applied Helffer-Sj\"ostrand calculs to the primitive function $G$ in order to get sufficiently high power of the resolvent.

\bigskip

The key steps of the proof of Theorem~\ref{equality} are stated in forthcoming preliminary lemmas. 
The strategy consists in sending the wall to infinity by taking the limit $a\to+\infty$ in \eqref{link}. This leads to 
bulk quantities that we further average in time and analyze.
\bigskip

We start by showing that the key operators we deal with are trace class.
\begin{lemma}\label{0 trace}
Let $g$ be a switch function of an open interval $I$. Then the operators
\begin{enumerate}
 \item[$\bullet$] $ \com{g(H_{\omega,a}),\LL_2}\LL_1$
 \item[$\bullet$] $g'(H_{\omega,a})\com{H_{\omega,a},\LL_2}\LL_1$
 \item[$\bullet$] $ \com{g(H_{\omega,a}),\LL_2}\LL_{1,a}^\omega(t)$
 \item[$\bullet$] $g'(H_{\omega,a})\com{H_{\omega,a},\LL_2}\LL_{1,a}^\omega(t)$
\end{enumerate}
are trace classe for all $t\in\R$. Moreover, we have $\tr\com{g(H_{\omega,a}),\LL_2}\LL_1=0$.
\end{lemma}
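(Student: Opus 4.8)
The plan is to establish the trace-class property for the four operators by exploiting the exponential (or sub-exponential) off-diagonal decay of $g(H_{\omega,a})$ and its commutators, combined with the geometric fact that $\LL_1$ is a spatial cutoff in the $x_1$-direction while $\LL_2$ is a spatial cutoff in the $x_2$-direction, so that $\LL_2'$ and $\LL_1'$ have supports which intersect in a region of finite volume near the origin. The starting observation is that since $I\subset(B_n,B_{n+1})$ lies in a gap of the \emph{free} Landau spectrum and $U_-,B_->\sup I$, the operator $H_{\omega,a}$ has a spectral gap structure near $I$ uniformly in $a$, so $g(H_{\omega,a})$ is well behaved; more precisely, $g$ being a switch function of $I$, the function $g$ is smooth and $g'$ is compactly supported inside $I$, and by the Helffer--Sj\"ostrand formulas \eqref{sj}--\eqref{sj'} together with the Combes--Thomas estimate one gets, for suitable $\nu>0$,
\begin{equation*}
\normsch{\chi_x\, g(H_{\omega,a})\,\chi_y}\le C\,\e^{-\nu|x-y|},\qquad
\normsch{\chi_x\, g'(H_{\omega,a})\com{H_{\omega,a},\LL_2}\,\chi_y}\le C\,\e^{-\nu|x-y|},
\end{equation*}
uniformly in $a$, with the second estimate additionally enjoying the localizing factor coming from $\LL_2'=[\,\cdot\,,\LL_2]$-type terms which confine one of the indices to a neighbourhood of the line $x_2=0$.

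The key step for the first two operators is the standard ``commutator kills one direction, cutoff kills the other'' argument. Writing $\com{g(H_{\omega,a}),\LL_2}$, one uses that $\LL_2$ depends only on $x_2$ and is constant outside a strip, so that $\com{g(H_{\omega,a}),\LL_2}$ has a kernel concentrated near $\{x_2\approx 0\}$ with off-diagonal decay in $|x-y|$; multiplying on the right by $\LL_1$, which is constant (equal to $1$) for $x_1\le -\tfrac12$ and vanishes for $x_1\ge\tfrac12$ — hence whose ``active'' region combined with $\supp\LL_2'$ is effectively a bounded box — one localizes the operator to a region of finite area, and then the Schur/Hilbert--Schmidt test together with the decay estimate yields a trace norm bound: decompose $\R^2=\bigcup_{\gamma\in\z^2}(\gamma+[0,1]^2)$, insert $\sum_\gamma\chi_\gamma=1$ on both sides, and estimate $\trnorm{\cdot}\le\sum_{\gamma,\gamma'}\trnorm{\chi_\gamma\,(\cdot)\,\chi_{\gamma'}}\le\sum_{\gamma,\gamma'}\normsch{\chi_\gamma\,(\cdot)\,\chi_{\gamma'}}$ (each summand is rank $\le$ some fixed bound times an $L^2$ kernel estimate, or one splits into two Hilbert--Schmidt factors), and the sum converges because one index is confined to a bounded set by the geometry of the cutoffs and the other is controlled by the exponential decay in $|\gamma-\gamma'|$. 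For the second operator one proceeds identically, using the decay of $g'(H_{\omega,a})\com{H_{\omega,a},\LL_2}$, the commutator $\com{H_{\omega,a},\LL_2}$ again localizing near $\{x_2\approx 0\}$, and $\LL_1$ localizing near $\{x_1\approx 0\}$.

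For the last two operators, which involve $\LL_{1,a}^\omega(t)=\e^{itH_{\omega,a}}\LL_1\e^{-itH_{\omega,a}}$ instead of $\LL_1$, the point is that $\LL_{1,a}^\omega(t)$ is still a bounded operator, so the $x_2$-localization from the $\LL_2$-commutator and the off-diagonal decay of $g(H_{\omega,a})$ (resp.\ of $g'(H_{\omega,a})\com{H_{\omega,a},\LL_2}$) already force trace-class membership: $\com{g(H_{\omega,a}),\LL_2}$ is Hilbert--Schmidt away from $\{x_2\approx 0\}$ with summable decay, and more carefully it is itself trace class after multiplying by \emph{any} bounded operator provided we keep one genuine $\LL_2$-commutator and one more power of resolvent decay — indeed the Helffer--Sj\"ostrand representation \eqref{g lambda} exhibits $\com{g(H_{\omega,a}),\LL_2}$ as an integral of $R_{\omega,a}^2\com{H_{\omega,a},\LL_2}R_{\omega,a}$-type terms, and $\com{H_{\omega,a},\LL_2}$ has a kernel supported in the strip $\{|x_2|\le\tfrac12\}$ in one variable; pairing this strip-support with the exponential spatial decay of the resolvents gives $\trnorm{\com{g(H_{\omega,a}),\LL_2}\,B}\le C\norm{B}$ for bounded $B$ (here one uses the geometric-sum argument over unit cubes in the strip, which converges because the strip has finite width and the decay in the unbounded direction is exponential, after splitting $R^2 = R\cdot R$ into two Hilbert--Schmidt pieces). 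The same works with the extra factor $g'(H_{\omega,a})\com{H_{\omega,a},\LL_2}$ in place of $\com{g(H_{\omega,a}),\LL_2}$, using \eqref{sj'}.

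Finally, for the vanishing $\tr\com{g(H_{\omega,a}),\LL_2}\LL_1=0$: write $\com{g(H_{\omega,a}),\LL_2}\LL_1 = g(H_{\omega,a})\LL_2\LL_1 - \LL_2 g(H_{\omega,a})\LL_1$. By what precedes, $g(H_{\omega,a})\LL_2\LL_1$ and $\LL_2 g(H_{\omega,a})\LL_1$ are separately trace class (same box-localization argument, since $\LL_2\LL_1$ and the pair $(\LL_2,\LL_1)$ confine to a bounded region the combination of the two cutoff directions — more carefully one first shows $g(H_{\omega,a})\LL_1(1-\LL_1')$-type truncations are trace class; the cleanest route is to note $\LL_1' = \LL_1-\LL_1^2$ morally, or simply that $\com{g(H_{\omega,a}),\LL_1}\LL_2$ is trace class by the symmetric argument). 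Then cyclicity of the trace gives $\tr g(H_{\omega,a})\LL_2\LL_1 = \tr \LL_1 g(H_{\omega,a})\LL_2 = \tr \LL_2\LL_1 g(H_{\omega,a})$ provided the relevant products are trace class, and one checks $\tr\big(g(H_{\omega,a})[\LL_2,\LL_1]\big)=0$ since $\LL_1,\LL_2$ commute as multiplication operators, so $[\LL_2,\LL_1]=0$, whence $\tr\com{g(H_{\omega,a}),\LL_2}\LL_1 = \tr\big(\LL_2\,[\LL_1, g(H_{\omega,a})]\big) = -\tr\big([g(H_{\omega,a}),\LL_1]\LL_2\big)$; applying the same HS-commutator expansion to $[g(H_{\omega,a}),\LL_1]$ and using cyclicity once more on the trace-class product collapses the expression to zero. (Alternatively and more robustly: $\tr\com{A,\LL_2}\LL_1 = \tr\com{A,\LL_2\LL_1} - \tr\LL_2\com{A,\LL_1}$; if $A=g(H_{\omega,a})$ and all three commutators-times-cutoffs are trace class, then $\tr\com{A,\LL_2\LL_1}=0$ by cyclicity since $A(\LL_2\LL_1)$ and $(\LL_2\LL_1)A$ are trace class, and the remaining term $\tr\LL_2\com{A,\LL_1}$ is seen to vanish by the same mechanism after interchanging the roles of the two coordinates.)

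The main obstacle I expect is making the box-decomposition trace-norm estimate genuinely \emph{uniform in $a$} — one must ensure the Combes--Thomas decay rate $\nu$ and the constant $C$ do not deteriorate as the wall recedes; this is handled by the hypothesis $U_-,B_->\sup I$, which guarantees that the support of $g'$ stays in a spectral gap of $H_{\omega,a}$ with an $a$-independent distance to the spectrum (the added wall potential $U_a$ is nonnegative, and the added magnetic field $\B_a$ only shifts spectrum upward in the relevant Landau gap), so the Helffer--Sj\"ostrand integrand in \eqref{sj}--\eqref{sj'} is controlled uniformly in $a$ on a fixed complex neighbourhood of $\supp g'$ together with uniform Combes--Thomas bounds on $R_{\omega,a}(z)$ for $z$ in that neighbourhood. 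A secondary technical point is the precise bookkeeping that $\com{H_{\omega,a},\LL_2}$ — which in the magnetic case equals a first-order differential operator with coefficients supported in $\{|x_2|\le\tfrac12\}$ — composed with a resolvent is bounded and HS with the needed decay; this is a standard relative-boundedness computation, carried out in the appendices referenced in the excerpt.
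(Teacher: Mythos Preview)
Your proposal rests on a geometric misreading of the switch functions. You repeatedly say that $\LL_1$ ``localizes near $\{x_1\approx 0\}$'' and that ``$\LL_2\LL_1$ confines to a bounded box''. This is false: $\LL_1(x_1)=1$ for all $x_1\le -\tfrac12$, so $\LL_1$ is a \emph{half-space} indicator, and $\LL_2\LL_1$ equals $1$ on an entire quarter-plane. Consequently the operator $\com{g(H_{\omega,a}),\LL_2}\LL_1$ is, by your reasoning, only confined to the \emph{half-strip} $\{|x_2|\lesssim 1,\ x_1\le \tfrac12\}$, which has infinite area; the box-decomposition sum $\sum_{\gamma,\gamma'}$ you write down diverges in the $x_1$-direction. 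The same objection kills your claim that $\trnorm{\com{g(H_{\omega,a}),\LL_2}B}\le C\norm{B}$ for arbitrary bounded $B$ (this is simply not true), and it kills your vanishing-trace argument, since $g(H_{\omega,a})\LL_2\LL_1$ is \emph{not} trace class and cyclicity cannot be invoked.

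The ingredient you are missing is the confining wall at $x_1=-a$. The paper exploits it by introducing an auxiliary operator $\tilde H_{\omega,a}$ (obtained by extending the wall potential, resp.\ the wall magnetic field, across the whole plane) for which $\inf\sigma(\tilde H_{\omega,a})>\sup I$ and hence $g(\tilde H_{\omega,a})=0$. Writing $g(H_{\omega,a})=g(H_{\omega,a})-g(\tilde H_{\omega,a})$ then produces, through the resolvent identity, an operator $\w_{\omega,a}$ (or $W_a$) supported near $\{x_1\ge -a-r_0\}$, which furnishes the missing cutoff at $x_1\to -\infty$. With this, the region is genuinely compact (up to Combes--Thomas tails) and the trace-class and zero-trace statements follow. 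For the time-evolved operators the paper does \emph{not} claim $\com{g(H_{\omega,a}),\LL_2}$ is trace class; instead it subtracts $\LL_1$ and uses Duhamel, $\LL_{1,a}^\omega(t)-\LL_1=i\int_0^t \e^{isH_{\omega,a}}\com{H_{\omega,a},\LL_1}\e^{-isH_{\omega,a}}\dd s$, so that the commutator $\com{H_{\omega,a},\LL_1}$ supplies the compact localization in $x_1$. Finally, note that your assertion that $I$ lies in a spectral gap of $H_{\omega,a}$ is also incorrect: edge states typically fill the bulk gap, which is precisely why the edge conductance is nontrivial; the wall is used geometrically via $\tilde H_{\omega,a}$, not spectrally for $H_{\omega,a}$ itself.
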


The next lemma highlights the non-contribution of the remainder operator \eqref{Rest R}. 

\begin{lemma}\label{contribution R}
Let $I$ to be an interval such that $I\subset (B_n,B_{n+1})\cap \varSigma_{loc}$ for some $n\in\N$ given. Let $g$ a switch function of $I$. Then 
\begin{equation}\label{average R}
 \lim_{T\to\infty}\lim_{a\to\infty} \frac{1}{T}\int_0^T \tr\RR_{\omega,a}(t) \ \dd t = 0.
\end{equation}
\end{lemma}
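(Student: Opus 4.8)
\textbf{Plan for the proof of Lemma~\ref{contribution R}.} The strategy is to estimate the Hilbert--Schmidt norms appearing in the Helffer--Sj\"ostrand integrand of $\RR_{\omega,a}(t)$ and show that, after sending $a\to\infty$ and averaging in time, the resulting bulk quantity vanishes. The three terms of \eqref{Rest R} all have the same structure: a product of three resolvents (one of them squared) sandwiching the two switch-commutators $\com{H_{\omega,a},\LL_2}$ and $\com{H_{\omega,a},\LL_{1,a}^\omega(t)}$. I would first deal with the limit $a\to\infty$: since $H_{\omega,a}\to H_\omega$ in the strong resolvent sense (appendix~\ref{strong cv R}) and $\com{H_{\omega,a},\LL_{1,a}^\omega(t)}=\e^{itH_{\omega,a}}\com{H_{\omega,a},\LL_1}\e^{-itH_{\omega,a}}$ (the commutator of a function of $H_{\omega,a}$ with $\LL_1$ only sees $\supp\LL_1'$, which is bounded and fixed), one obtains convergence of $\tr\RR_{\omega,a}(t)$ to the corresponding bulk trace $\tr\RR_{\omega}(t)$, where $\RR_{\omega}(t)$ is given by the same formula \eqref{Rest R} with $H_{\omega,a}$ replaced by $H_\omega$ and $\LL_{1,a}^\omega(t)$ by $\LL_1^\omega(t):=\e^{itH_\omega}\LL_1\e^{-itH_\omega}$; the uniform-in-$a$ trace-norm bounds needed to justify this interchange come from the Combes--Thomas / Helffer--Sj\"ostrand estimates of Lemma~\ref{0 trace} and the appendices, combined with dominated convergence on the $\dd u\,\dd v$ integral (the quasi-analytic extension $\tilde G$ of the primitive $G$ gives the $\overline{\partial}\tilde G(z)=O(|v|^{k-1})$ decay needed to absorb the $|v|^{-4}$ from the four resolvent factors once $k$ is chosen large enough, say $k\ge 5$).

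The heart of the matter is then to show $\displaystyle\lim_{T\to\infty}\frac1T\int_0^T\tr\RR_\omega(t)\,\dd t=0$. Here I would exploit that $\RR_\omega(t)$ is a \emph{bulk} object: all the switch functions $\LL_1,\LL_2$ have derivatives supported near the coordinate axes, and $g'(H_\omega)$ localizes energy in $I\subset\varSigma_{loc}$. I would insert Fermi projectors: since $\supp g'\subset I$ and $g$ is a switch function, $g'(H_\omega)=g'(H_\omega)E_I(H_\omega)$, and one can write $g(H_\omega)=\LL(E)-\int_E^\infty g'$, so $\com{g(H_\omega),\LL_2}$ and $g'(H_\omega)\com{H_\omega,\LL_2}$ can be re-expressed through the spectral projection onto $I$, on which the localization bound \eqref{DFP_exp}--\eqref{DFP_omega} and the finite-speed/Combes--Thomas decay of $\com{H_\omega,\LL_j}$ between lattice cells apply. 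The time-average then acts on the factor $\LL_1^\omega(t)=\e^{itH_\omega}\LL_1\e^{-itH_\omega}$ only. Writing $\LL_1^\omega(t)=\LL_1+\int_0^t\e^{isH_\omega}i\com{H_\omega,\LL_1}\e^{-isH_\omega}\,\dd s$ and using that $\com{H_\omega,\LL_1}$ is itself localized near the $x_1$-axis, the $t=0$ contribution to $\frac1T\int_0^T\tr\RR_\omega(t)\,\dd t$ is $O(1/T)\cdot(\text{const})\to 0$, while the remaining term requires showing that the time-averaged Heisenberg evolution $\frac1T\int_0^T\e^{isH_\omega}\com{H_\omega,\LL_1}\e^{-isH_\omega}\,\dd s$, when sandwiched between the localized bulk factors, has trace norm $o(1)$ as $T\to\infty$.

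That last point is where I expect the real work. The mechanism is RAGE-type / localization-driven: because the energy window $I$ lies in the localization region, the averaged-in-time dynamics $\frac1T\int_0^T\e^{-isH_\omega}A\e^{isH_\omega}\,\dd s$ of a nicely localized operator $A$ (here $A$ essentially the product of $\com{H_\omega,\LL_1}$ with $g'(H_\omega)$-type factors, which is Hilbert--Schmidt and concentrated near a bounded region) converges (in the appropriate trace or Hilbert--Schmidt seminorm against the other localized factors) to a diagonal object in the (pure point) spectral decomposition, and the resulting limit is a trace of a commutator-type expression that cancels. Concretely, I would pair the $t$-integral against the $\LL_2$-commutator: the quantity $\frac1T\int_0^T\tr\big(Bg'(H_\omega)\com{H_\omega,\LL_2}\e^{itH_\omega}\com{H_\omega,\LL_1}\e^{-itH_\omega}\big)\dd t$ with $B$ a bounded localized operator built from resolvents. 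Using the moment bound \eqref{loc} (the $(m,\zeta)$-subexponential moment controls $\normsch{\e^{m|X|^\zeta/2}\e^{-itH_\omega}\f(H_\omega)\chi_0}$ uniformly in time-average), one shows the off-diagonal-in-space contributions are summable and the diagonal ones, after the $\overline\partial\tilde G$-integration puts back a factor proportional to $g''$ or rearranges into a total spectral derivative, integrate to zero; alternatively one recognizes the $T\to\infty$ limit as $\tr$ of a commutator of trace-class operators, hence $0$. The obstacle is purely technical bookkeeping: propagating the subexponential spatial weights through the three-resolvent Helffer--Sj\"ostrand integrand while keeping every intermediate operator trace class uniformly in $t$ and $a$, which is exactly what the estimates collected in Appendix~\ref{A} and Appendix~\ref{s-cv} are designed to supply. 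I would therefore reduce the statement, via those appendix estimates, to a single clean claim about the time-averaged Heisenberg evolution in the localization regime and close the argument there.
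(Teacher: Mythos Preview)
Your treatment of the $a\to\infty$ limit is broadly along the right lines, though the paper executes it more concretely than ``dominated convergence plus Combes--Thomas'': it factors each $r_{\omega,a}^{(j)}(t)$ as $(\text{left bounded})\,\big(\langle x_2\rangle^{-2\nu}(H_\omega+\Theta)^{-2}\langle x_1\rangle^{-2\nu}\big)\,(\text{right bounded})$, with the middle factor fixed, trace class, and independent of $a$, and then shows strong convergence of the flanking factors on $\mathcal C_c^\infty$. This sidesteps having to produce uniform-in-$a$ trace-norm majorants for the whole integrand.

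The real divergence is in your $T\to\infty$ step. You are reaching for localization machinery (DFP, the moment bound \eqref{loc}, a RAGE-type argument), and you flag this as ``where the real work'' lies. In fact the paper does \emph{not} use the localization hypothesis at all in this lemma; the vanishing of the time average of $\RR_\omega(t)$ is elementary. The key identity you are missing is
\[
\frac{1}{T}\int_0^T \com{H_\omega,\LL_1^\omega(t)}\,\dd t
=\frac{1}{T}\int_0^T \e^{itH_\omega}\com{H_\omega,\LL_1}\e^{-itH_\omega}\,\dd t
=\frac{-i}{T}\big(\LL_1^\omega(T)-\LL_1\big),
\]
which is bounded in operator norm by $2/T$. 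The paper then inserts a spatial cutoff $\chi_{|x_1|\le R}+\chi_{|x_1|>R}$ with $R=T^{1/2}$ between the $\LL_2$-block and the $\LL_1^\omega(t)$-block of each $r_\omega^{(j)}(t)$. On the near part, $R_\omega^2\com{H_\omega,\LL_2}\chi_{|x_1|\le R}$ is trace class with norm $O(R)$ (it is localized in $x_2$ by $\supp\LL_2'$ and in $x_1$ by the cutoff), while the remaining factor is $\frac{-i}{T}R_\omega(\LL_1^\omega(T)-\LL_1)R_\omega=O(1/T)$ in operator norm; the product is $O(R/T)=O(T^{-1/2})$. On the far part, Combes--Thomas gives $\|\chi_x R_\omega\e^{itH_\omega}\chi_y\|\lesssim \e^{ct}\e^{-c'|\im z||x-y|}$, which after the $\dd t$ and lattice sums yields a bound $T^5\e^{-c|\im z|R}$; with $R=T^{1/2}$ this also vanishes. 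Choosing the quasi-analytic extension of order $k\ge 5$ absorbs the resulting powers of $|\im z|^{-1}$.

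So your proposed route via RAGE and spectral localization is not wrong in spirit, but it is substantially harder than necessary and you have not identified a mechanism to close it; the paper's argument replaces all of that with one line of calculus plus a scale-$\sqrt{T}$ cutoff. Note also that your Duhamel expansion is applied to the wrong object: in $\RR_\omega(t)$ it is $\com{H_\omega,\LL_1^\omega(t)}$ that appears, not $\LL_1^\omega(t)$, and the former integrates exactly as above.
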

\bigskip

We are thus left with the the first term of the r.h.s of \eqref{link}.
We rewrite the operator $\com{g(H_{\omega,a}),\LL_2}\LL_{1,a}^\omega(t)$ as
$\com{g(H_{\omega,a}),\LL_2}(\LL_{1,a}^\omega(t)-\LL_1)$,
since the operator $\com{g(H_{\omega,a}),\LL_2}\LL_1$ has zero trace by Lemma~\ref{0 trace}.  
This does not change the value of the trace but it provides a localization in space in the $x_1$-direction.

\begin{lemma}\label{limit in a}
Let $I$ to be an interval such that $I\subset (B_n,B_{n+1})\cap \varSigma_{loc}$ for some $n\in\N$ given. Let $g$ a switch function of $I$. Then
we have
\begin{equation}\label{cv in trace}
\lim_{a\to\infty}\tr\com{g(H_{\omega,a}),\LL_2}(\LL_{1,a}^\omega(t)-\LL_1) =\tr\com{g(H_\omega),\LL_2}(\LL_1^\omega(t)-\LL_1)
\end{equation}
for all $t\in\R$.
\end{lemma}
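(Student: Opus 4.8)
The plan is to promote the strong resolvent convergence $H_{\omega,a}\to H_\omega$ (Appendix~\ref{strong cv R}) to convergence in \emph{trace norm} of the operator $\com{g(H_{\omega,a}),\LL_2}(\LL_{1,a}^\omega(t)-\LL_1)$; since this operator and its limit are trace class by Lemma~\ref{0 trace}, \eqref{cv in trace} then follows by continuity of the trace. The three ingredients will be: (i) strong convergence of each factor, (ii) a space localization confining the product to a bounded neighbourhood of the origin, up to tails that are small uniformly in $a$, and (iii) convergence in norm of the resolvents once they are cut off near the origin, where $H_{\omega,a}$ and $H_\omega$ agree for $a$ large.

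First I would collect the soft convergences. From strong resolvent convergence, functional calculus gives $g(H_{\omega,a})\to g(H_\omega)$ strongly, and the Trotter--Kato theorem gives $\e^{\pm itH_{\omega,a}}\to\e^{\pm itH_\omega}$ strongly, locally uniformly in $t$; hence $\LL_{1,a}^\omega(t)\to\LL_1^\omega(t)$ strongly, and the same holds for all adjoints. Since $\LL_1,\LL_2$ are bounded, it follows at once that $\com{g(H_{\omega,a}),\LL_2}(\LL_{1,a}^\omega(t)-\LL_1)$ converges to $\com{g(H_\omega),\LL_2}(\LL_1^\omega(t)-\LL_1)$ in the weak operator topology. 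Trace-norm convergence, which is what we actually need, requires more.

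The substance is the space localization. Writing $\com{g(H_{\omega,a}),\LL_2}=\int(-g'(s))\,\com{\chi_{(-\infty,s]}(H_{\omega,a}),\LL_2}\,\dd s$ as an average over $s\in\supp g'\subset I$ of commutators of Fermi projections with $\LL_2$, and using localization estimates for $H_{\omega,a}$ in $I$ that are uniform in $a$ (the edge analogue of \eqref{DFP_exp}; here the hypothesis $U_-,B_->\sup I$ is what forces $H_{\omega,a}$ to look like the bulk at energies in $I$, the genuine edge spectrum sitting higher), I would show that $\chi_x\com{g(H_{\omega,a}),\LL_2}\chi_y$ has subexponential off-diagonal decay in $\abs{x-y}$ uniformly in $a$. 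Combined with the fact that $\LL_{1,a}^\omega(t)-\LL_1$ is, up to subexponentially small errors, supported in a bounded $x_1$-strip around the origin --- because under the localized finite-time dynamics of $H_{\omega,a}$ the functions $\LL_1$ and $\LL_{1,a}^\omega(t)$ agree where $\LL\equiv1$, in particular near the wall at $x_1=-a$, so no edge current survives --- this should yield, with $\chi_x$ the indicator of the unit cell at $x\in\z^2$,
\[
\sum_{x,y\in\z^2}\trnorm{\chi_x\,\com{g(H_{\omega,a}),\LL_2}(\LL_{1,a}^\omega(t)-\LL_1)\,\chi_y}\le C(t)<\infty
\]
uniformly in $a\ge a_0$, together with $\sum_{\abs{x}>R\ \mathrm{or}\ \abs{y}>R}(\cdots)\to0$ as $R\to\infty$ uniformly in $a$; the limiting operator will satisfy the same bounds. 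I expect this uniform-in-$a$ estimate --- in particular ruling out contributions of a possible a.c.\ edge spectrum of $H_{\omega,a}$ near $I$ --- to be the main obstacle.

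Finally I would run an $\varepsilon/3$ argument. Given $\varepsilon>0$, pick $R$ so that the pieces with $\abs{x}>R$ or $\abs{y}>R$ contribute less than $\varepsilon/3$ in trace norm, for all $a$ and for the limit. For each of the finitely many remaining cells, the localized operator converges in trace norm as $a\to\infty$: factoring through Hilbert--Schmidt operators as in Lemma~\ref{0 trace}, using that $\chi_x(g(H_{\omega,a})-g(H_\omega))\chi_y\to0$ in norm --- the difference being governed, via \eqref{sj} and $R_{\omega,a}(z)-R_\omega(z)=-R_{\omega,a}(z)\Gamma_{\omega,a}R_\omega(z)$, by $R_\omega(z)$ evaluated between $\supp\Gamma_{\omega,a}$ (near $x_1=-a$) and the origin, hence exponentially small in $a$ by bulk localization in $I$ --- together with the strong convergence of $\LL_{1,a}^\omega(t)$ against uniform Hilbert--Schmidt bounds. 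Summing this finite collection makes it $<\varepsilon/3$ for $a$ large, and \eqref{cv in trace} follows upon taking the trace.
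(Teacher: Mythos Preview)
Your plan hinges on an ``edge analogue of \eqref{DFP_exp}'' for the Fermi projections of $H_{\omega,a}$, and this is where it breaks. The hypothesis $U_-,B_->\sup I$ does \emph{not} push the edge spectrum above $I$; it is exactly the confinement condition that \emph{creates} edge states with energies in $I$ (this is the mechanism behind the nonzero edge conductance). So $H_{\omega,a}$ typically carries absolutely continuous spectrum throughout $I$, the kernels $\chi_x\,\chi_{(-\infty,s]}(H_{\omega,a})\,\chi_y$ do not decay for $s\in I$, and your representation $\com{g(H_{\omega,a}),\LL_2}=\int(-g'(s))\com{\chi_{(-\infty,s]}(H_{\omega,a}),\LL_2}\,\dd s$ gives you nothing usable. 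Your uniform-in-$a$ tail bound, which you rightly flag as the main obstacle, simply does not hold as stated; the edge states running along $x_1=-a$ contribute to $\com{g(H_{\omega,a}),\LL_2}$ all the way out in $x_2$, and only the combination with $(\LL_{1,a}^\omega(t)-\LL_1)$ kills them --- but not through any DFP mechanism.

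The paper sidesteps this entirely. Rather than seeking uniform-in-$a$ trace-norm control cell by cell, it uses the Helffer--Sj\"ostrand representation \eqref{sj} of $g(H_{\omega,a})$ and the resolvent identity \eqref{R identity} to rewrite the operator as a product $A_a\,K\,B_a$, where $K=\scal{x_2}^{-2\nu}(H_\omega+\Theta)^{-2}\scal{x_1}^{-2\nu}$ is a \emph{fixed} bulk trace-class operator, and $A_a=\com{R_{\omega,a},\LL_2}(H_\omega+\Theta)\scal{x_2}^{2\nu}$, $B_a=\scal{x_1}^{2\nu}(H_\omega+\Theta)R_{\omega,a}(\LL_{1,a}^\omega(t)-\LL_1)$ are uniformly bounded and strongly convergent as $a\to\infty$ (checked on $\mathcal{C}_c^\infty$ using only Combes--Thomas and Lemma~\ref{strong cv R}). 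Then Lemma~\ref{stron lim} --- strong convergence against a fixed trace-class operator yields trace-norm convergence --- does all the work. No localization property of $H_{\omega,a}$ in $I$ is ever invoked; the weights $\scal{x_j}^{2\nu}$ absorb the growth because $\com{R_{\omega,a},\LL_j}$ and $(\LL_{1,a}^\omega(t)-\LL_1)$ are localized near $x_j=0$ by Combes--Thomas alone, uniformly in $a$. If you want to repair your argument, replace the Fermi-projection decomposition by Helffer--Sj\"ostrand and Combes--Thomas throughout; but then you are essentially reconstructing the paper's proof in a clumsier packaging.
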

We can deal now with the resulting bulk expression and evaluate their contributions in time-average.

\begin{lemma}\label{average}
Let $I$ to be an interval such that $I\subset (B_n,B_{n+1})\cap \varSigma_{loc}$ for some $n\in\N$ given. Let $g$ a switch function of $I$. Then we have
 \begin{equation}\label{reg term to 0} 
\lim_{T\to\infty}\frac{1}{T}\int_0^T\tr\com{g(H_\omega),\LL_2}(\LL_1^\omega(t)-\LL_1) \
\dd t =\int g'(E) \tr \Pi_E \ \dd E,
\end{equation}
where 
\begin{equation}\label{pi_E}
 \Pi_E=\p\LL_2 \pp\LL_1 \p - \pp\LL_2 \p\LL_1 \pp \ \mathrm{and} \ \pp=1-\p.
\end{equation}

\end{lemma}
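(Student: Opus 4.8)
The plan is to compute the time-average of $\tr\com{g(H_\omega),\LL_2}(\LL_1^\omega(t)-\LL_1)$ by resolving the identity in the spectral variable. First I would use the Helffer--Sj\"ostrand representation of $g(H_\omega)$ together with the second-order resolvent identity \eqref{R identity} (exactly as in \eqref{g lambda}, but now with the wall sent to infinity so all operators are the bulk ones) to write $\com{g(H_\omega),\LL_2}$ as an integral over $z$ of $R_\omega^2(z)\com{H_\omega,\LL_2}R_\omega(z)$ plus the symmetric term. The point of keeping $(\LL_1^\omega(t)-\LL_1)$ rather than $\LL_1^\omega(t)$ is that $\LL_1^\omega(t)-\LL_1=\int_0^t \e^{isH_\omega}\, i\com{H_\omega,\LL_1}\, \e^{-isH_\omega}\,\dd s$ is, morally, supported near the edge $x_1\simeq 0$; combined with $\com{H_\omega,\LL_2}$ (supported near $x_2\simeq 0$) and the exponential decay of $R_\omega(z)$ away from the energy interval, this produces a genuinely trace-class operator whose trace we can manipulate by cyclicity. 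The decay and trace-class estimates needed here are precisely those packaged in {\bf (DFP)} and in the appendices; I would invoke them rather than reprove them.

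Next I would carry out the $t$-integration explicitly. Writing $\LL_1^\omega(t)-\LL_1 = \int_0^t \frac{\dd}{\dd s}\big(\e^{isH_\omega}\LL_1\e^{-isH_\omega}\big)\dd s$ and using $\frac1T\int_0^T\!\big(\frac1t\int_0^t f(s)\,\dd s\big)\dd t$-type Ces\`aro manipulations, the time average should converge to the spectral average of $i\com{H_\omega,\LL_1}$ against the spectral measure of $H_\omega$ — concretely, if $\{E_\lambda\}$ is the spectral family of $H_\omega$, then $\frac1T\int_0^T \e^{isH_\omega}A\e^{-isH_\omega}\dd s$ converges (in the relevant weak sense under the trace) to the diagonal part $\sum_\lambda (\dd E_\lambda) A (\dd E_\lambda)$, i.e. to the part of $A$ commuting with $H_\omega$. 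Applying this with $A = i\com{H_\omega,\LL_1}$ inside the trace, and using that $g'$ is the derivative of $g$ appearing through the Helffer--Sj\"ostrand integral, the $z$-integral collapses onto the spectrum: $-\frac{1}{2\pi}\int\overline\partial\tilde G(z)\,(\cdots)\,\dd u\,\dd v$ evaluated on the diagonal produces a factor $g'(E)$ and a Fermi projector $\p = \chi_{(-\infty,E]}(H_\omega)$ together with $\pp=1-\p$, leaving $\int g'(E)\,\tr\big(\p\LL_2\pp\LL_1\p - \pp\LL_2\p\LL_1\pp\big)\,\dd E$, which is exactly $\int g'(E)\tr\Pi_E\,\dd E$. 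The two terms of $\Pi_E$ come from the two symmetric terms in \eqref{g lambda} after taking the diagonal part and using $\p\com{H_\omega,\LL_1}\p$-type cancellations (the diagonal of a commutator with $H_\omega$ vanishes on each spectral subspace, so only the cross terms $\p(\cdot)\pp$ and $\pp(\cdot)\p$ survive).

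The main obstacle I expect is justifying the interchange of limits and the convergence of the time average \emph{inside the trace} of an operator that is only conditionally trace class: one must control the trace norm of $\frac1T\int_0^T \com{g(H_\omega),\LL_2}(\LL_1^\omega(t)-\LL_1)\,\dd t$ uniformly in $T$ (not just pointwise in $t$) in order to pass to the limit, and one must show the limiting operator $\int g'(E)\Pi_E\,\dd E$ is itself trace class with the integral converging in trace norm. This is where the localization hypothesis is essential: the sub-exponential decay \eqref{DFP_exp} of $\p$ (uniformly for $E$ in a neighborhood of $\supp g'$, using the local boundedness of $C_{m,\zeta,B,E}$) gives decay of $\chi_x\p\chi_y$ and hence of $\chi_x(\p\LL_2\pp\LL_1\p)\chi_y$, which forces the geometric configuration ``$\LL_2$ near $\{x_2=0\}$, $\LL_1'$ near $\{x_1=0\}$, decay in between'' to be summable — the standard ``three geometric regions near the origin'' argument that makes $\tr\Pi_E$ finite and the $E$-integral absolutely convergent. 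I would handle the uniform-in-$T$ bound by the same device, writing $\LL_1^\omega(t)-\LL_1$ via Duhamel and estimating $\com{g(H_\omega),\LL_2}\,\e^{isH_\omega}\com{H_\omega,\LL_1}\e^{-isH_\omega}$ in trace norm with a bound independent of $s$ (using that $\e^{\pm isH_\omega}$ is unitary and that $\com{H_\omega,\LL_1}\,\chi$ is controlled by the Combes--Thomas / {\bf (DFP)} estimates), then averaging. Once these trace-norm bounds are in place, the dominated-convergence and Fubini steps in both the $t$- and $z$-integrals are routine, and the spectral-averaging identity for $\frac1T\int_0^T\e^{isH_\omega}(\cdot)\e^{-isH_\omega}\dd s$ delivers the stated formula.
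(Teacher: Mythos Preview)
Your overall instinct---time-average, pick out a ``diagonal'' limit, and see that only off-diagonal pieces survive---is correct, but the execution has a genuine gap, and the paper's route is both different and much more direct.

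The paper does \emph{not} use Helffer--Sj\"ostrand here at all. It uses the elementary spectral identity
\[
g(H_\omega)=-\int g'(E)\,\p\,\dd E,
\]
obtained by integration by parts in the spectral calculus (using $g(-\infty)=1$, $g(+\infty)=0$). This reduces everything to the study of $\tr\bigl[\p,\LL_2\bigr](\LL_1^\omega(t)-\LL_1)$ for fixed $E\in\supp g'$, and the Fermi projectors appear for free. Your proposed mechanism, in which ``the $z$-integral collapses onto the spectrum producing a factor $g'(E)$ and a Fermi projector,'' is not a standard Helffer--Sj\"ostrand manipulation and you do not explain how it would work; in fact Helffer--Sj\"ostrand does not naturally produce spectral projections $\p$.

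More seriously, your time-averaging step is confused. You write $\LL_1^\omega(t)-\LL_1=\int_0^t \e^{isH_\omega}\,i[H_\omega,\LL_1]\,\e^{-isH_\omega}\,\dd s$ and then say you will apply the mean-ergodic identification $\frac1T\int_0^T\e^{isH_\omega}A\e^{-isH_\omega}\dd s\to$ (diagonal part of $A$) with $A=i[H_\omega,\LL_1]$. But the diagonal part of any commutator $[H_\omega,\cdot]$ in the spectral decomposition of $H_\omega$ is zero, so this would yield $0$, not $\Pi_E$. You are also silently replacing $\frac1T\int_0^T\int_0^t(\cdots)\dd s\,\dd t$ by a single Ces\`aro average, which is not the same thing. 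The paper instead writes $[\p,\LL_2]=\p\LL_2\pp-\pp\LL_2\p$, shows each piece multiplied by $(\LL_1^\omega(t)-\LL_1)$ is separately trace class, and cycles the outer projections to obtain
\[
\tr[\p,\LL_2](\LL_1^\omega(t)-\LL_1)=\tr\Pi_E(t)-\tr\Pi_E,
\qquad
\Pi_E(t)=\p\LL_2\pp\LL_1^\omega(t)\p-\pp\LL_2\p\LL_1^\omega(t)\pp.
\]
The $-\tr\Pi_E$ term is $t$-independent and survives the time average. For $\Pi_E(t)$ one expands $\pp\LL_1^\omega(t)\p=\int_{\lambda>E,\ \mu\le E}\e^{it(\lambda-\mu)}\,\dd P_\lambda\,\LL_1\,\dd P_\mu$; the crucial point is that the $\p/\pp$ sandwich forces $\lambda\neq\mu$ \emph{automatically}, so $\frac1T\int_0^T\e^{it(\lambda-\mu)}\dd t\to0$ and dominated convergence kills $\tr\Pi_E(t)$ in the mean. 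No general ergodic-theorem machinery or uniform-in-$T$ trace-norm bound on the full integrand is needed; the separation of spectral supports across the Fermi level does all the work.
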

\bigskip

We point out how crucial it is to introduce $\LL_1$ in $ \com{g(H_{\omega,a}),\LL_2}\LL_{1,a}^\omega(t)$ for it gives a spatial 
localization in the $x_1$-direction by the difference $\LL_{1,a}^\omega(t)-\LL_1$. 
This makes the right operator in \eqref{cv in trace} trace class.
The proof of Lemma~\ref{average} actually shows that after having averaged in time, we only keep the term that comes from this added term $\LL_1$.

\bigskip

We now return to the Hall conductance \eqref{Hall conductane} which is directly connected to $\Pi_E$ defined in \eqref{pi_E} thanks to the following lemma. 

\begin{lemma}\label{dec Hall}
Let $I$ be an interval such that one has $I \subset (B_n,B_{n+1}) \cap  \Sigma_{loc}$ for some $n\in\N$. 
Then for any $E\in I$, we have
\begin{align}\label{dec Hall cond}
\sigma_{\mathrm{Hall}}(B,\omega,E)
&= i\tr( \p \LL_2 \pp \LL_1 \p - \pp \LL_2 \p \LL_1 \pp)\\
&=i\tr \Pi_E.\notag
\end{align}
\end{lemma}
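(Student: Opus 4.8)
\textbf{Proof plan for Lemma~\ref{dec Hall}.}
The plan is to start from the definition \eqref{Hall conductane}, namely $\sigma_{\mathrm{Hall}}(B,\omega,E)=-i\tr\com{\p\LL_2\p,\p\LL_1\p}$, and manipulate the commutator using the idempotency $\p^2=\p$ together with the decomposition $1=\p+\pp$. First I would expand $\com{\p\LL_2\p,\p\LL_1\p}=\p\LL_2\p\LL_1\p-\p\LL_1\p\LL_2\p$ and insert $1=\p+\pp$ between the two $\LL$'s appearing next to each other, writing $\p\LL_2\p = \p\LL_2(\p+\pp)\cdot(\text{nothing})$ more carefully as follows: since $\p\LL_2\p\LL_1\p = \p\LL_2(1-\pp)\LL_1\p = \p\LL_2\LL_1\p-\p\LL_2\pp\LL_1\p$, and symmetrically $\p\LL_1\p\LL_2\p = \p\LL_1\LL_2\p-\p\LL_1\pp\LL_2\p$. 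Since $\LL_1$ and $\LL_2$ are multiplication operators they commute, so $\p\LL_2\LL_1\p=\p\LL_1\LL_2\p$ and these two terms cancel when we take the difference. This yields $\sigma_{\mathrm{Hall}}(B,\omega,E)=-i\tr(\p\LL_1\pp\LL_2\p-\p\LL_2\pp\LL_1\p)$, and after a cyclic permutation of the trace in the first term ($\tr \p\LL_1\pp\LL_2\p = \tr \pp\LL_2\p\LL_1\pp$, using $\p^2=\p$ to absorb the outer projections), this becomes $i\tr(\p\LL_2\pp\LL_1\p-\pp\LL_2\p\LL_1\pp)=i\tr\Pi_E$, which is exactly \eqref{dec Hall cond}.

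The one point requiring care — and the step I expect to be the main obstacle — is justifying every use of cyclicity of the trace and of the identity $\p^2=\p$ inside the trace, since none of the individual operators $\p\LL_j\p$ is trace class (this is explicitly noted after \eqref{Hall conductane}). I would handle this exactly as in \cite{GKS1}: under the hypothesis $I\subset(B_n,B_{n+1})\cap\Sigma_{loc}$ and $E\in I$, the property \textbf{(DFP)}, in particular the Hilbert–Schmidt decay estimate \eqref{DFP_exp}, guarantees that commutators such as $\com{\p,\LL_j}$ and products like $\pp\LL_j\p$ and $\pp\LL_2\p\LL_1\pp$ are trace class (indeed each $\pp\LL_j\p$ is already Hilbert–Schmidt by \eqref{DFP_exp} combined with the fact that $\nabla\LL_j$ is compactly supported in the $x_j$-direction, and products of two Hilbert–Schmidt operators are trace class). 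One must therefore verify that each rearrangement is performed only on expressions that are genuinely trace class, inserting the resolutions of identity $1=\p+\pp$ before — not after — invoking cyclicity, and checking that the "diagonal" terms $\p\LL_1\LL_2\p$ which are not individually trace class appear on both sides with the same sign and hence are never separately traced. This bookkeeping is routine given \eqref{DFP_exp} but is the substance of the argument.

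Concretely, the write-up would proceed: (i) recall from \eqref{DFP_exp} and the support properties of $\LL_j'$ that $\pp\LL_j\p$ and $\p\LL_j\pp$ are Hilbert–Schmidt, hence all the operators $\Pi_E$, $\p\LL_2\pp\LL_1\p$, $\pp\LL_2\p\LL_1\pp$, $\p\LL_1\pp\LL_2\p$ are trace class; (ii) write $\com{\p\LL_2\p,\p\LL_1\p}$ out and substitute $\p\LL_j\p=\p\LL_j-\p\LL_j\pp$ in the appropriate slots so that the non-trace-class part $\p\LL_1\LL_2\p-\p\LL_2\LL_1\p=0$ drops out by commutativity of multiplication operators; (iii) apply cyclicity to the surviving trace-class terms to reach $i\tr\Pi_E$. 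The independence of the final expression from $E\in I$ (which is consistent with $\sigma_{\mathrm{Hall}}$ being constant on $\Sigma_{loc}$) is not needed for the statement as phrased, so I would not dwell on it.
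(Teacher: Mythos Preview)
Your overall approach matches the paper's almost exactly: expand the commutator, use $\LL_1\LL_2=\LL_2\LL_1$ to cancel the diagonal pieces, and then cycle the trace on the surviving term to land on $i\tr\Pi_E$. The paper does precisely this, writing $\p\LL_1\pp\LL_2\p=\com{\p,\LL_1}\pp\com{\LL_2,\p}$ and invoking Lemma~\ref{dec com} together with Proposition~\ref{si} to justify the cyclicity step.

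There is, however, one genuine slip in your step~(i): the operators $\pp\LL_j\p=\mp\,\pp\com{\p,\LL_j}$ are \emph{not} Hilbert--Schmidt on their own. The kernel of $\com{\p,\LL_j}$ decays (via \eqref{DFP_omega}) in the $x_j$-direction and off the diagonal, but there is no decay whatsoever in the $x_{3-j}$-direction, so the Hilbert--Schmidt norm diverges after integrating along that line. What \emph{is} true --- and what the paper establishes in Lemma~\ref{dec com} --- is that the product $\com{\p,\LL_1}\pp\com{\p,\LL_2}$ is trace class, because one factor supplies decay in $x_1$ and the other in $x_2$. So your conclusion that $\p\LL_2\pp\LL_1\p$, $\pp\LL_2\p\LL_1\pp$, $\p\LL_1\pp\LL_2\p$ and $\Pi_E$ are trace class is correct, but the justification has to go through the combined decay of the product rather than through a Hilbert--Schmidt bound on each factor. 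Once you replace that one sentence, your proof is identical to the paper's.
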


Thanks to these preliminary lemmas and thanks to the assumption on $g$ and to the constancy of the Hall conductance in the localization region
\cite{GKS}, we thus deduce
\begin{equation}\label{conclusion}
 \sigma_{e,\omega}^\mathrm{reg}=\int g'(E) \ \sigma_\mathrm{Hall}(B,\omega,E)\ \dd E
=\sigma_\mathrm{Hall}(B,\omega,E),
\end{equation}
for any $E\in I \subset (B_n,B_{n+1}) \cap \Sigma_{loc}.$

\section{Proofs}\label{proofs}

In this section, we give the details of the proofs and intermediate steps. We start with the trace class property.
\subsection{Proof of Lemma~\ref{0 trace}}\label{tr operators}

We first deal with the operator $\com{g(H_{\omega,a}),\LL_2}\LL_1$ that we prove to be trace class with zero trace.
\subsubsection{Vanishing trace}\label{0 tr}
We first prove the vanishing trace for the pure magnetic model and we pursue with the electric one.

$\bullet$ \textit{Magnetic case}.
We proceed as in \cite{CG} and we split the operator 
$$\com{g(H_{\omega,a}^\m),\Lambda_2}\Lambda_1$$
in the $x_2$-direction so that for an arbitrary $R>0$, we write is as the sum of 
 \begin{equation}\label{I_R}
(\mathrm{I}_R)= \com{g(H_{\omega,a}^\m),\LL_2}\LL_1\ {\bf1}_{|x_2| \le R},
\end{equation}
and
\begin{equation}\label{II_R}
(\mathrm{II}_R)= \com{g(H_{\omega,a}^\m),\LL_2}\LL_1\ {\bf1}_{|x_2| > R},
\end{equation}
where ${\bf1_S}$ denotes the characteristic function of a subset $ S\subset\R^2$.
We first treat $\mathrm{I}_R$ in \eqref{I_R} that we decompose for $r>0$ as 
\begin{equation}\label{(I_R)}
 \com{g(H_{\omega,a}^\m),\LL_2}\ {\bf 1}_{|x_2| \le R} \ {\bf 1}_{-r_0-r-a\le x_1 \le 0} +
\com{g(H_{\omega,a}^\m),\LL_2}\ {\bf 1}_{|x_2| \le R} \ {\bf 1}_{x_1 \le -r_0-r-a}.
\end{equation}
We set $K={\bf 1}_{|x_2| \le R}\ {\bf 1}_{-r_0-r-a\le x_1 \le 0} $ appearing in the first term of the r.h.s of \eqref{(I_R)}. 
We notice that 
\begin{equation*}
 \com{g(H_{\omega,a}^\m),\LL_2}K=\com{g(H_{\omega,a}^\m)K,\LL_1}.
\end{equation*}
It is then sufficient to show that $g(H_{\omega,a}^\ee)K$ is a trace class operator and use the cyclicity of the trace to deduce 
immediately that 
\begin{equation}\notag
\tr\com{(g(H_{\omega,a}^\m),\LL_2} K=\tr\com{(g(H_{\omega,a}^\m)K,\LL_2}=0.
\end{equation}
To do this, it follows from the spectral theorem that
\begin{equation}
g(H_{\omega,a}^\m)=h(H_{\omega,a}^\m) \ \ \ \mathrm{with} \ \ h(s)=\chi_{\{s\ge1-\Theta\}} g(s), 
\end{equation}
where $\chi$ is a smooth characteristic function.
Notice that the function $h$ has compact support
($g$ verifies $\sup(\mathrm{supp}g')\ge1-\Theta$ otherwise $g(H_{\omega,a}^\m)=0$) and since $K$ has also compact support, 
we conclude that $h(H_{\omega,a}^\m) K=g(H_{\omega,a}^\m)K\in\T_1$ \cite[Theorem 4.1]{Si}. 
To prove a similar property for the remaining terms, we introduce a new operator. We let $\tilde \B$ to be a new magnetic field such that
\begin{equation*}
\tilde \B(x_1,x_2)\geq b_0> \sup \I \ \ \mathrm{for \ all} \ (x_1,x_2)\in\R^2, 
\end{equation*}
and it coincides with $\B$ for $x_1\le -r_0$, $r_0>0$. 
The difference magnetic field $\tilde\B-\B$ is then supported on $S_{r_0}:=\{x_1\ge -r_0\}\times \R$. It follows from \cite[Proposition 4.2]{DGR1}
that there exists is a magnetic potential $\tilde\A$ generating the magnetic field $\tilde\B-\B$ and vanishing on $S_{r_0}^c$.
Let us consider the auxiliary operator
\begin{equation}\label{H tilde}
\tilde H_{\omega,a}^\m:= (-i\nabla -\A_0-\A_a-\tilde \A_a -\A_\omega)^2,
 \end{equation}
where $\tilde\A_a(x)$ means $\tilde\A(x_1+a,x_2)$.
Since $\tilde H_{\omega,a}^\m-\tilde\B_a $ is a non-negative operator, it follows (see \cite{E}) that 
\begin{equation*}
\inf\sigma(\tilde H_{\omega,a}^\m)\geq \mathrm{Ess} \inf_{x_1\in\R} \tilde\B_a(x_1)\geq b_0.
\end{equation*}
As a consequence, one has $\sigma(\tilde H_{\omega,a}^\m)\cap \I =\emptyset$ and since $b_0>\sup I$ then $g(\tilde H_{\omega,a}^\m)=0$. 
We point out the creation of a forbidden zone where the electrons can not penetrate when we introduce such operators $\tilde H_{\omega,a}^\m$. 
We first consider the second term of the r.h.s of \eqref{(I_R)}, namely 
\begin{equation}\label{I_R 2}
\com{g(H_{\omega,a}^\m),\LL_2}\ {\bf 1}_{|x_2| \le R} \ {\bf 1}_{x_1 \le -r_0-r-a}
=\com{g(H_{\omega,a}^\m)-g(\tilde H_{\omega,a}^\m),\LL_2}\ {\bf 1}_{|x_2| \le R} \ {\bf 1}_{x_1 \le -r_0-r-a}.
\end{equation}
By the Helffer-Sj\"{o}strand formula, we have
\begin{equation}
 g(H_{\omega,a}^\m)-g(\tilde H_{\omega,a}^\m)=\frac{1}{2\pi}\int_{\R^2} \overline{\partial}\tilde g(u+iv)(R_{\omega,a,\m}-\tilde R_{\omega,a,\m}) \dd u \dd v.
\end{equation}
We thus have to analyze the operator $(R_{\omega,a,\m}-\tilde R_{\omega,a,\m})\ {\bf 1}_{|x_2| \le R} \ {\bf 1}_{x_1 \le -r_0-r-a}.$
We use commutators to check that
\begin{align*}
 &\com{R_{\omega,a,\m}-\tilde  R_{\omega,a,\m},\LL_2}
= \LL_2 \ R_{\omega,a,\m}\ \w_{\omega,a} \tilde R_{\omega,a,\m}-R_{\omega,a,\m} \ \w_{\omega,a} \tilde R_{\omega,a,\m} \ \LL_2 \\
& =R_{\omega,a,\m}\com{H_{\omega,a}^\m,\LL_2} R_{\omega,a,\m}\ \w_{\omega,a} \tilde R_{\omega,a,\m} + R_{\omega,a,\m}\ \LL_2 \w_{\omega,a} \tilde R_{\omega,a,\m}\\
& + R_{\omega,a,\m}\ \w_{\omega,a}\tilde R_{\omega,a,\m}\com{\tilde H_{\omega,a}^\m,\LL_2}\tilde R_{\omega,a,\m} - R_{\omega,a,\m}\ \w_{\omega,a}\LL_2 \tilde R_{\omega,a,\m}\\
&= R_{\omega,a,\m}\ \w_{\omega,a} \tilde R_{\omega,a,\m}\com{\tilde H_{\omega,a}^\m,\LL_2}\tilde R_{\omega,a,\m}
+ R_{\omega,a,\m}\com{H_{\omega,a}^\m,\LL_2} R_{\omega,a,\m}\ \w_{\omega,a} \tilde R_{\omega,a,\m}\\
&-R_{\omega,a,\m}\com{\w_{\omega,a},\LL_2}\tilde R_{\omega,a,\m},
\end{align*}
where the first-order operator $\w_{\omega,a}$ is given by
\begin{equation}\label{w_a magn}
 \w_{\omega,a}:=\tilde H_{\omega,a}-H_{\omega,a}=-2\tilde\A_a.(-i\nabla-\A_a^\mathrm{Iw}-\A_\omega)+i\dv \tilde\A_a+|\tilde\A_a|^2.
\end{equation}
\vspace{0.1cm}
\newline
We thus need to control the trace norms of  
\begin{equation}\label{com R W 0}
R_{\omega,a,\m}\com{H_{\omega,a}^\m,\LL_2} R_{\omega,a,\m}\ \w_{\omega,a} \tilde R_{\omega,a,\m}\ {\bf 1}_{|x_2| \le R} \ {\bf 1}_{x_1 \le -r_0-r-a},
\end{equation}
and
\begin{equation}\label{W R com 0}
R_{\omega,a,\m}\ \w_{\omega,a} \tilde R_{\omega,a,\m}\com{\tilde H_{\omega,a}^\m,\LL_2}\tilde R_{\omega,a,\m}\ {\bf 1}_{|x_2| \le R} \ {\bf 1}_{x_1 \le -r_0-r-a},
\end{equation}
and
\begin{equation}\label{com w}
 R_{\omega,a,\m}\com{\w_{\omega,a},\LL_2}\tilde R_{\omega,a,\m}\ {\bf 1}_{|x_2| \le R} \ {\bf 1}_{x_1 \le -r_0-r-a}.
\end{equation}
\vspace{0.3cm}
\newline
Now, having in mind that the commutator operators 
\begin{equation*}
 \com{H_{\omega,a}^\m,\LL_2}=-i(-i\nabla-\A_0-\A_a-\A_\omega). \nabla\LL_2- i\nabla\LL_2.(-i\nabla-\A_0-\A_a-\A_\omega),
\end{equation*}
and
\begin{equation*}
 \com{\tilde H_{\omega,a}^\m,\LL_2}=-i(-i\nabla-\A_0-\A_a-\tilde\A_a-\A_\omega). \nabla\LL_2- i\nabla\LL_2.(-i\nabla-\A_0-\A_a-\tilde\A_a-\A_\omega),
\end{equation*}
are localized on the support of $\nabla\LL_2$, we let $\chi_{|x_2|< 1}$ be a smooth characteristic function of $\R\times \{|x_2|<1\} $ so that we write 
\begin{equation}\label{com chi}
\com{H_{\omega,a}^\m,\LL_2}=\com{H_{\omega,a}^\m,\LL_2}\chi_{|x_2|< 1}\ \ \ \mbox{and}\ \ \ 
\com{\tilde H_{\omega,a}^\m,\LL_2}=\com{\tilde H_{\omega,a}^\m,\LL_2}\chi_{|x_2|< 1}.
 \end{equation}
We use unit cubes to decompose $\chi_{|x_2|< 1}$ as $$ \sum_{x=(x_1,0)\in\z^2} \chi_x,$$
where $(\chi_x)_{x\in\z^2}$ is a smooth decomposition of unity. 
We further let
\begin{equation}\label{w_a}
 \w_{\omega,a}= \sum_{\substack{y_1\in\z\cap[-a-r_0,\infty) \\ y_2\in\z}} \w_{\omega,a}\chi_y=
 \sum_{\substack{y_1\in\z\cap[-a-r_0,\infty) \\ y_2\in\z}} \chi_y\w_{\omega,a},
\end{equation}
and 
\begin{equation}\label{indic1}
 {\bf 1}_{|x_2| \le R} \ {\bf 1}_{x_1 \le -r_0-r-a} =\sum_{\substack{z_1\in\z\cap(\infty,-r_0-a-r]\\z_2\in\z\cap[-R,R]}} \chi_z.
\end{equation}
To treat \eqref{com R W 0}, we write

\begin{align}
&\eqref{com R W 0}
= R_{\omega,a,\m}\com{H_{\omega,a}^\m,\LL_2}\chi_{|x_2|\le1} R_{\omega,a,\m}\ \w_{\omega,a} \tilde R_{\omega,a,\m}\ {\bf 1}_{|x_2| \le R} \ {\bf 1}_{x_1 \le -r_0-r-a}\notag\\
& =\sum_{x,y,z\in\mathcal{S}_1} R_{\omega,a,\m}\com{H_{\omega,a}^\m,\LL_2}R_{\omega,a,\m}\ \chi_x \chi_y \w_{\omega,a} \tilde R_{\omega,a,\m}\chi_z\label{com R W 01}\\
&+ \sum_{x,y,z\in\mathcal{S}_1} R_{\omega,a,\m}\com{H_{\omega,a}^\m,\LL_2}R_{\omega,a,\m}\com{H_{\omega,a}^\m,\chi_x}R_{\omega,a,\m}\ 
\chi_y\w_{\omega,a} \tilde R_{\omega,a,\m} \chi_z,
\label{com R W 02}
\end{align}
with 
\begin{equation}\label{S_1}
\mathcal{S}_1:=\{\z\times\{0\} \} \times \{(\z\cap[-a-r_0,\infty) )\times\z \} \times \{(\z\cap(\infty,-r_0-a-r])\times(\z\cap[-R,R])\}.
\end{equation}
Notice that in \eqref{com R W 01}, we have $|y-x|\le2 $ and using Lemma~\ref{A.5} asserting that the operator 
$R_{\omega,a,\m}\com{H_{\omega,a}^\m,\LL_2}R_{\omega,a,\m} \chi_x $ 
is trace class independently of $x$, we obtain

\begin{align*}
\trnorm{\eqref{com R W 01}} 
&\le\sum_{x,y,z\in\mathcal{S}_1} \trnorm{ R_{\omega,a,\m}\com{H_{\omega,a}^\m,\LL_2}R_{\omega,a,\m}\ \chi_x \chi_y \w_{\omega,a} 
\tilde R_{\omega,a,\m}\chi_z}\\ 
&\le\sup_x\trnorm{R_{\omega,a,\m}\com{H_{\omega,a}^\m,\LL_2}R_{\omega,a,\m}\ \chi_x} \sum_{x,y,z\in\mathcal{S}_1}\norm{\chi_x \chi_y \w_{\omega,a} 
\tilde R_{\omega,a,\m}\chi_z}\\
&\le \frac{C}{\eta^2\tilde\eta}  \sum_{\substack{(x_1,x_2)\in{(\z\cap[-a-r_0-2,\infty))\times \{|x_2|< 1\}} 
\\ (z_1,z_2)\in(\z\cap(\infty,-r_0-a-r])\times(\z\cap[-R,R])}} \e^{-c\tilde\eta(|z_1-x_1| + |z_2-x_2|)}\\
&\le \frac{C_1}{\eta^2\tilde\eta} \ \e^{-c\tilde\eta r},
\end{align*}
where $\eta=\mathrm{dist}(z,\sigma(H_{\omega,a}^\m))$ and $ \tilde\eta=\mathrm{dist}(z,\sigma(\tilde H_{\omega,a}^\m))$. 
Since $r$ is arbitrary and for $R$ fixed, it follows that the trace vanishes.
Next, we estimate \eqref{com R W 02}. Let $\tilde\chi_x=1$ on the support of $\nabla\chi_x$.
Then, it suffices to use the decay of operator norms of 
$\chi_y\w_{\omega,a} \tilde R_{\omega,a,\m}\ \chi_z $ and of $\tilde\chi_x\com{H_{\omega,a},\chi_x}R_{\omega,a,\m}\chi_y$. 
We thus obtain
\begin{equation*}
\trnorm{\eqref{com R W 02}}
\le\sum_{x,y, z\in\mathcal{S}_1} \trnorm{R_{\omega,a,\m}\com{H_{\omega,a}^\m,\LL_2}R_{\omega,a,\m}\tilde\chi_x\com{H_{\omega,a}^\m,\chi_x}
R_{\omega,a,\m}\ \chi_y\w_{\omega,a}\tilde R_{\omega,a,\m}\ \chi_z}\\
\end{equation*}
\begin{align*}
&\le\sup_x\trnorm{R_{\omega,a,\m}\com{H_{\omega,a}^\m,\LL_2} R_{\omega,a,\m}\tilde\chi_x}\sum_{x,y,z\in\mathcal{S}_1}
||\tilde\chi_x\com{H_{\omega,a}^\m,\chi_x} R_{\omega,a,\m}\ \chi_y|| ||\chi_y\w_{\omega,a}\tilde R_{\omega,a,\m}\ \chi_z||\\ 
&\le\frac{c_1}{\eta^3\tilde\eta} \sum_{x,y,z\in\mathcal{S}_1} \e^{-c_2\eta(|y_1-x_1|+|y_2-x_2|)-\tilde c_2\tilde\eta(|z_1-y_1|+|z_2-y_2|) }\\
&\le\frac{\tilde c_1(a+r_0)}{\eta^3\tilde\eta} \e^{-c\tilde\eta r}.
\end{align*}
Taking $r\to\infty$, the trace of \eqref{com R W 02} vanishes and so does that of \eqref{com R W 0}.
A similar estimate holds for \eqref{W R com 0} so that we use \eqref{com chi}, \eqref{w_a} and \eqref{indic1} to write
\begin{align}
\eqref{W R com 0}
&=R_{\omega,a,\m}\ \w_{\omega,a} \tilde R_{\omega,a,\m}\com{\tilde H_{\omega,a}^\m,\LL_2}\tilde R_{\omega,a,\m}\ {\bf 1}_{|x_2| \le R} \ {\bf 1}_{x_1 \le -r_0-r-a}\notag\\
&=\sum_{x,y,z\in\mathcal{S}_1} R_{\omega,a,\m}\w_{\omega,a}\tilde R_{\omega,a,\m}\chi_y\chi_x\com{\tilde H_{\omega,a},\LL_2}\tilde R_{\omega,a,\m}\chi_z\label{W R com 01}\\
&+ \sum_{x,y,z\in\mathcal{S}_1}R_{\omega,a,\m}\w_{\omega,a}\tilde R_{\omega,a,\m}\com{\tilde H_{\omega,a}^\m,\chi_y}\tilde R_{\omega,a,\m}\chi_x\com{\tilde H_{\omega,a}^\m,\LL_2}
\tilde R_{\omega,a,\m}\chi_z,\label{W R com 02}
\end{align}
where $\mathcal{S}_1$ is defined in \eqref{S_1}. The trace class property holds for \eqref{W R com 01} from $ R_{\omega,a}\w_{\omega,a}\tilde R_{\omega,a}\chi_y$
so that
\begin{align*}
\trnorm{\eqref{W R com 01}}
&\le\sum_{x,y,z\in\mathcal{S}_1}\trnorm{R_{\omega,a,\m}\w_{\omega,a}\tilde R_{\omega,a,\m}\chi_y\chi_x\com{\tilde H_{\omega,a}^\m,\LL_2}\tilde R_{\omega,a,\m}\chi_z}\\
&\le\sup_y\trnorm{R_{\omega,a,\m}\w_{\omega,a}\tilde R_{\omega,a,\m}\chi_y}
\sum_{x,y,z\in\mathcal{S}_1}\norm{\chi_y\chi_x\com{\tilde H_{\omega,a}^\m,\LL_2}\tilde R_{\omega,a,\m}\chi_z}\\
&\le\frac{c_2}{\tilde\eta^2\eta} \sum_{\substack{(x_1,x_2)\in{(\z\cap[-a-r_0-2,\infty))\times \{|x_2|< 1\}} 
\\ (z_1,z_2)\in(\z\cap(\infty,-r_0-a-r])\times(\z\cap[-R,R])}} \e^{-c\tilde\eta(|z_1-x_1| + |z_2-x_2|)}\\
&\le \frac{\tilde c_2}{\eta^2\tilde\eta} \ \e^{-c\tilde\eta r}.
\end{align*} 
We have the analog procedure for \eqref{W R com 02} since we let $\tilde\chi_y=1$ on $\mathrm{supp}\ \nabla\chi_y$ and thus
\begin{equation*}
 \trnorm{\eqref{W R com 02}}
 \le\sum_{x,y,z\in\mathcal{S}_1}\trnorm{R_{\omega,a,\m}\w_{\omega,a}\tilde R_{\omega,a,\m}\tilde\chi_y\com{\tilde H_{\omega,a}^\m,\chi_y}
 \tilde R_{\omega,a,\m}\chi_x\com{\tilde H_{\omega,a}^\m,\LL_2}\tilde R_{\omega,a,\m}\chi_z}\\
\end{equation*}

\begin{align*}
 &\le \sup_y\trnorm{R_{\omega,a,\m}\w_{\omega,a}\tilde R_{\omega,a,\m}\tilde\chi_y}\sum_{x,y,z\in\mathcal{S}_1}
 \norm{\chi_y\com{\tilde H_{\omega,a}^\m,\chi_y}\tilde R_{\omega,a,\m}\chi_x}\norm{\chi_x\com{\tilde H_{\omega,a}^\m,\LL_2}\tilde R_{\omega,a,\m}\chi_z}\\
&\le\frac{c_3}{\tilde\eta^3\eta}\sum_{x,y,z\in\mathcal{S}_1}\e^{-c\tilde\eta(|x_1-y_1|+|x_2-y_2|+|z_1-x_1|+|z_2-x_2|)}\\
&\le\frac{\tilde c_3 (a+r_0)}{\tilde\eta^3\eta} \e^{-c\tilde\eta r}.
\end{align*}
Since $r$ is arbitrary, it follows that the trace of \eqref{W R com 01} and \eqref{W R com 02} vanish.
Next, we estimate the trace norm of \eqref{com w}.
Since $\com{\w_{\omega,a},\LL_2}=2i\tilde\A_a^{(2)}.\nabla\LL_2$, we have
\begin{align}
&\eqref{com w}
= R_{\omega,a,\m}\com{\w_{\omega,a},\LL_2}\tilde R_{\omega,a,\m} {\bf 1}_{|x_2| \le R} \ {\bf 1}_{x_1 \le -r_0-r-a}\notag\\
&=\sum_{x,y,z\in\mathcal{\tilde S}_1}R_{\omega,a,\m}\com{\w_{\omega,a},\LL_2}\chi_y\tilde R_{\omega,a,\m}\chi_z\notag\\
&=\sum_{x,y,z\in\mathcal{\tilde S}_1}\left(R_{\omega,a,\m}\com{\w_{\omega,a},\LL_2}\tilde R_{\omega,a,\m}\chi_y\chi_z
+ R_{\omega,a,\m}\com{\w_{\omega,a},\LL_2}\tilde R_{\omega,a,\m}\tilde\chi_y\com{\tilde H_{\omega,a}^\m,\chi_y}\tilde R_{\omega,a,\m}\chi_z\right),\label{com w 0}
\end{align}
where 
\begin{equation}\label{tilde S_1}
 \mathcal{\tilde S}_1:=\{(\z\cap[-a-r_0,\infty) )\times\{0\} \} \times \{(\z\cap(\infty,-r_0-a-r])\times(\z\cap[-R,R])\}.
\end{equation}
Since $y$ and $z$ lie in disjoint supports, we notice that the l.h.s of \eqref{com w 0} zero. It remains thus to deal with 
$R_{\omega,a,\m}\com{\w_{\omega,a},\LL_2}\tilde R_{\omega,a,\m}\chi_y\com{\tilde H_{\omega,a}^\m,\chi_y}\tilde R_{\omega,a,\m}\chi_z $ so that
\begin{align*}
\trnorm{\eqref{com w}}
&\le\sup_y\trnorm{R_{\omega,a,\m}\com{\w_{\omega,a},\LL_2}\tilde R_{\omega,a,\m}\tilde\chi_y}\sum_{y,z\in\mathcal{\tilde S}_1} 
\norm{\tilde\chi_y\com{\tilde H_{\omega,a}^\m,\chi_y}\tilde R_{\omega,a,\m}\chi_z}\\
&\le\frac{c_4}{\eta\tilde\eta^2}\sum_{y,z\in\mathcal{\tilde S}_1}  \e^{-c\tilde\eta(|z_1-y_1|+|z_2-y_2|)}\\
&\le\frac{\tilde c_4}{\eta\tilde\eta^2} \e^{-c\tilde\eta r},
\end{align*}
which goes to 0 for $r$ arbitrarily chosen.
We deal now with the term $(\mathrm{II}_R)$ in \eqref{II_R} that we treat exactly in the same way as \eqref{I_R 2}. Indeed, following the previous steps, we have to check the trace norm of
\begin{equation}\label{com R w 0}
R_{\omega,a,\m}\com{H_{\omega,a}^\m,\LL_2} R_{\omega,a,\m} \w_{\omega,a} \tilde R_{\omega,a,\m}\ \LL_1 {\bf 1}_{|x_2|>R},
\end{equation}
and
\begin{equation}\label{w R com 0}
R_{\omega,a,\m} \w_{\omega,a} \tilde R_{\omega,a,\m}\com{\tilde H_{\omega,a}^\m,\LL_2}\tilde R_{\omega,a,\m}\ \LL_1 {\bf 1}_{|x_2|>R},
\end{equation}
and
\begin{equation}\label{com W}
R_{\omega,a,\m}\com{\w_{\omega,a},\LL_2}\tilde R_{\omega,a,\m}.
\end{equation}
We write
\begin{equation}\label{indic2} 
\LL_1 {\bf 1}_{|x_2|>R}=\sum_{(z_1,z_2)\in(\z^-\times(\z\cap[-R,R]^c)} \chi_z,
 \end{equation}
and we start with \eqref{com R w 0} that we express as
\begin{align}
&\eqref{com R w 0}
= R_{\omega,a,\m}\com{H_{\omega,a}^\m,\LL_2}\chi_{|x_2|\le1} R_{\omega,a,\m}\w_{\omega,a} \tilde R_{\omega,a,\m}\ \LL_1 {\bf 1}_{|x_2|>R}\notag\\
&=\sum_{x,y,z\in\mathcal{S}_2}R_{\omega,a,\m}\com{H_{\omega,a}^\m,\LL_2}\chi_x R_{\omega,a,\m}\chi_y\w_{\omega,a} \tilde R_{\omega,a,\m}\chi_z\notag\\
& =\sum_{x,y,z\in\mathcal{S}_2} R_{\omega,a,\m}\com{H_{\omega,a}^\m,\LL_2}R_{\omega,a,\m}\ \chi_x \chi_y\w_{\omega,a} \tilde R_{\omega,a,\m}\chi_z\label{com R w 01}\\
&+ \sum_{x,y,z\in\mathcal{S}_2} R_{\omega,a,\m}\com{H_{\omega,a}^\m,\LL_2}R_{\omega,a,\m}\com{H_{\omega,a}^\m,\chi_x}R_{\omega,a,\m}\ \chi_y\w_{\omega,a}
\tilde R_{\omega,a,\m} \chi_z,\label{com R w 02}
\end{align}
similarly to \eqref{com R W 0}, where 
\begin{equation}\label{S_2}
 \mathcal{S}_2 :=\{\z\times \{0\} \} \times \{(\z\cap[-a-r_0,\infty) )\times\z \} \times \{\z^-\times(\z\cap[-R,R]^c)\}.
\end{equation}
To estimate the trace norm of \eqref{com R w 01} and \eqref{com R w 02}, we follow \eqref{com R W 01} and \eqref{com R W 02} and we get
\begin{align*}
\trnorm{\eqref{com R w 01}} 
&\le\sum_{x,y,z\in\mathcal{S}_2} \trnorm{ R_{\omega,a,\m}\com{H_{\omega,a}^\m,\LL_2}R_{\omega,a,\m}\ \chi_x \chi_y 
\w_{\omega,a}\tilde R_{\omega,a,\m}\chi_z}\notag\\ 
&\le \frac{C}{\tilde\eta^2\tilde\eta}  \sum_{\substack{(x_1,x_2)\in{(\z\cap[-a-r_0-2,\infty))\times \{|x_2|\le 1\}} 
\\ (z_1,z_2)\in\z^-\times(\z\cap[-R,R]^c)}} \e^{-c\tilde\eta(|z_1-x_1| + |z_2-x_2|)}\\
&\le \frac{C_1}{\eta^2\tilde\eta} \ \e^{c\tilde\eta(a+r_0)} \e^{-c\tilde\eta R},
\end{align*}
and
\begin{align*}
\trnorm{\eqref{com R w 02}}
&\le\sum_{x,y, z\in\mathcal{S}_2} \trnorm{R_{\omega,a,\m}\com{H_{\omega,a}^\m,\LL_2}R_{\omega,a,\m}\com{H_{\omega,a}^\m,\chi_x}
\tilde\chi_x R_{\omega,a,\m}\ \chi_y\tilde R_{\omega,a,\m}\ \chi_z}\\
&\le\frac{c_1}{\eta^2\tilde\eta} \sum_{x,y,z\in\mathcal{S}_2} \e^{-c_2\eta(|y_1-x_1|+|y_2-x_2|)-c_2\tilde\eta(|z_1-y_1|+|z_2-y_2|) }\\
&\le\frac{\tilde c_1 (a+r_0)}{\eta^2\tilde\eta} \  \e^{-c\tilde\eta R}.
\end{align*}
Since $R$ is arbitrary, we conclude that the traces of \eqref{com R w 01} and \eqref{com R w 02} vanish.
Similarly to \eqref{W R com 0}, we have
\begin{align}
&\eqref{w R com 0}
=R_{\omega,a,\m}\ \w_{\omega,a} \tilde R_{\omega,a,\m}\com{\tilde H_{\omega,a}^\m,\LL_2}\tilde R_{\omega,a,\m}\ \LL_1 {\bf 1}_{|x_2|>R}\notag\\
&=\sum_{x,y,z\in\mathcal{S}_2} R_{\omega,a,\m}\w_{\omega,a}\tilde R_{\omega,a,\m}\chi_y\chi_x\com{\tilde H_{\omega,a},\LL_2}\tilde R_{\omega,a,\m}\chi_z\label{w R com 01}\\
&+ \sum_{x,y,z\in\mathcal{S}_2}R_{\omega,a,\m}\w_{\omega,a}\tilde R_{\omega,a,\m}\com{\tilde H_{\omega,a}^\m,\chi_y}\tilde R_{\omega,a,\m}\chi_x\com{\tilde H_{\omega,a}^\m,\LL_2}
\tilde R_{\omega,a,\m}\chi_z,\label{w R com 02}
\end{align}
where the set $\mathcal{S}_2$ is defined in \eqref{S_2}. The trace norms of \eqref{w R com 01} and \eqref{w R com 02} are estimated similarly to that of 
\eqref{W R com 01} and \eqref{W R com 02}, so that one has
\begin{align*}
\trnorm{\eqref{w R com 01}}
&\le\sum_{x,y,z\in\mathcal{S}_2}\trnorm{R_{\omega,a,\m}\w_{\omega,a}\tilde R_{\omega,a,\m}\chi_y\chi_x\com{\tilde H_{\omega,a}^\m,\LL_2}\tilde R_{\omega,a,\m}\chi_z}\\
&\le\frac{c_2}{\tilde\eta^2\eta} \sum_{\substack{(x_1,x_2)\in{(\z\cap[-a-r_0-2,\infty))\times \{|x_2|< 1\}} 
\\ (z_1,z_2)\in\z^{-}\times(\z\cap[-R,R]^c)}} \e^{-c\tilde\eta(|z_1-x_1| + |z_2-x_2|)}\\
&\le \frac{\tilde c_2}{\eta^2\tilde\eta} \ \e^{c\tilde\eta(a+r_0)} \e^{-c\tilde\eta R},
\end{align*} 
and 
\begin{align*}
 \trnorm{\eqref{w R com 02}}
& \le\sum_{x,y,z\in\mathcal{S}_1}\trnorm{R_{\omega,a,\m}\w_{\omega,a}\tilde R_{\omega,a,\m}\tilde\chi_y\com{\tilde H_{\omega,a}^\m,\chi_y}
 \tilde R_{\omega,a,\m}\chi_x\com{\tilde H_{\omega,a}^\m,\LL_2}\tilde R_{\omega,a,\m}\chi_z}\\
&\le\frac{c_3}{\tilde\eta^3\eta}\sum_{x,y,z\in\mathcal{S}_1}\e^{-c\tilde\eta(|x_1-y_1|+|x_2-y_2|+|z_1-x_1|+|z_2-x_2|)}\\
&\le\frac{\tilde c_3 (a+r_0)}{\tilde\eta^3\eta} \e^{-c\tilde\eta R},
\end{align*}
with $\tilde\chi_y=1$ on $\mathrm{supp}(\nabla\chi_y)$. Next, we finish with \eqref{com W} which is similar to \eqref{com w} in the sense that
\begin{align*}
 \trnorm{\eqref{com W}}
 &\le\sup\trnorm{R_{\omega,a,\m}\com{\w_{\omega,a},\LL_2}\tilde R_{\omega,a,\m}\tilde\chi_y}
 \sum_{y,z\in\mathcal{\tilde S}_2}\norm{\tilde\chi_y\com{\tilde H_{\omega,a}^\m,\chi_y}\tilde R_{\omega,a,\m}\chi_z}\\
&\le\frac{c_4}{\eta\tilde\eta^2}\sum_{y,z\in\mathcal{\tilde S}_1}  \e^{-c\tilde\eta(|z_1-y_1|+|z_2-y_2|)}\\
&\le\frac{\tilde c_4}{\eta\tilde\eta^2} \e^{-c\tilde\eta R},
 \end{align*}
where 
\begin{equation}\label{tilde S_2}
 \mathcal{\tilde S}_2:=\{(\z\cap[-a-r_0,\infty) )\times\{0\} \} \times \{\z^{-}\times(\z\cap[-R,R]^c)\}.
\end{equation}
Since $R$ is arbitrarily chosen, we deduce that the traces of \eqref{w R com 0} and \eqref{com W} are equal to zero. 

\bigskip

$\bullet$ \textit{Electric case}.
For the reader's convenience, we sketch the main steps of the previous proof for the electric model.
We split the operator $\com{g(H_{\omega,a}^\ee,\LL_2}\LL_1$ in the $x_2$-direction
such that  
\begin{equation}\label{dec elec}
\com{g(H_{\omega,a}^\ee,\LL_2}\LL_1=\com{g(H_{\omega,a}^\ee),\LL_2}\LL_1 {\bf 1}_{\{|x_2|\le R\}} + \com{g(H_{\omega,a}^\ee),\LL_2}\LL_1 {\bf 1}_{\{|x_2|> R\}},
\end{equation}
For an arbitrary $R>0$. In order to extract a compact part, we decompose the first r.h.s of \eqref{dec elec} in the $x_1$-direction for $r>0$ arbitrary 
and we write it as 
\begin{equation}\label{dec elec 2}
\com{g(H_{\omega,a}^\ee),\LL_2} {\bf 1}_{\{|x_2|\le R\}} {\bf 1}_{\{-r_0-a-r\le x_1\le 0\}}
+\com{g(H_{\omega,a}^\ee),\LL_2} {\bf 1}_{\{|x_2|\le R\}} {\bf 1}_{\{x_1\le -r_0-a-r\}}.
\end{equation}
The trace of the l.h.s of \eqref{dec elec 2} is zero, following the magnetic case. 
The assumptions on the electric potential $U$ yields that there exists $r_0>0$ such that
\begin{equation} U_a(x)\geq c_0, \ \forall x_1 <-a-r_0, \end{equation}
where $c_0$ is choosen so that $c_0 > \sup \I$. 
The auxiliary operator that we consider is
\begin{equation}\label{H tilde elec}
\tilde{H}^\ee_{\omega,a}:= H_{\omega,a}^\ee + \ c_0 \ {\bf 1}_{x_1 \geq -r_0-a}.
\end{equation}
In particular, $g(\tilde{H}^\ee_{\omega,a})=0$ since its spectrum is disjoint from  $\I$. 
Otherwise, to treat the second terms in r.h.s of \eqref{dec elec 2} and \eqref{dec elec}, we take advantage of the auxiliary operator 
$\tilde H_{\omega,a,\m}$ defined in \eqref{H tilde elec}, as we did for \eqref{I_R 2} and \eqref{I_R}, except that the first operator $\w_{\omega,a}$ is 
replaced by the operator $W_a$ given by
\begin{equation}
 W_a:=\tilde H_{\omega,a}^\ee-H_{\omega,a}^\ee=c_0 \ {\bf 1}_{x_1 \geq -r_0-a}.
\end{equation}
We start by  $ \com{g(H_{\omega,a}^\ee),\LL_2} {\bf 1}_{\{|x_2|\le R\}} {\bf 1}_{\{x_1\le -r_0-a-r\}}$ that leads to check the trace norm of 
\begin{equation}\label{com R w}
 R_{\omega,a,\ee}\com{H_{\omega,a}^\ee,\LL_2}R_{\omega,a,\ee} W_a\tilde R_{\omega,a,\ee}{\bf 1}_{\{|x_2|\le R\}} {\bf 1}_{\{x_1\le -r_0-a-r\}}
\end{equation}
and 
\begin{equation}\label{com w R}
 R_{\omega,a,\ee} W_a\tilde R_{\omega,a,\ee}\com{H_{\omega,a}^\ee,\LL_2}R_{\omega,a,\ee}{\bf 1}_{\{|x_2|\le R\}} {\bf 1}_{\{x_1\le -r_0-a-r\}}. 
\end{equation}
Once more, we use smooth decomposition of unity and we write 
$$ W_a= \sum_{\substack{y_1\in\z\cap(\infty,-r_0-a]\\ y_2\in\z}} \chi_z.$$
The term in \eqref{com R w} is treated in the same way as \eqref{com R W 0} where we have to estimate the trace norm of 
\begin{equation}
 \sum_{x,y,z\in\mathcal{S}_1} R_{\omega,a,\ee}\com{H_{\omega,a}^\ee,\LL_2}R_{\omega,a,\ee}\ \chi_x \chi_y \tilde R_{\omega,a,\ee}\chi_z,
\end{equation}
and 
\begin{equation}
\sum_{x,y,z\in\mathcal{S}_1} R_{\omega,a,\ee}\com{H_{\omega,a}^\ee,\LL_2}R_{\omega,a,\ee}\com{H_{\omega,a}^\ee,\chi_x}R_{\omega,a,\ee}\ \chi_y
\tilde R_{\omega,a,\ee} \chi_z.
\end{equation}
This follows from \eqref{com R W 01} and \eqref{com R W 02} where we use Lemma~\ref{resolvent} to obtain a decay of the kernel of $\w_{\omega,a}\tilde R_{\omega,a,\m}$
instead of the Combes-Thomas estimate. In particular, \eqref{com w R} follows the same procedure as \eqref{W R com 0}.
\vspace{0.1cm}
\newline
We now turn to the remaining term $ \com{g(H_{\omega,a}^\m)-g(\tilde H_{\omega,a}^\m),\LL_2}\LL_1 {\bf 1}_{\{|x_2|> R\}}$ which is similar to \eqref{II_R}. 
This gives analogous terms to \eqref{com R w 0} and \eqref{w R com 0} where once more, the decay of the kernel of the resolvent is replaced by that of 
$\w_{\omega,a} R_{\omega,a,\m}$ thanks to Lemma~\ref{resolvent}. After all, we conclude that the trace of $\com{g(H_{\omega,a}^\m),\LL_2}\LL_1$ vanishes.

\subsubsection{Trace class property}
In this section, we deal with the trace class property of the opertors mentionned in Lemma~\ref{0 trace}.
\vspace{0.3cm}
\newline
\textit{The operator $ g'(H_{\omega,a})\com{H_{\omega,a},\LL_2,}\LL_1$}. 
The trace class property of this operator follows from the previous section where we have considered auxiliary operators $\tilde H_{\omega,a}$ to take avantage of the wall.
Since we have $g'(\tilde H_{\omega,a})=0$, we should analyze the operator 
$$ (g'(H_{\omega,a})-g'(\tilde H_{\omega,a}))\com{H_{\omega,a},\LL_2,}\LL_1$$
via the formula \eqref{sj}.
\bigskip

$\bullet$ \textit{Magnetic case}.
After computation and recalling that $\w_{\omega,a}=\tilde H_{\omega,a}^\m-H_{\omega,a}^\m$, we obtain six terms
\begin{align*}
 (\tilde R_{\omega,a,\m}^3&-R_{\omega,a,\m}^3)\com{H_{\omega,a}^\m,\LL_2}\LL_1\\
&=\left(\tilde R_{\omega,a,\m}^2 \w_{\omega,a} R_{\omega,a,\m}\tilde R_{\omega,a,\m} 
+\tilde R_{\omega,a,\m} \w_{\omega,a} R_{\omega,a,\m}^2\tilde R_{\omega,a,\m}\right)\com{H_{\omega,a}^\m,\LL_2}\LL_1\\
&+\left(\tilde R_{\omega,a,\m} \w_{\omega,a} R_{\omega,a,\m}\tilde R_{\omega,a,\m}^2
+R_{\omega,a,\m}^2 \tilde R_{\omega,a,\m} \w_{\omega,a} R_{\omega,a,\m}\right)\com{H_{\omega,a}^\m,\LL_2}\LL_1\\ 
&+\left( R_{\omega,a,\m}\tilde R_{\omega,a,\m}^2 \w_{\omega,a} R_{\omega,a,\m}+
R_{\omega,a,\m}\tilde R_{\omega,a,\m} \w_{\omega,a} R_{\omega,a,\m}^2\right)\com{H_{\omega,a}^\m,\LL_2}\LL_1.
\end{align*}
We shall treat one term and the others holds in quite similar way. For instance, we deal with 
$ R_{\omega,a,\m}\tilde R_{\omega,a,\m} \w_{\omega,a} R_{\omega,a,\m}^2\com{H_{\omega,a}^\m,\LL_2}\LL_1$ that we write as the sum of
\begin{equation}\label{wr2}
R_{\omega,a,\m} \tilde R_{\omega,a,\m} \chi_y \w_{\omega,a}R_{\omega,a,\m}^2\com{H_{\omega,a}^\m,\LL_2}\chi_x
\end{equation}
over
$$\mathcal{D}_1:=\{(x_1,x_2)\in\z^-\times(\z\cap[-1,1]), (y_1,y_2)\in(\z\cap(\infty,-a-r_0])\times\z, u\in\z^2\},$$
so that
\begin{equation}
\style\trnorm{\eqref{wr2}}
\le\trnorm{R_{\omega,a,\m} \tilde R_{\omega,a,\m}\chi_y}\norm{\chi_y \w_{\omega,a} R_{\omega,a,\m}\chi_u}\norm{\chi_u R_{\omega,a,\m}
\com{H_{\omega,a}^\m,\LL_2}\chi_x}.\label{est}
\end{equation}
Since $R_{\omega,a,\m} \tilde R_{\omega,a,\m}\chi_y$ is trace class independently of $y$ and having in mind that $\w_{\omega,a}$ is a first order operator, 
we use Lemma~\ref{resolvent} 
to upper bound \eqref{est} by 
$$c_1 |\im z|^{-4} \ \e^{-c_2|\im z|(|u-y|+|x-u|)}.$$

\bigskip

$\bullet$ \textit{Electric case}.
Once more, the same arguments work for the electric case subject to change $\w_{\omega,a}$ into $W_a$. 
If we consider the term $ R_{\omega,a,\ee}\tilde R_{\omega,a,\ee}^2 W_a R_{\omega,a,\ee}\com{H_{\omega,a}^\ee,\LL_2}\LL_1$, 
we have to estimate the trace norm of the sum of 

\begin{equation}
R_{\omega,a,\ee}\tilde R_{\omega,a,\ee}^2 \chi_y R_{\omega,a,\ee}\com{H_{\omega,a}^\ee,\LL_2}\chi_x
\end{equation}
over 
$$\mathcal{D}_2:=\{(x_1,x_2)\in\z^-\times\{0\}, (y_1,y_2)\in(\z\cap(\infty,-a-r_0])\times\z\}.$$
Thus the trace class property holds from $\tilde R_{\omega,a,\ee}^2 \chi_y$ while the summabilty of the sum comes out from the decay of 
$\chi_y R_{\omega,a,\ee}\com{H_{\omega,a}^\ee,\LL_2}\chi_x$ thanks to the Combes-Thomas estimate \cite{CT}. In the sense that 
\begin{equation}
 \trnorm{R_{\omega,a,\ee}\tilde R_{\omega,a,\ee}^2 W_a R_{\omega,a,\ee}\com{H_{\omega,a}^\ee,\LL_2}\LL_1}\le C(a+r_0) |\im z|^{-4}.
\end{equation}

\bigskip

Now we establish the trace class property of operators depending on the time regularization $\LL_{1,a}^\omega(t)$.
\vspace{0.3cm}
\newline
\textit{The operator $\com{g(H_{\omega,a}),\LL_2,}\LL_{1,a}^\omega(t)$}.
In next analysis, we do not need to specify the case we deal with since the proof works for both electric and magnetic models. 
We substract $\com{g(H_{\omega,a}),\LL_2}\LL_1$ which has zero trace by the previous analysis in section~\ref{0 tr}. Moreover, combining
\begin{equation}\label{intgr}
 \LL_{1,a}^\omega(t) -\LL_1 = i\int_0^t \e^{isH_{\omega,a}}\com{H_{\omega,a},\LL_1}\e^{-isH_{\omega,a}} \dd s,
\end{equation}
that we insert in \eqref{sj} and the resolvent identity \eqref{R identity},
we are left with the analysis of the trace norm of
\begin{equation}\label{R² com}
R_{\omega,a}^2\com{H_{\omega,a},\LL_2}R_{\omega,a} \ \e^{isH_{\omega,a}}\com{H_{\omega,a},\LL_1}\e^{-isH_{\omega,a}},
\end{equation}
and 
\begin{equation}\label{com R²}
R_{\omega,a}\com{H_{\omega,a},\LL_2}R_{\omega,a}^2 \ \e^{isH_{\omega,a}}\com{H_{\omega,a},\LL_1}\e^{-isH_{\omega,a}}.
\end{equation}
These operators are localized in space in both directions $x_1$ and $x_2$ in the sense that each $\com{H_{\omega,a},\LL_j}$ is 
localized on the support of $\LL_j'$ because
\begin{align}
\com{H_{\omega,a}^\m,\LL_j}
&=-i(-i\nabla-A_0-A_a-A_\omega). \nabla\LL_j- i\nabla\LL_j.(-i\nabla-A_0-A_a-A_\omega)\notag
\end{align}
and
\begin{equation*}
\com{H_{\omega,a}^\ee,\LL_j}=-i(-i\nabla-A_0). \nabla\LL_j- i\nabla\LL_j.(-i\nabla-A_0).
\end{equation*}
To estimate the trace norm of \eqref{R² com}, we decompose it with smooth characteristic functions and we rewrite
\begin{equation}
 \eqref{R² com} 
=\sum_{\substack{(x_1,x_2)\in\z\times\{0\}\\ (y_1,y_2)\in\{0\}\times\z}}
R_{\omega,a}^2 \com{H_{\omega,a},\LL_2}\chi_x R_{\omega,a}\ \e^{isH_{\omega,a}} 
\com{H_{\omega,a},\LL_1} \chi_y \ \e^{-isH_{\omega,a}}.\label{sum R² com}
\end{equation}
Since the operator $R_{\omega,a}^2\com{H_{\omega,a},\LL_2}\chi_x$ is trace class with 
\begin{equation*}\trnorm{R_{\omega,a}^2\com{H_{\omega,a},\LL_2}\chi_x}\le\frac{C}{|\im z|^2},\end{equation*}
and the operator norm of $\chi_x R_{\omega,a}\ \e^{isH_{\omega,a}}\com{H_{\omega,a},\LL_1}\chi_y $ is upper bounded by 

\begin{equation*}
 \e^{c_1 s}|\im z|^{-1} \ \e^{-c_2|\im z|(|x_1-y_1|+|x_2-y_2|)},
\end{equation*}
which follows from Lemma~\ref{resolvent}, we obtain the sum \eqref{sum R² com} is finite and thus \eqref{R² com} is trace class.
Moreover, there exist two constants $c_1$ and $c_2$ such that
\begin{equation}
 \trnorm{\eqref{R² com}} \le c_3 |\im z|^{-3} \ \e^{c_1 s}.\label{time bnd 1}
\end{equation}
We turn to \eqref{com R²} that we expand in the following way

\begin{equation*}
 \eqref{com R²}=\sum_{\substack{u_1,y_2\in\z,\ x\in\z^2\\ u_2=0,y_1=0}}
R_{\omega,a} \com{H_{\omega,a},\LL_2}\chi_u R_{\omega,a} \chi_x \ \e^{isH_{\omega,a}} R_{\omega,a}
\com{H_{\omega,a},\LL_1} \chi_y \ \e^{-isH_{\omega,a}}.
\end{equation*}
In order to extract the decay in $x_1$ and $y_2$, we use commutators to push $\chi_u $ to the left through the resolvent $R_{\omega,a}$.
Let $\tilde\chi_u$ be a smooth function such that $\tilde\chi_u=1$ on $\mathrm{supp}\nabla\chi_u$. Then we have
\begin{align*}
&\trnorm{ \eqref{com R²}} 
\le\sum_{\substack{u_1,y_2\in\z,\ x\in\z^2\\ u_2=0,y_1=0}}
\trnorm{R_{\omega,a} \com{H_{\omega,a},\LL_2}\chi_u R_{\omega,a} \chi_x \ \e^{isH_{\omega,a}} R_{\omega,a}
\com{H_{\omega,a},\LL_1} \chi_y \ \e^{-isH_{\omega,a}}}\\
&\le\sum_{\substack{u_1,y_2\in\z,\ x\in\z^2\\ u_2=0,y_1=0}}
\trnorm{ R_{\omega,a} \com{H_{\omega,a},\LL_2} R_{\omega,a} \chi_u \chi_x \ \e^{isH_{\omega,a}} R_{\omega,a}
\com{H_{\omega,a},\LL_1} \chi_y \ \e^{-isH_{\omega,a}}}\\
&+\sum_{\substack{u_1,y_2\in\z,\ x\in\z^2\\ u_2=0,y_1=0}}
\trnorm{R_{\omega,a} \com{H_{\omega,a},\LL_2} R_{\omega,a} \tilde\chi_u\com{H_{\omega,a},\chi_u} R_{\omega,a}\chi_x \ \e^{isH_{\omega,a}} R_{\omega,a}
\com{H_{\omega,a},\LL_1} \chi_y \ \e^{-isH_{\omega,a}}}\\
&\le  C_1  |\im z|^{-3} \ \e^{c_1 s} \sum_{\substack{x_1,y_2\in\z\\ y_1=0\\x_2\in\z\cap[-2,2]}} 
\e^{-\tilde c_1 |\im z|(|x_1-y_1|+|x_2-y_2|)}\\
&+ C_2  |\im z|^{-4} \ \e^{c_2 s} \sum_{\substack{u_1,y_2\in\z,x\in\z^2\\ u_2=0,y_1=0}} 
\e^{-\tilde c_2 |\im z|(|x_1-u_1|+|x_2-u_2|+|x_1-y_1|+|x_2-y_2|)}\\
&\le \tilde C_1  |\im z|^{-3} \ \e^{c_1 s} + C_2  |\im z|^{-4} \ \e^{c_2 s} \sum_{\substack{u_1,y_2\in\z\\u_2=0, y_2=0}} 
\e^{- \tilde c_2 |\im z|(|u_1-y_1|+|u_2-y_2|)},
\end{align*}
where we have combined Combes-Thomas estimate \cite{CT} and Lemma~\ref{resolvent} together with Lemma~\ref{A.5}.
Hence, the summability follows and the operator \eqref{com R²} is finally trace class with 
\begin{equation}
\trnorm{\eqref{com R²}}\le C_3  \ \e^{\tilde c_3 s} (|\im z|^{-3}+|\im z|^{-4}). \label{time bnd 2}
\end{equation}
\vspace{0.3cm}
\newline
\textit{The operator $ g'(H_{\omega,a})\com{H_{\omega,a},\LL_2}\LL_{1,a}^\omega(t)$}.
From \eqref{sj'}, it follows that
\begin{align}
g'(H_{\omega,a})&\com{H_{\omega,a},\LL_2}\LL_{1,a}^\omega(t)
= g'(H_{\omega,a})\com{H_{\omega,a},\LL_2}\LL_1\notag\\
&+\frac{i}{\pi}\int_{\R^2}\int_0^t \overline{\partial}\tilde G(z) R_{\omega,a}^3(z)\com{H_{\omega,a},\Lambda_2}\e^{isH_{\omega,a}}\com{H_{\omega,a},\LL_1}
\e^{-isH_{\omega,a}}\dd s \ \dd u \ \dd v. \label{integ}
\end{align}
By the previous result on the operator $g'(H_{\omega,a})\com{H_{\omega,a},\LL_2}\LL_{1,a}^\omega(t)$, it suffices to treat \eqref{integ} and to estimate the trace norm operator of
\begin{align}
R_{\omega,a}^3\com{H_{\omega,a},\LL_2}\e^{isH_{\omega,a}}\com{H_{\omega,a},\LL_1}
\e^{-isH_{\omega,a}},
\end{align}
that we write as
\begin{align}
& R_{\omega,a}^2\com{H_{\omega,a},\LL_2}R_{\omega,a}\e^{isH_{\omega,a}}\com{H_{\omega,a},\LL_1}\e^{-isH_{\omega,a}}\label{term 1}\\
&-R_{\omega,a}^3\com{H_{\omega,a},\com{H_{\omega,a},\LL_2}}R_{\omega,a}\ \e^{isH_{\omega,a}}\com{H_{\omega,a},\LL_1}\e^{-isH_{\omega,a}}.\label{term 2}
\end{align}
We thus have
\begin{align*}
\trnorm{\eqref{term 1}}
&\le\sum_{\substack{x_1,y_2\in\z,\ x\in\z^2\\ x_2=0,y_1=0}}
\trnorm{R_{\omega,a}^2\com{H_{\omega,a},\LL_2}\chi_x}\norm{\chi_x R_{\omega,a}\e^{isH_{\omega,a}}\com{H_{\omega,a},\LL_1}\chi_y},
\end{align*}
and 
\begin{align*}
 \trnorm{\eqref{term 2}}
&\le\sum_{\substack{x_1,y_2\in\z,\ x\in\z^2\\ x_2=0,y_1=0}}
 \trnorm{R_{\omega,a}^3\com{H_{\omega,a},\com{H_{\omega,a},\LL_2}}\chi_x}\norm{\chi_x R_{\omega,a}\ \e^{isH_{\omega,a}}\com{H_{\omega,a},\LL_1}\chi_y}.
\end{align*}
The trace norms above are upper bounded by a constant $c$ uniformly in $x$ and the operator norms operators are bounded by 
$$\e^{c_1 s} |\im z|^{-1} \e^{-c_2 |\im z|(|x_1-y_1|+|x_2-y_2|)}.$$
Then the trace class property holds.

\bigskip

Notice that although the operator $\com{g(H_{\omega,a}),\LL_2}\LL_{1,a}^\omega(t)$ is still trace class, there is no reason anymore for its trace to vanishes since
$\LL_2$ does not commute with $\LL_{1,a}^\omega(t)$ as it is the case with $\LL_1$.  

\subsection{Contributions of the Bulk quantities}

We start by proving the zero contribution of the remainder term \eqref{Rest R}.

\subsubsection{Proof of Lemma~\ref{contribution R}}\label{limits}
 For convenience we set
\begin{equation}\label{r_1}
 r_{\omega,a}^{(1)}(t)=R_{\omega,a}^2(z) \com{H_{\omega,a},\LL_2} 
R_{\omega,a}(z) \com{H_{\omega,a},\LL_{1,a}^\omega(t)}R_{\omega,a}(z), 
\end{equation}

\begin{equation}\label{r_2}
 r_{\omega,a}^{(2)}(t)=R_{\omega,a}(z) \com{H_{\omega,a},\LL_2} 
R_{\omega,a}^2(z) \com{H_{\omega,a},\LL_{1,a}^\omega(t)}R_{\omega,a}(z),
\end{equation}
and
\begin{equation}\label{r_3}
 r_{\omega,a}^{(3)}(t)=R_{\omega,a}(z) \com{H_{\omega,a},\LL_2} 
R_{\omega,a}(z) \com{H_{\omega,a},\LL_{1,a}^\omega(t)}R_{\omega,a}^2(z),
\end{equation}
that appear in \eqref{Rest R}.
We first treat \eqref{r_2} and prove the convergence to the corresponds bulk quantity. Rewrite $r_{\omega,a}^{(2)}t)$ as
\begin{equation}\label{rr_2}
\com{R_{\omega,a},\LL_2} (H_\omega+\Theta) \scal{x_2}^{2\nu} (\scal{x_2}^{-2\nu} 
(H_\omega+\Theta)^{-2} \scal{x_1}^{-2\nu}) \scal{x_1}^{2\nu} (H_\omega+\Theta)\com{R_{\omega,a},\LL_{1,a}^\omega(t)}.
\end{equation}
We notice that the operators 
\begin{equation*}
 \com{R_{\omega,a},\LL_2} (H_\omega+\Theta) \scal{x_2}^{2\nu} \ \ \mbox{and} \ \
\scal{x_1}^{2\nu} (H_\omega+\Theta)\com{R_{\omega,a},\LL_{1,a}^\omega(t)}
\end{equation*}
are uniformly bounded in $a$.
As the middle operator $(\scal{x_2}^{-2\nu} (H_\omega+\Theta)^{-2} \scal{x_1}^{-2\nu})$ is trace class (see \cite{BoGKS}), it follows from Lemma~\ref{stron lim} 
and Proposition~\ref{si} that it suffices to prove the strong convergence of the left and right operators in \eqref{rr_2} in $\mathcal{C}_c^\infty(\R^2)$.
We use the identity $ R_{\omega,a}^\ee -R_\omega^\ee=-R_{\omega,a}^\ee U_a \ R_\omega^\ee$ to write
\begin{equation}\label{com(R_a-R)}
 \com{R_{\omega,a}^\ee-R_{\omega}^\ee,\LL_2} =
\LL_2 R_{\omega,a}^\ee U_a \ R_\omega^\ee  
-R_{\omega,a}^\ee U_a \ R_\omega^\ee \ \LL_2. 
\end{equation}
Similarly, we have
\begin{equation*}
 R_{\omega,a}^\m -R_\omega^\m=-R_{\omega,a}^\m \Gamma_{\omega,a} \ R_\omega^\m,
\end{equation*}
for the magnetic model and thus
\begin{equation}\label{com(R_a-R)}
 \com{R_{\omega,a}^\m-R_{\omega}^\m,\LL_2} =
\LL_2 R_{\omega,a}^\m \Gamma_{\omega,a} \ R_\omega^\m 
-R_{\omega,a}^\m \Gamma_a \ R_\omega^\m \ \LL_2 . 
\end{equation}
Let
$\varphi\in\cc_c^\infty(\R^2)$ be such that $\mathrm{supp}\varphi \subset D_{r_1,r_2}$  where $D_{r_1,r_2}=[-r_1,r_1]\times[-r_2,r_2] $ for $r_1<a$ and $r_2 >0$. 
In particular, $\mathrm{supp}(\LL_2(H_\omega+\Theta)\scal{x_2}^{2\nu}\varphi)\subset D_{r_1,r_2}$
and we have
\begin{equation*}
 \norm{R_{\omega,a}^\m \ \Gamma_{\omega,a} \ R_\omega^\m \ \LL_2 (H_\omega+\Theta) \scal{x_2}^{2\nu}\varphi}
\le \frac{C}{|\im z|^2} \ \e^{-c |a-r_1|} \norm{\LL_2 (H_\omega+\Theta) \scal{x_2}^{2\nu}\varphi}
\end{equation*}
and
\begin{equation*}
 \norm{\LL_2 \ R_{\omega,a}^\m \ \Gamma_{\omega,a} \ R_\omega^\m (H_\omega+\Theta) \scal{x_2}^{2\nu}\varphi}
\le \frac{C}{|\im z|^2} \ \e^{-c |a-r_1|} \norm{(H_\omega+\Theta) \scal{x_2}^{2\nu}\varphi}, 
\end{equation*}
which converge to 0 as $ a\to+\infty$. The electric case holds in a the same way.
\vspace{0.3cm}
\newline
Next we carry on the convergence of the right side of the operator in \eqref{rr_2} and we write
\begin{equation}\label{com(R_a-R,t)}
\com{R_{\omega,a},\LL_{1,a}^\omega(t)}-\com{R_\omega,\LL_{1}^\omega(t)}=\com{R_{\omega,a}-R_\omega,\LL_{1,a}^\omega(t)}+
\com{R_\omega,\LL_{1,a}^\omega(t)-\LL_1^\omega(t)}.
\end{equation}
We point out that the first term of the r.h.s of \eqref{com(R_a-R,t)} is treated in the same spirit as \eqref{com(R_a-R)} without time-dependence.
In fact, one has
\begin{equation}\label{Ra R}
 \com{R_{\omega,a}-R_\omega,\LL_{1,a}^\omega(t)}=\com{R_{\omega,a}-R_\omega,\LL_{1,a}^\omega(t)-\LL_1}+\com{R_{\omega,a}-R_\omega,\LL_1},
\end{equation}
and the second term of the r.h.s of \eqref{Ra R} looks like \eqref{com(R_a-R)} where we have $\LL_1$ instead of $\LL_2$. For the first term of \eqref{Ra R}, 
we take advantage of localisation in $x_1$ that the difference $\LL_{1,a}^\omega(t)-\LL_1$ gives us (see \eqref{intgr}) and the result holds similarly.
\vspace{0.3cm}
\newline
We come back to the second term in the r.h.s of \eqref{com(R_a-R,t)}, namely
$$\scal{x_1}^{2\nu}(H_\omega+\Theta)\com{R_{\omega},\LL_{1,a}^\omega(t)-\LL_1^\omega(t)},$$ 
that requires more works. We combine the commutator calculation and the first order resolvent identity to obtain
\begin{equation*}
 \com{R_{\omega},\LL_1^\omega(t)}=-R_\omega \e^{itH_\omega}\com{H_\omega,\LL_1} \e^{-itH_\omega} R_\omega,
\end{equation*}
and
\begin{align*}
 \LL_{1,a}^\omega(t) R_\omega
&=  \LL_{1,a}^\omega(t) R_{\omega,a}(1+\Gamma_{\omega,a} R_\omega)\\
&= R_{\omega,a} \ \LL_{1,a}^\omega(t)(1+\Gamma_{\omega,a} R_\omega)+ \e^{-itH_{\omega,a}} \com{R_{\omega,a},\LL_1} \e^{-itH_{\omega,a}}(1+\Gamma_{\omega,a} R_\omega).
\end{align*}
Hence, one has
\begin{align*}
 \com{R_{\omega},\LL_{1,a}^\omega(t)}
&=  R_\omega \LL_{1,a}^\omega(t) - R_{\omega,a} \ \LL_{1,a}^\omega(t)(1+\Gamma_{\omega,a} R_\omega)\\
&- \e^{-itH_{\omega,a}} \com{R_{\omega,a},\LL_1} \e^{-itH_{\omega,a}}
(1+\Gamma_{\omega,a} R_\omega),
\end{align*}
that we plug into $\com{R_\omega,\LL_{1,a}^\omega(t)-\LL_{1}^\omega(t)}$ to get
\begin{align*}
 \com{R_\omega,\LL_{1,a}^\omega(t)-\LL_{1}^\omega(t)}
&=(R_\omega-R_{\omega,a}) \LL_{1,a}^\omega(t)- R_{\omega,a}\LL_{1,a}^\omega(t)\Gamma_{\omega,a} R_\omega\\
&- e^{-itH_{\omega,a}} \com{R_{\omega,a},\LL_1} \e^{-itH_{\omega,a}}(1+\Gamma_{\omega,a} R_\omega)
- \e^{itH_\omega}\com{R_\omega,\LL_1} \e^{-itH_\omega}.
\end{align*}
Hence, as $\LL_{1,a}^\omega(t)\to\LL_{1}^\omega(t)$ and $R_\omega\Gamma_{\omega,a}\to 0$ strongly, by Lemma~\ref{strong cv R}, 
the strong convergence to $0$ as $a\to\infty$ follows.
\vspace{0.3cm}
\newline
Now, we deal with \eqref{r_1} and push one resolvent from the left through the commutator $\com{H_{\omega,a},\LL_1}$, so that
\begin{align}\label{rr_1}
 r_{\omega,a}^{(1)}(t)
&=R_{\omega,a}\com{R_{\omega,a},\LL_2}R_{\omega,a}^2 \com{H_{\omega,a},\LL_{1,a}^\omega(t)}R_{\omega,a}\\
&-R_{\omega,a}^2 \com{H_{\omega,a},\com{H_{\omega,a},\LL_2}} R_{\omega,a}^2 \com{H_{\omega,a},\LL_{1,a}^\omega(t)}R_{\omega,a}.\label{rr_12}
\end{align}
The first term \eqref{rr_1} fit exactly to \eqref{r_1}. Procceding as in \eqref{rr_2} we get
\begin{align*}
\eqref{rr_12}
=-R_{\omega,a}&\com{R_{\omega,a},\com{H_{\omega,a},\LL_2}}(H_\omega+\Theta) \scal{x_2}^{2\nu} (\scal{x_2}^{-2\nu} 
(H_\omega+\Theta)^{-2} \scal{x_1}^{-2\nu})\\
 &\scal{x_1}^{2\nu} (H_\omega+\Theta) \com{R_{\omega,a},\LL_{1,a}^\omega(t)}.
\end{align*}
Once more, the middle operator $\scal{x_2}^{-2\nu} (H_\omega+\Theta)^{-2} \scal{x_1}^{-2\nu}$
is trace class \cite{BoGKS}.
By Lemma~\ref{stron lim} and Proposition~\ref{si} together with Lemma~\ref{strong cv R}
and the fact that the right operator above $\scal{x_1}^{2\nu} (H_\omega+\Theta) \com{R_{\omega,a},\LL_{1,a}^\omega(t)}$
 is previousely treated in \eqref{com(R_a-R)}, we only need to prove the strong convergence of the operator
$$\com{R_{\omega,a},\com{H_{\omega,a},\LL_2}}(H_\omega+\Theta) \scal{x_2}^{2\nu},$$
which is uniformly bounded in $a$.
We compute the difference
\begin{equation}\label{rr 12}
\com{R_{\omega,a},\com{H_{\omega,a},\LL_2}}- \com{R_{\omega},\com{H_{\omega},\LL_2}}
=\com{R_{\omega,a}-R_\omega, \com{H_{\omega,a},\LL_2}} +\com{R_\omega,\com{\Gamma_{\omega,a},\LL_2}},
\end{equation}
and we let $\varphi\in\mathcal{C}_c^\infty(\R^2)$ with support $D_{r_1,r_2}=[-r_1,r_1]\times[-r_2,r_2]$ for $r_1<a$ and $r_2>0$. Then the supports of 
$\com{H_{\omega,a},\LL_2}(H_\omega+\Theta)\scal{x_2}^{2\nu}\varphi$ and $(H_\omega+\Theta)\scal{x_2}^{2\nu}\varphi$ are both contained in $D_{r_1,r_2}$.
Thus we estimate the operator norm of
\begin{equation}\label{rr 12 bis}
\norm{\com{R_{\omega,a}-R_\omega, \com{H_{\omega,a},\LL_2}}(H_\omega+\Theta) \scal{x_2}^{2\nu}\varphi}
\end{equation}
so that
\begin{align*}
\eqref{rr 12 bis}&\le \norm{ R_\omega\Gamma_{\omega,a}R_{\omega,a}\com{H_{\omega,a},\LL_2}(H_\omega+\Theta) \scal{x_2}^{2\nu}\varphi}\\
&+ \norm{\com{H_{\omega,a},\LL_2}R_{\omega,a}\Gamma_{\omega,a}R_\omega(H_\omega+\Theta) \scal{x_2}^{2\nu}\varphi}\\
&\le\norm{R_\omega} \norm{\Gamma_{\omega,a}R_{\omega,a}\com{H_{\omega,a},\LL_2}(H_\omega+\Theta) \scal{x_2}^{2\nu}\varphi}\\
&+\norm{\com{H_{\omega,a},\LL_2}R_{\omega,a}}\norm{\Gamma_{\omega,a}R_\omega(H_\omega+\Theta) \scal{x_2}^{2\nu}\varphi}
\\
&\le C \left( |\im z|^{-2}\ \e^{-\tilde c_1|\im z||a-r_1|}+ |\im z|^{-3/2}\ \e^{-\tilde c_2|\im z||a-r_1|}\right)\norm{(H_\omega+\Theta)\scal{x_2}^{2\nu}\varphi},
\end{align*}
which converges to $0$ as $a\to\infty$. 
Consider now the remaining term $ \com{R_\omega,\com{\Gamma_{\omega,a},\LL_2}}$ of the r.h.s of \eqref{rr 12}. We have
\begin{align*}
 \com{R_\omega,\com{\Gamma_{\omega,a},\LL_2}}
=R_\omega\com{\Gamma_{\omega,a},\LL_2}-\Gamma_{\omega,a}\LL_2R_\omega+\LL_2\Gamma_{\omega,a}R_\omega,
\end{align*}
and control its operator norm in the following way
\begin{equation}\label{3}
 \norm{\Gamma_{\omega,a}\LL_2R_\omega (H_\omega+\Theta)\scal{x_2}^{2\nu}}\le c_3 |\im z|^{-1} \e^{-\tilde c_3|\im z||a-r_1|}\norm{(H_\omega+\Theta)\scal{x_2}^{2\nu}\varphi},
\end{equation}
and
\begin{equation}\label{4}
 \norm{\LL_2\Gamma_{\omega,a}R_\omega(H_\omega+\Theta)\scal{x_2}^{2\nu}}\le c_4 |\im z|^{-1} \e^{-\tilde c_4|\im z||a-r_1|}\norm{(H_\omega+\Theta)\scal{x_2}^{2\nu}\varphi},
\end{equation}
while $R_\omega\com{\Gamma_{\omega,a},\LL_2}(H_\omega+\Theta)\scal{x_2}^{2\nu}\varphi=0$ since $r_1<a$.
We thus conclude that \eqref{3} and \eqref{4} converge to 0 as $a\to\infty$. 
\vspace{0.4cm}
\newline
In a similar way, we can establish the strong convergences in $a$ of \eqref{r_3} to the bulk corresponding operators
such that $ r_{\omega,a}^{(3)}(t)\to r_{\omega}^{(3)}(t)$ where 
we denote by $r_\omega^{(1)}(t), r_\omega^{(2)}(t)$ and $r_\omega^{(3)}(t)$ the analougous remainders. 
\vspace{0.4cm}
\newline
Next, we estimate the time average of $r_\omega^{(j)}(t) $ in the trace norm for $j=1,2,3$.
In the first step, we introduce smooth characteristic functions $\chi_{\{|x_j|\le R\}}$ and $\chi_{\{|x_j|> R\}}$ inside $r_\omega^{(1)}(t)$ where $R=T^{1/2}$
and $j=1,2$. 
\newline
We rewrite $r_\omega^{(1)}(t)$ as the sum  
\begin{equation}\label{intg r_1}
 {R_{\omega}^2 \com{H_{\omega},\LL_2} (\chi_{\{|x_1|\le R\}}+\chi_{\{|x_1|> R\}})R_{\omega}\com{H_{\omega},\LL_{1}^\omega(t)} R_{\omega}}.
\end{equation}
We consider the time average of the r.h.s of \eqref{intg r_1} whose trace norm is estimated as
\begin{align*}
\frac{1}{T}&\trnorm{R_{\omega}^2 \com{H_{\omega},\LL_2} \chi_{\{|x_1|\le R\}}R_{\omega}\left(\e^{iTH_\omega}\LL_1 \e^{-iTH_\omega}-\LL_1\right) R_{\omega}} \\
&\le\frac{1}{T}\trnorm{R_{\omega}^2 \com{H_{\omega},\LL_2} \chi_{\{|x_1|\le R\}}}\norm{R_{\omega}\left(\e^{iTH_\omega}\LL_1 \e^{-iTH_\omega}-\LL_1\right) R_{\omega}}\\
&\le \frac{C R}{T} |\im z|^{-4},
\end{align*}
which goes to $0$ as $T\to\infty$ and where we have used the fact that operator $R_{\omega}^2 \com{H_{\omega},\LL_2}\chi_{\{|x_1|\le R\}}$ belongs to $\T_1$ together
with
\begin{align}
\frac{1}{T}\int_0^T \com{H_{\omega},\LL_{1}^\omega(t)}\ \dd t
&=\frac{1}{T}\int_0^T \e^{itH_\omega}\com{H_\omega,\LL_1}\e^{-itH_\omega} \dd t\notag\\
&=\frac{-i}{T}(\e^{iTH_\omega}\LL_1 \e^{-iTH_\omega} - \LL_1).\notag
\end{align}
Concerning the second term of the r.h.s of \eqref{intg r_1}, we have
\begin{align*}
& \trnorm{R_{\omega}^2 \com{H_{\omega},\LL_2}\left(\frac{1}{T}\int_0^T\chi_{\{|x_1|> R\}} R_{\omega} \e^{itH_\omega}\com{H_\omega,\LL_1}\e^{-itH_\omega} R_\omega\ \dd t\right)}\\
&\le\sum_{x,y\in\mathcal{N}_1} \trnorm{\frac{1}{T}\int_0^T R_{\omega}^2 \com{H_{\omega},\LL_2}\chi_x  R_{\omega} \e^{itH_\omega}\chi_y
\com{H_\omega,\LL_1}\e^{-itH_\omega} R_\omega\ \dd t}\\
&\le \sum_{x,y\in\mathcal{N}_1} \trnorm{R_{\omega}^2 \com{H_{\omega},\LL_2}\chi_x} \left(\frac{1}{T}\int_0^T\norm{\chi_x R_{\omega} \e^{itH_\omega}\chi_y}\ \dd t\right)
\norm{\com{H_\omega,\LL_1}\e^{-itH_\omega} R_\omega}\\
&\le \tilde C |\im z|^{-4} \ T^5 \e^{-c_1|\im z|R},
\end{align*}
where 
\begin{equation}\label{N_1}
\mathcal{N}_1 =\{(\z\cap[-R,R]^c \times \{0\}) \times (\{0\}\times\z)\}. 
\end{equation}
Here, we have used the decay of the kernel $\chi_x R_{\omega} \e^{itH_\omega}\chi_y $.
Since $R=T^{\frac12}$, the trace thus vanishes as $T\to\infty$.
\vspace{0.3cm}
\newline
The result $r_\omega^{(2)}(t) $ and $r_\omega^{(3)}(t)$ follows in quite similar way.
For the reader's convenience, we nevertheless reproduce the details for $r_\omega^{(2)}(t)$.

\begin{align}
 r_\omega^{(2)}(t)
&=R_\omega\com{H_\omega,\LL_2}R_\omega\chi_{\{|x_1|\le R\}}R_\omega\com{H_\omega,\LL_1^\omega(t)}R_\omega\label{r_2 1}\\
&+R_\omega\com{H_\omega,\LL_2}R_\omega\chi_{\{|x_1|< R\}}R_\omega\com{H_\omega,\LL_1^\omega(t)}R_\omega\label{r_2 2}
\end{align}

\begin{align*}
 \frac{1}{T}\int_0^T\trnorm{\eqref{r_2 1}}\ \dd t
 &\le\frac{1}{T}\trnorm{R_\omega\com{H_\omega,\LL_2}R_\omega\chi_{\{|x_1|\le R\}}} \norm{R_\omega(\e^{iTH_\omega}\LL_1\e^{iTH_\omega}-\LL_1)R_\omega}\\
 &\le \frac{cR}{T}|\im z|^{-4}.
\end{align*}

\begin{align*}
& \frac{1}{T}\int_0^T\trnorm{\eqref{r_2 2}}\ \dd t
\le\sum_{x,y\in\mathcal{N}_2}\frac{1}{T}\trnorm{R_\omega\com{H_\omega,\LL_2}R_\omega\chi_x R_\omega\e^{itH_\omega}\chi_y\com{H_\omega,\LL_1}\e^{-itH_\omega}R_\omega} \dd t\\
&\le\sup_x\trnorm{R_\omega\com{H_\omega,\LL_2}R_\omega\chi_x}\sum_{x,y\in\mathcal{N}}\left(\frac{1}{T}\norm{\chi_x R_\omega\e^{itH_\omega}\chi_y} \dd t\right)
\norm{\com{H_\omega,\LL_1}\e^{-itH_\omega}R_\omega}\\
&\le c_1 T^5 |\im z|^{-4} \ \e^{-c_2|\im z| R},
\end{align*}
where 
\begin{equation}\label{N_2}
 \mathcal{N}_2=\{(\z\cap[-R,R]^c \times \z) \times (\{0\}\times\z)\}.
\end{equation}
To conclude, we take the function $\tilde G$ of order 5 so that the limit \eqref{average R} follows.

\subsubsection{Proof of Lemma~\ref{limit in a}}
It follows from the section~\ref{limits} that the operator $\com{g(H_{\omega,a}),\Lambda_2}(\Lambda_{1,a}^\omega(t)-\Lambda_1) $ is trace class. 
Concerning the convergence in trace to $\com{g(H_{\omega}),\Lambda_2}(\Lambda_{1}^\omega(t)-\Lambda_1) $, we adopt the same techniques used for the remainder operator 
$\mathcal{R}_{\omega,a}(t)$ in section~\ref{limits}. 
We use \eqref{sj} and we notice that is enough to analyze the operators
\begin{equation}\label{lim 1}
 R_{\omega,a}\com{H_{\omega,a},\LL_2}R_{\omega,a}^2(\LL_{1,a}^\omega(t)-\LL_1)
\end{equation}
and
\begin{equation}\label{lim 2}
 R_{\omega,a}^2\com{H_{\omega,a},\LL_2}R_{\omega,a}(\LL_{1,a}^\omega(t)-\LL_1).
\end{equation}
\bigskip
Once again, we introduce the operator $(H_\omega+\Theta)^2$ inside \eqref{lim 1} and \eqref{lim 2}. We write
\begin{align}
\eqref{lim 1}
=- \com{R_{\omega,a},\LL_2}&(H_\omega+\Theta)\scal{x_2}^{2\nu}(\scal{x_2}^{-2\nu}(H_\omega+\Theta)^2\scal{x_1}^{-2\nu})\notag\\
&\scal{x_1}^{2\nu}(H_\omega+\Theta)R_{\omega,a}(\LL_{1,a}^\omega(t)-\LL_1).\label{lim 1 1}
\end{align}
Since the operator $(\scal{x_2}^{-2\nu}(H_\omega+\Theta)^2\scal{x_1}^{-2\nu}) $ is trace class \cite{BoGKS},
it suffices thanks to Lemma~\ref{stron lim}, to prove the strong convergence of
\begin{equation*}
\com{R_{\omega,a},\LL_2}(H_\omega+\Theta)\scal{x_2}^{2\nu} \ \ \mbox{and} \ \ \scal{x_1}^{2\nu}
(H_\omega+\Theta)R_{\omega,a}(\LL_{1,a}^\omega(t)-\LL_1), 
\end{equation*}
in $\mathcal{C}_c^\infty(\R^2)$ as they are bounded uniformly in $a$.
We notice that the operator $\com{R_{\omega,a},\LL_2}(H_\omega+\Theta)\scal{x_2}^{2\nu}$ has already been treated in \eqref{r_1}. 
We are now left with $\scal{x_1}^{2\nu}(H_\omega+\Theta)R_{\omega,a}(\LL_{1,a}^\omega(t)-\LL_1)$ that we rewrite as 
$$\scal{x_1}^{2\nu} (H_\omega+\Theta) (R_{\omega,a}-R_\omega) (\LL_{1,a}^\omega(t)-\LL_1) + \scal{x_1}^{2\nu} (H_\omega+\Theta) R_\omega (\LL_{1,a}^\omega(t)-\LL_1).$$
Since $\LL_{1,a}^\omega(t)\to \LL_{1}^\omega(t) $ strongly, the second term converges to zero. To see that
$\scal{x_1}^{2\nu}(H_\omega+\Theta)(R_{\omega,a}-R_\omega)$ converges strongly to 0, we use
$R_{\omega,a}-R_\omega=-R_{\omega}\Gamma_{\omega,a}R_{\omega,a} $ and we let $\varphi\in\mathcal{C}_c^\infty(\R^2)$ compactly supported in $D_{r_1,r_2}$ as in section~\ref{limits}
with $r_1<a$. Then
\begin{align*}
 \norm{\scal{x_1}^{2\nu}(H_\omega+\Theta)R_{\omega}\Gamma_{\omega,a}R_{\omega,a}\varphi}
&\le \scal{r_1}^{2\nu}\norm{(H_\omega+\Theta)R_{\omega}} \norm{\Gamma_{\omega,a}R_{\omega,a}\varphi}\\
&\le \frac{C_{r_1}}{|\im z|^{1/2}} \e^{-c|\im z||a-r_1|} \norm{\varphi},
\end{align*}
which converges to 0 as $a\to\infty$.
\vspace{0.3cm}
\newline
Next we turn to \eqref{lim 2} whose analysis will be similar to that of \eqref{r_1}. We commute $R_{\omega,a}$ and $\com{H_{\omega,a},\LL_2}$ to write
\begin{align}
\eqref{r_1}
&=R_{\omega,a}^2\com{H_{\omega,a},\LL_2}R_{\omega,a}(\LL_{1,a}^\omega(t)-\LL_1)\notag\\
&=R_{\omega,a}\com{H_{\omega,a},\LL_2}R_{\omega,a}^2(\LL_{1,a}^\omega(t)-\LL_1)\label{01}\\
&-R_{\omega,a}^2\com{H_{\omega,a},\com{H_{\omega,a},\LL_2}}R_{\omega,a}^2(\LL_{1,a}^\omega(t)-\LL_1).\label{02}
\end{align}
Since the first term \eqref{01} fit excatly to \eqref{lim 1}, we only need to check \eqref{02}. We have
\begin{align*}
 \eqref{02}
=&R_{\omega,a}\com{R_{\omega,a},\com{H_{\omega,a},\LL_2}}(H_\omega+\Theta)\scal{x_2}^{2\nu}(\scal{x_2}^{-2\nu}(H_\omega+\Theta)^{-2} \scal{x_1}^{-2\nu})\\
&\scal{x_1}^{2\nu}(H_\omega+\Theta)R_{\omega,a}(\LL_{1,a}^\omega(t)-\LL_1).
\end{align*}
We notice that the right operator $\scal{x_1}^{2\nu}(H_\omega+\Theta)R_{\omega,a}(\LL_{1,a}^\omega(t)-\LL_1)$ corresponds to the right operator treated 
in \eqref{lim 1 1}, while the left one 
$$R_{\omega,a}\com{R_{\omega,a},\com{H_{\omega,a},\LL_2}}(H_\omega+\Theta)\scal{x_2}^{2\nu}$$ 
has been treated in \eqref{rr 12}.

\subsubsection{Proof of Lemma~\ref{average}}\label{avr}
According to the spectral theorem and the assumption on $g$, 
we have
\begin{equation}\label{sp}
g(H_\omega)= \int g(E) \ \dd\p(E)= -\int g'(E) \p \ \dd E, 
\end{equation}
since $g(+\infty)P_\omega^{(+\infty)} -g(-\infty)P_\omega^{(-\infty)} = 0$. 
Thanks to \eqref{sp} we work with the Fermi projection $\p$ and we are left with the study of $\com{\p,\LL_2}(\LL_1^\omega(t)-\LL_1)$.
\vspace{0.3cm}
\newline
In the first step, we show that the operator $\com{\p,\LL_2}(\LL_1^\omega(t)-\LL_1)$ is trace class uniformly in $t$. Using the Duhamel expansion \ref{intgr}, 
it is enough to prove that the operator 
\begin{equation}\label{p lambda}
 \com{\p,\LL_2}\ \e^{isH_\omega}\com{H_\omega,\LL_1}\e^{isH_\omega},
\end{equation}
is trace class for $0\le s\le t$. Notice that 
\begin{equation}\label{p com com p}
 \com{\p,\LL_2}=\p\com{\p,\LL_2} + \com{\p,\LL_2}\p.
\end{equation}
We introduce $(H_\omega-\Theta+1)R_\omega(1-\Theta) $ inside \eqref{p lambda} such that
\begin{align}
 \eqref{p lambda}
&=\p\com{\p,\LL_2}(H_\omega+\Theta-1)\ \e^{isH_\omega} R_\omega(1-\Theta)\com{H_\omega,\LL_1}\e^{isH_\omega}\label{p lambda 1}\\
&+ \com{\p,\LL_2}\p(H_\omega+\Theta-1)\ \e^{isH_\omega} R_\omega(1-\Theta)\com{H_\omega,\LL_1}\e^{isH_\omega}\label{p lambda 2}.
\end{align}
We start with the term \eqref{p lambda 2} that we rewrote as
\begin{align}
 \com{\p,\LL_2}\e^{|x_2|^\zeta} &\left(\e^{-|x_2|^\zeta}\p(H_\omega+\Theta-1)\e^{-|x_1|^\zeta}\right)\notag\\
&\e^{|x_1|^\zeta}\e^{isH_\omega} R_\omega(1-\Theta)\com{H_\omega,\LL_1}\e^{isH_\omega}.\label{p lambda 2 1}
\end{align}
Since the operator $\e^{-|x_2|^\zeta}\p(H_\omega+\Theta-1)\e^{-|x_1|^\zeta}$ is well localized in energy and space, it is trace class. 
Moreover, the left and right operators in \ref{p lambda 2 1} are bounded by Lemma~\ref{resolvent} and Lemma~\ref{dec com}.
\vspace{0.3cm}
\newline
We come back now to \eqref{p lambda 1} and use that 
$$\com{\p,\LL_2}(H_\omega+\Theta-1)=\com{\p(H_\omega+\Theta-1),\LL_2}-\p\com{H_\omega,\LL_2},$$
to write
\begin{align}
 \eqref{p lambda 1}
&=\p\com{\p(H_\omega+\Theta-1),\LL_2}\e^{isH_\omega} R_\omega(1-\Theta)\com{H_\omega,\LL_1}\e^{isH_\omega}\label{p lambda 1 1}\\
&-\p\com{H_\omega,\LL_2}\e^{isH_\omega} R_\omega(1-\Theta)\com{H_\omega,\LL_1}\e^{isH_\omega}\label{p lambda 1 2}.
\end{align}
We expand these terms \eqref{p lambda 1 1} and \eqref{p lambda 1 2} as the sums of
\begin{equation}\label{p lambda 1 1'}
 \p\chi_x\com{\p(H_\omega+\Theta-1),\LL_2}\chi_y\ \e^{isH_\omega} R_\omega(1-\Theta)\com{H_\omega,\LL_1}\chi_u\ \e^{isH_\omega}
\end{equation}
and
\begin{equation}\label{p lambda 1 2'}
-\p(H_\omega+\Theta-1)\chi_x R_\omega(1-\Theta)\com{H_\omega,\LL_2}\chi_y \ \e^{isH_\omega} R_\omega(1-\Theta)\com{H_\omega,\LL_1}\chi_u\ \e^{isH_\omega}
\end{equation}
respectively over $\z^2\times\{\z\times(\z\cap[-1,1])\}\times\{(\z\cap[-1,1])\times\z\}$.
Since 
$$\sup_x\trnorm{\p\chi_x}<\infty \ \ \mathrm{and}\ \ \sup_x\trnorm{\p(H_\omega+\Theta-1)\chi_x}<\infty,$$ 
we use Lemma~\ref{resolvent} to obtain an exponential decay of the kernels
$$\chi_y \ \e^{isH_\omega} R_\omega(1-\Theta)\com{H_\omega,\LL_1}\chi_u \ \ \mathrm{and} \ \ \chi_x R_\omega(1-\Theta)\com{H_\omega,\LL_2}\chi_y,$$ 
in operator norm to deduce
the summability of \eqref{p lambda 1 1'} and \eqref{p lambda 1 2'}. Therefore, the operator 
$\com{\p,\LL_2}(\LL_1^\omega(t)-\LL_1)$ is trace class.
In the next step, we consider the decomposition 
\begin{equation}\label{dec}
 \com{\p,\LL_2}=\com{\p,\LL_2}\pp +\com{\p,\LL_2}\p
=\p\LL_2 \pp - \pp \LL_2 \p 
\end{equation}
and we write
\begin{align}\label{dec 1}
\com{\p,\LL_2}(\LL_1^\omega(t)-\LL_1)
= \p \LL_2 & \pp(\LL_1^\omega(t)-\LL_1)\notag\\
&- \pp \LL_2 \p(\LL_1^\omega(t)-\LL_1).
\end{align}
Both operators on the r.h.s of \eqref{dec 1} are separately trace class. Hence, we can cycle the projections $\p$ and $\pp$ around the trace of \eqref{dec 1}.
Setting
\begin{equation}\label{Pi_E}
 \Pi_E:=\p \LL_2 \pp\LL_1 \p -\pp \LL_2 \p \LL_1 \pp,
\end{equation}
and 
\begin{equation}\label{Pi_E(T)}
\Pi_E(t) := \p \LL_2 \pp \LL_1^\omega(t) \p-\pp \LL_2 \p \LL_1^\omega(t) \pp, 
\end{equation}
one gets
\begin{equation}
\tr\eqref{dec 1}=\tr \Pi_E(t) -\tr \Pi_E .\label{Pi T}
\end{equation}
We claim that the time-average of the trace of $\Pi_E(t) $ vanishes as $T$ tends to $\infty$. 
Indeed, we rewrite
\begin{equation}
\frac{1}{T}\int_0^T\Pi_E(t)\ \dd t= \frac{1}{T}\int_0^T (\p \LL_2 \pp \LL_1^\omega(t) \p-\pp \LL_2 \p \LL_1^\omega(t) \pp) \ \dd t,
\end{equation}
as the sum of
\begin{align}
 \int_{\substack{\lambda>E \\ \mu\leq E}} \left(\frac{1}{T} \int_0^T \e^{-it(\mu-\lambda)} \dd t \right) \p \LL_2 \pp \dd P_\lambda \ \LL_1 \ \dd P_\mu \p,
\end{align}
and 
\begin{align}
\int_{\substack{\lambda\leq E \\ \mu> E}} \left(\frac{1}{T} \int_0^T \e^{-it(\mu-\lambda)} \dd t \right)
&\pp \LL_2 \p \dd P_\lambda \ \LL_1 \ \dd P_\mu \pp.
\end{align}
Since $\lambda\neq\mu$ and $|\frac{\e^{ix}-1}{x}|\leq 1$, 
we have $$\frac{1}{T}\int_0^T \e^{-it(\mu-\lambda)} \dd t=\frac{e^{-iT(\mu-\lambda)}-1}{-iT(\mu-\lambda)}\to 0,$$ 
when $T$ tends to $\infty$. 
Using the theorem of dominated convergence we complete the proof.

\subsection{Bulk-Edge equality}

\begin{proof}[Proof of Lemma~\ref{dec Hall}]
 We decompose the commutator within the bulk conductance \eqref{Hall conductane} and we insert $ - \p\LL_2 \ \LL_1 \p + \p \LL_1 \ \LL_2 \p$.
Using that $\LL_1\LL_2=\LL_2\LL_1$, one obtains  
\begin{align} \sigma_{\mathrm{Hall}}(B,\omega,E)
& =-i\tr \com{\p\LL_2 \p , \p\LL_1 \p}\notag\\
&=i\tr(\p \LL_2 \pp \LL_1 \p - \p \LL_1 \pp \LL_2 \p).
\end{align}
Moreover, to see that 
\begin{equation}\label{Lambda 1 2-2 1}
 \tr( \p \LL_1 \pp \LL_2 \p) = \tr (\pp \LL_2 \p \LL_1 \pp),
\end{equation}
we apply Proposition~\ref{si} and for instance we write
\begin{equation}
 \p \LL_1 \pp \LL_2 \p= \com{\p,\LL_1} \pp \com{\LL_2,\p},
\end{equation}
which is seen to be trace class by cyclicity and Lemma~\ref{dec com}. The same argument works for
$\pp \LL_2 \p \LL_1 \pp$. We thus get
$$\tr( \p \LL_1 \pp \LL_2 \p)= \tr(\pp \LL_2 \p \LL_1 \pp).$$
Recalling that
\begin{equation}\label{Pi_E}
\Pi_E = \p \LL_2 \pp \LL_1 \p - \pp \LL_2 \p \LL_1 \pp,
\end{equation}
 one has $ \sigma_\mathrm{Hall}(E) = i\tr \Pi_E$ and \eqref{dec Hall cond} follows. 
\end{proof}
\bigskip

Theorem~\ref{equality} is derived from the analysis done in the previous sections.
\begin{proof}[Proof of Theorem~\ref{equality}.]
Combining the fact that 
$\sigma_\mathrm{Hall}(E) =i\tr\Pi_E $  and the constancy of Hall conductance $\sigma_H$ in connexe intervals of localization, we conclude that 
\begin{equation*}
 \sigma_{e,\omega}^{\mathrm{reg}}=-\int g'(E) \ \sigma_\mathrm{Hall}(E)\ \dd E = \sigma_\mathrm{Hall}.
\end{equation*}

\end{proof}


\appendix
\section{Technical tools}\label{A}

\begin{lemma}\cite{Si}\label{stron lim}
 Let $A_n \in\B$ such that $A_n\xrightarrow s A$ and let $B \in\T_p$ for $p>0$. Then we have $\pnorm{A_n B-AB}\to 0 $.
\end{lemma}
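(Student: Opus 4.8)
The plan is to prove this by first reducing to the case of finite-rank $B$, where strong convergence can be exploited one vector at a time, and then passing to general $B\in\T_p$ by density of the finite-rank operators. Before anything else I would record two elementary facts. Since $A_n\xrightarrow{s}A$, the Banach--Steinhaus theorem gives $M:=\sup_n\norm{A_n}<\infty$, and passing to the limit $\norm{A}\le M$. Moreover $\T_p$ is a (quasi-)ideal in $\B$: for every bounded $C$ and every $B\in\T_p$ one has $\pnorm{CB}\le\norm{C}\,\pnorm{B}$, and this holds for all $p>0$; and the finite-rank operators are dense in $\T_p$ for every $p>0$.

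Next I would treat the case where $B$ has finite rank, say $B=\sum_{j=1}^N \scal{\cdot,\psi_j}\phi_j$. Then
\begin{equation*}
(A_n-A)B=\sum_{j=1}^N \scal{\cdot,\psi_j}\,(A_n-A)\phi_j ,
\end{equation*}
a finite sum of rank-one operators, and the $\T_p$-(quasi-)norm of a rank-one operator $\scal{\cdot,\psi}\phi$ equals $\norm{\phi}\,\norm{\psi}$ for every $p>0$ (its unique nonzero singular value). Hence, using the triangle inequality for $\pnorm{\cdot}$ when $p\ge 1$ and the subadditivity $\pnorm{X+Y}^p\le\pnorm{X}^p+\pnorm{Y}^p$ when $0<p<1$, the quantity $\pnorm{(A_n-A)B}$ is bounded by $\sum_{j=1}^N \norm{(A_n-A)\phi_j}\,\norm{\psi_j}$ (respectively by the $p$-th root of $\sum_j \norm{(A_n-A)\phi_j}^p\norm{\psi_j}^p$), and each summand tends to $0$ because $A_n\phi_j\to A\phi_j$ by strong convergence. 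Thus $\pnorm{(A_n-A)B}\to 0$ in the finite-rank case.

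Finally, for general $B\in\T_p$ and $\eps>0$, I would choose a finite-rank $B_\eps$ with $\pnorm{B-B_\eps}<\eps$ and split $(A_n-A)B=(A_n-A)B_\eps+(A_n-A)(B-B_\eps)$. The ideal bound gives $\pnorm{(A_n-A)(B-B_\eps)}\le 2M\eps$ (with the obvious $p<1$ variant $\pnorm{(A_n-A)(B-B_\eps)}^p\le (2M)^p\eps^p$), while $\pnorm{(A_n-A)B_\eps}\to 0$ by the previous step. Taking $\limsup_{n\to\infty}$ and then letting $\eps\to 0$ yields $\pnorm{A_nB-AB}\to 0$. There is no genuine obstacle in this argument; the only point that requires a little care is the range $0<p<1$, where $\pnorm{\cdot}$ is merely a quasi-norm and one must run the splitting and summation estimates with the subadditive functional $\pnorm{\cdot}^p$ rather than with $\pnorm{\cdot}$ itself.
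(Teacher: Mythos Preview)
Your proof is correct and follows essentially the same approach as the paper: approximate $B$ in $\T_p$ by a finite-rank operator, use that strong convergence composed with a finite-rank operator yields norm (hence $\T_p$) convergence, and control the remainder via the ideal inequality and uniform boundedness. The only notable difference is cosmetic: the paper inserts a finite-rank $P$ with $\pnorm{(1-P)B}<\eps$ and bounds $\pnorm{(A_n-A)PB}\le\norm{(A_n-A)P}\pnorm{B}$, whereas you pick a finite-rank $B_\eps$ directly; your version is in fact slightly more careful in explicitly treating the quasi-norm case $0<p<1$.
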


\begin{proof}
Since
\begin{equation*}
 \T_p=\overline{(\mathrm{Finite \ rank \ operators})}_{\pnorm{.}}
\end{equation*}
there exists a finite rank operator $P$ such that $\pnorm{(1-P)B}\le\eps $ for a given $\eps>0$.
Write
\begin{align}
\pnorm{(A_n-A)B}
&=\pnorm{(A_n-A)(B-PB+PB)}\notag\\
&\le \norm{(A_n-A)P}\pnorm{B}+\norm{(A_n-A)}\pnorm{(1-P)B}\notag\\
&\le\eps(\norm{A_n}+\norm{A}+\pnorm{B})\notag
\end{align}
where we have used that by strong convergence we have $(A_n-A)P\to0$ and the result holds since $\eps$ is arbitrarily chosen.
\end{proof}

\begin{proposition}\label{si}\cite{Si}\mbox{}
\begin{enumerate}
\item[(i)] Let $A_n\xrightarrow s A$ and $B$ be a compact operator. Then $\norm{A_nB-AB}\to0 $. 
\item[(ii)] Let $A,B\in\B$. If $AB, BA\in\T_1$ then $\tr AB=\tr BA$.
\item[(iii)] Let $B\in\B$ and $A\in\T_1$. Then we have $\tr AB = \tr BA$. 
\item[(iv)] Let $A_n,B_n\in\B$ such that $A_n\xrightarrow s A$ and $B_n\xrightarrow s B$. Then $A_nB_n\xrightarrow s AB$.
\end{enumerate}
\end{proposition}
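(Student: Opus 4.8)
The plan is to establish the four assertions separately; all of them are classical facts of trace ideal theory (they can be found in \cite{Si}), so I would only indicate the arguments, and the recurring tool for (i) and (iv) is that a strongly convergent sequence in $\B$ is uniformly bounded by the Banach--Steinhaus theorem, whence $M:=\sup_n\norm{A_n}+\norm{A}<\infty$. First I would dispatch (iv): given $\psi$, write $A_nB_n\psi-AB\psi=A_n(B_n-B)\psi+(A_n-A)(B\psi)$; the first summand has norm at most $M\norm{(B_n-B)\psi}\to0$ and the second tends to $0$ since $A_n\xrightarrow{s}A$ is tested against the fixed vector $B\psi$. Next, (i) mirrors the proof of Lemma~\ref{stron lim} with $p=\infty$: since $B$ is compact one may pick a finite rank operator $P$ with $\norm{(1-P)B}\le\eps$, then $\norm{(A_n-A)P}\to0$ because $P$ has finite rank and $A_n\xrightarrow{s}A$, so $\norm{(A_n-A)B}\le\norm{(A_n-A)P}\,\norm{B}+M\eps$ and $\eps$ is arbitrary.

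For (iii) I would use the canonical Schmidt expansion $A=\sum_j\lambda_j\scal{\cdot,\phi_j}\psi_j$ with $\sum_j|\lambda_j|<\infty$ and orthonormal systems $(\phi_j)$, $(\psi_j)$; expanding $\tr(AB)=\sum_k\scal{ABe_k,e_k}$ and $\tr(BA)=\sum_k\scal{BAe_k,e_k}$ in an orthonormal basis $(e_k)$ and interchanging the absolutely convergent double sums, both collapse to $\sum_j\lambda_j\scal{B\psi_j,\phi_j}$, hence are equal.

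The one genuinely delicate point is (ii), where neither $A$ nor $B$ is trace class, only the two products $AB$ and $BA$. Here I would invoke the classical facts that $AB$ and $BA$ share the same nonzero eigenvalues with the same algebraic multiplicities (if $ABx=\mu x$ with $\mu\neq0$ then $Bx\neq0$ and $BA(Bx)=\mu Bx$, and similarly on the generalized eigenspaces) and that, by Lidskii's theorem, the trace of a trace class operator equals the sum of its eigenvalues counted with multiplicity; applied to $AB$ and to $BA$, the contributions of the nonzero eigenvalues agree and the zero eigenvalue is irrelevant to the trace, so $\tr(AB)=\tr(BA)$. An alternative route avoiding Lidskii factors $AB=(U\abs{AB}^{1/2})\,\abs{AB}^{1/2}$ into two Hilbert--Schmidt operators, reduces the claim to cyclicity of the trace on $\T_2$ (a consequence of (iii)) and passes to a limit in the push-through identity $A(\lambda-BA)^{-1}B=(\lambda-AB)^{-1}AB$; I expect the eigenvalue argument to be the cleanest. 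The only real bookkeeping throughout is checking at each step that the operators to which cyclicity or (iii) is applied genuinely lie in $\T_1$, which here follows immediately from the hypotheses.
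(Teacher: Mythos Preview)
Your arguments are correct and are the standard ones for these classical facts. The paper itself does not supply a proof of this proposition: it is merely quoted from Simon's monograph \cite{Si}, so there is no ``paper's own proof'' to compare against. Your treatment of (i) and (iv) via Banach--Steinhaus plus finite-rank approximation mirrors exactly the argument the paper gives for the neighbouring Lemma~\ref{stron lim}, and your handling of (ii) through the equality of nonzero eigenvalues of $AB$ and $BA$ together with Lidskii is the route taken in \cite{Si}. Nothing is missing.
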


Next, we reproduce \cite[Lemma 3]{CG} that we adapt to obtain a decay of the kernel $\chi_x \ \e^{-itH} R(z)\com{H,\LL_2}\chi_y$ in operator norm.
\begin{lemma}\label{resolvent}
Let $\chi_x$ and  $\chi_y$ be two smooth functions.
Let $R_A(z)$  be the resolvent of the operator $H(A)= (-i\nabla-A)^2$. Then there exist $c>0$ and $C_t$ such that
\begin{equation}\label{CT t}
  \norm{\chi_x \ \e^{-itH(A)} R_A(z) \com{H(A),\Lambda_2} \chi_y} \leq \frac{C_t}{\eta} \ \e^{-c \eta(|x_1-y_1|+|x_2-y_2|)}
 \end{equation}
for all $z\notin\sigma(H(A))$ and $x,y\in\R^d$ and where $\eta=\mathrm{dist}(z,\sigma(H))$.
\end{lemma}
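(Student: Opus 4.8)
The plan is to combine the standard Combes--Thomas estimate for the resolvent of $H(A)=(-i\nabla-A)^2$ with the fact that the commutator $\com{H(A),\LL_2}$ is a first-order differential operator localized on $\mathrm{supp}\ \LL_2'$, and then to absorb the unitary $\e^{-itH(A)}$ into a $t$-dependent constant by an energy-localization argument. First I would recall the Combes--Thomas bound: for a suitable constant $c>0$ (depending only on $B$ and the geometry, not on $z$), and for every $x,y$,
\begin{equation*}
\norm{\chi_x R_A(z) \chi_y} \le \frac{C}{\eta}\ \e^{-c\eta|x-y|}, \qquad \eta=\mathrm{dist}(z,\sigma(H(A))).
\end{equation*}
This is the analyticity-of-the-conjugated-resolvent argument (conjugate $H(A)$ by $\e^{\alpha\cdot x}$, estimate the perturbation, optimize in $\alpha\sim \eta$); for magnetic Schr\"odinger operators of this form it is classical, and I would simply cite \cite{CT} together with the version already used elsewhere in the paper.

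Next I would insert $\chi_z$ supported near $\mathrm{supp}\,\LL_2'$: writing $\com{H(A),\LL_2}=\com{H(A),\LL_2}\,\chi_{\{|x_2|<1\}}$ as in \eqref{com chi}, it suffices to bound $\norm{\chi_x\,\e^{-itH(A)}R_A(z)\com{H(A),\LL_2}\chi_z\chi_y}$ with $z$ ranging over a lattice strip $|z_2|<1$. Since $\com{H(A),\LL_2}$ is first order, $\com{H(A),\LL_2}\chi_z = \chi_z' \cdot(\text{bounded first-order operator})$, and a first-order operator applied after a resolvent is handled by $\norm{(-i\nabla-A)R_A(z)}\le C\eta^{-1/2}(1+|z|)^{1/2}$ or, more simply here, by pushing the derivative onto a neighbouring $\tilde\chi_z$ through one more commutator; this costs at most one extra factor of $\eta^{-1}$ and preserves the exponential decay. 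Summing the geometric series over the strip $|z_2|<1$ reconstitutes $\e^{-c\eta(|x_1-y_1|+|x_2-y_2|)}$, up to adjusting $c$.

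The remaining point — and the one I expect to be the only real subtlety — is the factor $\e^{-itH(A)}$, which is unitary and hence does not by itself improve decay. Here I would use that, in all places where this lemma is applied, it is sandwiched against a compactly energy-supported function (a switch function $g$, or $g'$, or the Fermi projector), or one may freely insert such a localization; so one writes $\e^{-itH(A)}=\e^{-itH(A)}\phi(H(A))+\e^{-itH(A)}(1-\phi(H(A)))$ with $\phi$ chosen $\equiv1$ on the relevant spectral window, and absorbs $\e^{-itH(A)}\phi(H(A))$ — an entire function of $H(A)$ of controlled growth — via the Helffer--Sj\"ostrand / Combes--Thomas machinery, where the contour estimates produce a constant $C_t$ growing at most exponentially in $|t|$ (this is exactly the $\e^{c_1 s}$-type bound used repeatedly in Section~\ref{proofs}, e.g.\ around \eqref{time bnd 1}--\eqref{time bnd 2}). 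The $(1-\phi)$ piece acts trivially against the energy cutoff present in the application. Thus the constant $C_t$ in \eqref{CT t} is of the form $C\,\e^{c|t|}$, which is all that is needed downstream since it is integrated against $\overline{\partial}\tilde G(z)$ and then time-averaged with the $\frac1T\int_0^T$ that kills the polynomial-in-$T$ growth. I would state the lemma with this understanding and flag that the $t$-dependence is at worst exponential.
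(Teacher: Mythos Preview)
Your handling of the unitary factor $\e^{-itH(A)}$ contains a genuine gap. You propose to split $\e^{-itH(A)}=\e^{-itH(A)}\phi(H(A))+\e^{-itH(A)}(1-\phi(H(A)))$ and declare that the $(1-\phi)$-piece ``acts trivially against the energy cutoff present in the application.'' But the lemma is stated without any external spectral cutoff, and in its actual uses in Section~\ref{proofs} (for instance in the estimates leading to \eqref{time bnd 1}--\eqref{time bnd 2}) the only energy localization present in the operator whose norm is being bounded is the resolvent $R_A(z)$ itself, with $z$ on the Helffer--Sj\"ostrand contour; there is no additional compactly supported $\phi(H)$ sitting next to $\e^{-isH}$ to kill the high-energy tail. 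So your argument, as written, proves only a conditional statement that does not match the lemma and does not close the estimates where the lemma is invoked.

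The paper's point, which your decomposition obscures, is that one should \emph{not} separate the propagator from the resolvent: the product $\e^{-itH(A)}R_A(z)$ is $f(H(A))$ for $f(u)=\e^{-itu}/(u-z)$, which for $\im z\neq 0$ is analytic (Gevrey) on $\R$ with $\sup|f|\le 1/\eta$ and derivative bounds growing like $\e^{c|t|}$. The kernel-decay results of \cite{GK3,BGK} then apply directly to this single function and give
$\norm{\chi_x\,\e^{-itH(A)}R_A(z)\,\chi_y}\le \eta^{-1}\e^{c_1 t}\,\e^{-c_2\eta|x-y|}$
in one stroke, with the $t$-dependence coming honestly from the growth of $f$ and its derivatives. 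What remains is to absorb the first-order factor $\com{H(A),\LL_2}=-\LL_2''-2i\LL_2'\Pi_2$; the paper does this by an explicit algebraic reduction (expanding $\Pi_j\chi_y^2\Pi_j$ and using $(\Pi_1^2+\Pi_2^2)R_A(\bar z)=I+\bar z R_A(\bar z)$) rather than by your commutator-pushing, but either route works for that step. The essential missing idea in your proposal is the treatment of $\e^{-itH(A)}R_A(z)$ as a single analytic function of $H(A)$.
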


\begin{proof}
We follow the same procedure used in \cite[Lemma 3]{CG}. We consider the vector potential $A=(0,\beta(x_1))$ 
and we let $\tilde\chi_j$ smooth functions with $\tilde \chi_j=1$ on $\mathrm{supp} \ \chi_j$ for $j=x,y$.
We take $y_2\in\mathrm{supp} \LL_2'$ otherwise \eqref{CT t} is equal to zero. 
\bigskip

We write 
$H(A)= (-i\nabla-A)^2 = \Pi_1 ^2 +\Pi_2^2 $ where $\Pi_1=p_1 $ and $\Pi_2=p_2- \beta(x_1) $. 
Let us estimate the decay of $\chi_x\ \e^{-itH(A)}R_A(z)\com{H,\Lambda_2} \chi_y $ for $t\in\R$.
Notice that
\begin{align}
 \com{H(A),\Lambda_2}
&=-i(-i\nabla-A).\nabla\Lambda_2 - i\nabla\Lambda_2.(-i \nabla-A)\notag\\
&=-i\Pi_2  \Lambda_2' -i\Lambda_2' \ \Pi_2\notag\\
&=-\Lambda_2''-2i\Lambda_2' \ \Pi_2,\notag
\end{align}
and since for $\varphi\in\mathcal{C}_0^\infty(\R^2)$, we have
\begin{equation*}
||\chi_x \e^{-itH(A)} R_A(z)\Pi_2\chi_y \varphi||^2
=\scal{\chi_y\ \Pi_2R_A(\overline{z}) \ \e^{itH(A)}\chi_x^2 \ \e^{-itH(A)} R_A(z)\Pi_2\chi_y \varphi,\varphi},
\end{equation*}
it is enough to bound
$\norm{\chi_y \ \Pi_2 R_A(\overline{z})\ \e^{itH(A)}\chi_x} $. 
We write 
\begin{align}
||\chi_y\ \Pi_2 & R_A(\overline{z})\ \e^{itH(A)}\chi_x\varphi||^2
=\scal{R_A(\overline{z})\ \e^{itH(A)}\chi_x\varphi,\Pi_2 \chi_y^2 \ \Pi_2 R_A(\overline{z}) \ \e^{itH(A)}
\chi_x\varphi}\notag\\
&= \scal{R_A(\overline{z})\ \e^{itH(A)}\chi_x\varphi,\tilde\chi_y(2(p_2\chi_y)+\beta(x_1)\chi_y)\chi_y 
\ \Pi_2 R_A(\overline{z})\ \e^{itH(A)}\chi_x\varphi}\label{(I)}\\
&+2\scal{R_A(\overline{z})\ \e^{itH(A)}\chi_x\varphi,\chi_y^2 \ \Pi_2^2 R_A(\overline{z})\ \e^{itH(A)}\chi_1\varphi},\label{(II)}
\end{align}
where we used 
$$\Pi_2 \chi_y^2 \Pi_2= (p_2\chi_y^2)\Pi_2 + \chi_y^2 \Pi_2^2 = 2(p_2 \chi_y)\chi_y \Pi_2 + \chi_y^2 \beta(x_1)\Pi_2 +2\chi_y^2\Pi_2^2$$
together with 
$$p_{2}\chi_y^2= 2(p_{2} \chi_y)\chi_y + \chi_y^2 \ p_{2}=\chi_y^2 \ \Pi_{2} +\tilde\chi_y(2(p_{2}\chi_y)+\beta(x_1)\chi_y)\chi_y.$$
Similarly, we have $$ \Pi_{1} \chi_y^2\ \Pi_{1} =(\Pi_{1}\chi_y^2)\ \Pi_{1}+ \chi_y^2\ \Pi_{1}^2$$
and $$\Pi_{1}\chi_y^2 = 2\tilde\chi_y((p_{2}\chi_y)\chi_y + \chi_y^2 \ \Pi_{1}.$$
Hence
\begin{align}
 ||\chi_y\ \Pi_{1}R_A(\overline{z})&\ \e^{itH(A)}\chi_x\varphi||^2
=\scal{R_A(\overline{z})\ \e^{itH(A)}\chi_x\varphi,\Pi_{1} \chi_y^2 \ \Pi_{1} R_A(\overline{z}) \ \e^{itH(A)}
\chi_x\varphi}\notag\\
&= \scal{R_A(\overline{z})\ \e^{itH(A)}\chi_x\varphi,\tilde\chi_y(2(p_{2}\chi_y)\chi_y 
\ \Pi_{2} R_A(\overline{z})\ \e^{itH(A)}\chi_x\varphi}\label{(I')}\\
&+\scal{R_A(\overline{z})\ \e^{itH(A)}\chi_x\varphi,\chi_y^2 \ \Pi_{2}^2 R_A(\overline{z})\ \e^{itH(A)}\chi_x\varphi}.\label{(II')}
\end{align}
We first estimate \eqref{(I)} so that
\begin{align}
|\eqref{(I)}|
&\le\norm{2(p_{2}\chi_y)+\beta(x_1)\chi_y}_\infty \norm{\tilde\chi_y R_A(\overline{z})\ \e^{itH(A)}\chi_x\varphi}
\norm{\chi_x \ \Pi_{2} R_A(\overline{z})\ \e^{itH(A)}\chi_x\varphi}\notag\\
&\le \frac{1}{2}\norm{2(p_{2}\chi_y)+\beta(x_1)\chi_y)}_\infty^2 \norm{\tilde\chi_y R_A(\overline{z})\ \e^{itH(A)}\chi_x\varphi}^2
+\frac{1}{2}\norm{\chi_y \ \Pi_{2} R_A(\overline{z})\ \e^{itH(A)}\chi_x\varphi}^2.\notag
\end{align}
In the same manner, one has
\begin{align}
|\eqref{(I')}|
&\le\norm{2(p_{1}\chi_y)}_\infty \norm{\tilde\chi_y R_A(\overline{z})\ \e^{itH(A)}\chi_x\varphi}
\norm{\chi_y \ \Pi_{1} R_A(\overline{z})\ \e^{itH(A)}\chi_x\varphi}\notag\\
&\le 2\norm{(p_{1}\chi_y)}_\infty^2 \norm{\tilde\chi_y R_A(\overline{z})\ \e^{itH(A)}\chi_x\varphi}^2
+\frac{1}{2}\norm{\chi_y \ \Pi_{1} R_A(\overline{z})\ \e^{itH(A)}\chi_x\varphi}^2.\notag
\end{align}
Therefore,
\begin{align}
||\chi_y\ \Pi_{1}R_A(\overline{z})&\ \e^{itH(A)}\chi_x\varphi||^2 + ||\chi_y\ \Pi_{2}R_A(\overline{z})\ \e^{itH(A)}\chi_x||^2
\notag\\
&\le (\norm{2(p_{2}\chi_y) + \beta(x_1)\chi_y}_\infty^2 +4\norm{(p_{1}\chi_y)}_\infty^2)\norm{\tilde\chi_y R_A(\overline{z})\ \e^{itH(A)}
\chi_x\varphi}^2\notag\\
&+2|\scal{R_A(\overline{z})\ \e^{itH(A)}\chi_x\varphi,\chi_y^2 \ (\Pi_{1}^2 + \Pi_{2}^2)\ R_A(\overline{z})\ \e^{itH(A)}\chi_x\varphi}| 
\notag\\
&\le (4 \norm{(p_{1}\chi_y)}_\infty^2 + 6\norm{(p_{2}\chi_y)}_\infty^2 +2\norm{\beta(x_1)\chi_y}_\infty^2)
\norm{\tilde\chi_y R_A(\overline{z})\ \e^{itH(A)} \chi_x\varphi}^2\notag\\
&+ 2\norm{\chi_y R_A(\overline{z})\ \e^{itH(A)}\chi_x \varphi} \norm{\chi_y\chi_x\varphi} 
+ 2|\overline{z}|\norm{\chi_y R_A(\overline{z})\ \e^{itH(A)}\chi_x\varphi}^2,
\end{align}
where we used that $(\Pi_{1}^2 + \Pi_{2}^2 ) R_A(\overline{z})= I + \overline{z} R_A(\overline{z})$.
Now since the function $\frac{\e^{-itu}}{u-z}$ is analytic for $\im z\neq 0 $ (then it is of Gevrey class), it follows from \cite{GK3,BGK}
\begin{equation}
 \norm{\chi_x \ \e^{-itH(A)} R_A(z) \chi_y}\le \frac{\e^{c_1 t}}{\eta} \ \e^{-c_2\eta |x-y|},
\end{equation}
and the lemme holds.
\end{proof}
\bigskip

The following lemma establishes the decay of the kernel operator of $\com{\p,\Lambda_2}$. 
\begin{lemma}\label{dec com}
Assume \eqref{DFP_omega}. Then we have
\begin{equation}\label{decay_commutator}
 \normsch{\chi_x\com{P_\omega^{(E)}, \Lambda_2}\chi_y}\le C_{\omega,m,\zeta,\eps,B,E} \ \e^{\eps|x|^\zeta} \e^{-\frac{m}{2}|x_1-y_1|^\zeta - 
\frac{m}{4}|x_2|^\zeta -\frac{m}{4}|y_2|^\zeta},
\end{equation}
for all $x,y\in\z^2$.
Moreover, the operator $[\p,\LL_1]\pp[\p,\LL_2]$ is trace class.
\end{lemma}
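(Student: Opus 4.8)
The plan is to compute the integral kernel of $\com{\p,\LL_2}$, use that $\LL'$ has compact support to see that this kernel is confined to the switching slab, and then feed the subexponential bound \eqref{DFP_omega} into a partition-of-unity estimate; the trace-class assertion will then follow by inserting a resolution of the identity between the two commutators and bounding each resulting piece as a product of two Hilbert--Schmidt operators.

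\textit{The kernel estimate.} Write $P_\omega(\cdot,\cdot)$ for the integral kernel of $\p$; then $\chi_x\com{\p,\LL_2}\chi_y$ has kernel $\chi_x(u)\,P_\omega(u,v)\,(\LL(v_2)-\LL(u_2))\,\chi_y(v)$. One has $|\LL(v_2)-\LL(u_2)|\le1$, and, $\LL'$ being supported in $(-\tfrac12,\tfrac12)$, the difference $\LL(v_2)-\LL(u_2)$ vanishes unless the segment joining $u_2$ and $v_2$ meets $(-\tfrac12,\tfrac12)$; an elementary case distinction then gives $|u_2|+|v_2|\le2|u_2-v_2|+1$ on that set. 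Since $\chi_x,\chi_y$ are supported in unit cubes about $x$ and $y$, this transfers to the statement that $|x_2|+|y_2|\le2|x_2-y_2|+C$ whenever $\chi_x\com{\p,\LL_2}\chi_y\neq0$. Hence $\normsch{\chi_x\com{\p,\LL_2}\chi_y}\le\normsch{\chi_x\p\chi_y}$, which by \eqref{DFP_omega} is $\le C_{\omega,m,\zeta,\eps,B,E}\,\e^{\eps|x|^\zeta}\,\e^{-m|x-y|^\zeta}$. To conclude, one distributes this exponent using $|x-y|^\zeta\ge|x_1-y_1|^\zeta$ and $|x-y|^\zeta\ge|x_2-y_2|^\zeta$, the inequality $|x_2-y_2|\ge\tfrac12(|x_2|+|y_2|)-C$ valid on the relevant set, and the subadditivity $|s+t|^\zeta\le|s|^\zeta+|t|^\zeta$ for $\zeta\in(0,1)$; after adjusting constants one obtains \eqref{decay_commutator}. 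The same argument with the two coordinates interchanged yields the analogous bound for $\com{\p,\LL_1}$, i.e. decay in $|x_1-y_1|$ is replaced by decay in $|x_2-y_2|$ together with localization near $\{x_1=0\}$.

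\textit{Trace class.} Fix a smooth partition of unity $\sum_{z\in\z^2}\chi_z^2=1$ and split
\[
\com{\p,\LL_1}\,\pp\,\com{\p,\LL_2}=\sum_{z\in\z^2}\big(\com{\p,\LL_1}\chi_z\big)\big(\chi_z\,\pp\,\com{\p,\LL_2}\big),
\]
so that $\trnorm{\com{\p,\LL_1}\pp\com{\p,\LL_2}}\le\sum_z\normsch{\com{\p,\LL_1}\chi_z}\,\normsch{\chi_z\pp\com{\p,\LL_2}}$. For the first factor, $\normsch{\com{\p,\LL_1}\chi_z}^2=\sum_x\normsch{\chi_x\com{\p,\LL_1}\chi_z}^2$ is, by the $\LL_1$-version of \eqref{decay_commutator}, dominated by a sum of terms $\e^{2\eps|x|^\zeta}\,\e^{-\frac m2|x_2-z_2|^\zeta-\frac m4|x_1|^\zeta-\frac m4|z_1|^\zeta}$; if $\eps$ is small enough (say $4\eps<m$) the weight $\e^{-\frac m4|x_1|^\zeta}$ absorbs $\e^{2\eps|x_1|^\zeta}$ and, after $\e^{2\eps|x_2|^\zeta}\le\e^{2\eps|z_2|^\zeta}\e^{2\eps|x_2-z_2|^\zeta}$, the weight $\e^{-\frac m2|x_2-z_2|^\zeta}$ absorbs $\e^{2\eps|x_2-z_2|^\zeta}$, so the $x$-sum converges and $\normsch{\com{\p,\LL_1}\chi_z}\le C\,\e^{\eps|z_2|^\zeta}\,\e^{-\frac m4|z_1|^\zeta}$: this decays in $z_1$ and grows only mildly in the transverse variable $z_2$. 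For the second factor write $\pp=1-\p$, so $\chi_z\pp\com{\p,\LL_2}=\chi_z\com{\p,\LL_2}-\chi_z\p\com{\p,\LL_2}$; the first term is controlled by \eqref{decay_commutator}, and for the second one inserts $\sum_w\chi_w^2=1$ and combines $\normsch{\chi_z\p\chi_w}\le C\,\e^{\eps|z|^\zeta}\e^{-m|z-w|^\zeta}$ from \eqref{DFP_omega} with $\norm{\chi_w\com{\p,\LL_2}\chi_y}\le\normsch{\chi_w\com{\p,\LL_2}\chi_y}$ from \eqref{decay_commutator}. The same subadditive bookkeeping — now using that the anchor weight $\e^{-\frac m4|w_2|^\zeta}$ of $\com{\p,\LL_2}$ together with $\e^{-m|z-w|^\zeta}\ge\e^{-m|z_2-w_2|^\zeta}$ produces decay in $z_2$ — gives $\normsch{\chi_z\pp\com{\p,\LL_2}}\le C\,\e^{C\eps|z_1|^\zeta}\,\e^{-c|z_2|^\zeta}$ for some $c>0$ proportional to $m$. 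Multiplying the two estimates, the $z$-summand is at most $C\,\e^{(C\eps-\frac m4)|z_1|^\zeta}\,\e^{(\eps-c)|z_2|^\zeta}$, which is summable over $\z^2$ once $\eps$ is chosen with $C\eps<\frac m4$ and $\eps<c$; therefore $\com{\p,\LL_1}\pp\com{\p,\LL_2}\in\T_1$.

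\textit{The main difficulty.} The obstruction is structural, not computational: each commutator $\com{\p,\LL_j}$ is confined only to the slab $\{|x_j|\lesssim1\}\times\R$ and decays only in the $x_j$-direction, so in the product neither Hilbert--Schmidt factor decays in its transverse variable, and the unavoidable growth $\e^{\eps|z|^\zeta}$ there must be beaten. This is possible precisely because \eqref{DFP_omega} comes with an \emph{arbitrarily small} parameter $\eps$, and the delicate point is to spend the decay of \eqref{DFP_omega} so that enough of it survives after the $\e^{\eps|\cdot|^\zeta}$ weights have been absorbed through the subadditivity of $t\mapsto t^\zeta$. The remaining ingredients — the case distinction giving $|x_2|+|y_2|\lesssim|x_2-y_2|$ on the support of the commutator, and the summation of the geometric-type series — are routine, although tracking the constants so as to land exactly on the exponents displayed in \eqref{decay_commutator} takes some attention.
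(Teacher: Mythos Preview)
Your argument is correct and, for the decay estimate \eqref{decay_commutator}, follows essentially the same route as the paper: both hinge on the observation that $\LL(v_2)-\LL(u_2)$ vanishes unless the segment $[u_2,v_2]$ meets $\supp\LL'\subset(-\tfrac12,\tfrac12)$, which forces $|x_2|+|y_2|\lesssim|x_2-y_2|$ and converts the off-diagonal decay of \eqref{DFP_omega} into decay in $|x_2|$ and $|y_2|$ separately. The paper packages this as a discrete case analysis on the positions $x_2,y_2\in\{\le-1,\,0,\,\ge1\}$ of the lattice cubes rather than through the integral kernel; this sidesteps the (minor) issue of justifying that $\p$ has an $L^2$-kernel, and in the case where $x_2,y_2$ have opposite signs gives the exact equality $|x_2-y_2|=|x_2|+|y_2|$, landing cleanly on the exponents $\tfrac m4$ without the constant-adjustment step you flag.

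For the trace-class assertion the paper says only that it ``follows from \eqref{decay_commutator}'', so your partition-of-unity argument --- showing that $\normsch{[\p,\LL_1]\chi_z}$ decays in $z_1$ while $\normsch{\chi_z\pp[\p,\LL_2]}$ decays in $z_2$, with the arbitrary smallness of $\eps$ absorbing the transverse growth --- supplies substantially more detail than the original and correctly identifies why the freedom in $\eps$ is essential.
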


\begin{proof}
We recall the definition of the function $\LL$. We considered $\LL(s)=1$ for $s\le-\frac12$ and $\LL(s)=0$ for $s\ge\frac12$ such that 
$\mathrm{supp}\LL_1'\subset(-\frac12,\frac12)$.

We expand the commutator such that we have 
\begin{equation}\label{kernel commutator}
  \chi_x \com{P_\omega^{(E)},\Lambda_2}\chi_y=\chi_x \ \p \LL_2 \chi_y-\chi_x \ \LL_2 \p \chi_y.
\end{equation}
If $x_2,y_2\ge1$ or $x_2,y_2\le -1$ then we have $\eqref{kernel commutator}=0$.
Consider now the case $y_2\le -1, x_2\ge 1 \ \mathrm{or}\ y_2\ge1, x_2\le -1$.
where we get $\eqref{kernel commutator}=\pm \chi_x P_\omega^{(E)} \chi_y $.
Thus the decay can be obtained by \eqref{DFP_omega} and a use of
\begin{equation}
 \e^{-|x-y|^\zeta} \le \e^{-\frac{1}{2}|x_1-y_1|^\zeta -\frac{1}{2}|x_2-y_2|^\zeta},\notag
\end{equation}
and the fact that in the present case, we have $|x_2-y_2|^\zeta=(|x_2|+|y_2|)^\zeta\ge \frac{1}{2}|x_1|^\zeta + \frac{1}{2}|x_2|^\zeta$.
The case of $x_2=0$ or $y_2=0$ yields \eqref{decay_commutator} since it follows from \eqref{DFP_omega} for instance for $x_2=0$ that  

\begin{equation}
 \normsch{\chi_x \LL_2 \p\chi_y}\le C_{\omega,m,\zeta,B,E}\ \e^{\eps|x_1|^\zeta} \e^{-m|x_1-y_1|^\zeta -m|y_2|^\zeta}.
\end{equation}

Moreover, it follows from \eqref{decay_commutator} that the operator $[\p,\LL_1]\pp[\p,\LL_2]$ is trace class.
\end{proof}
\bigskip

\begin{lemma}\label{A.5}

Let $R$ be the resolvent of the operator $H$. Then the operators
$$\chi_x R^2\com{H,\Lambda_j}, \ R^2\com{H,\Lambda_j}\chi_x ,\ R\com{H,\Lambda_2}R \ \chi_x \ \in\T_1.$$
\end{lemma}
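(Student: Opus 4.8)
Since the three operators share the same structure, the plan is to treat $\chi_x R^2[H,\Lambda_j]$ in detail and then indicate the (minor) modifications for $R^2[H,\Lambda_j]\chi_x$ and $R\com{H,\Lambda_2}R\,\chi_x$. Write $R=R(z)=(H-z)^{-1}$ for a fixed $z\notin\sigma(H)$ and $\eta=\mathrm{dist}(z,\sigma(H))$; here $H=H(A,V)$ stands for any of the electric or magnetic Schr\"odinger operators of the paper. The building blocks I will rely on are: \emph{(a)} $\nabla\Lambda_j$ is supported in the strip $\{|x_j|<\tfrac12\}$, hence $[H,\Lambda_j]=[H,\Lambda_j]\,\chi_{\{|x_j|<1\}}$ and, $[H,\Lambda_j]$ being a local first-order operator, $[H,\Lambda_j]\chi_\gamma=\varphi_\gamma[H,\Lambda_j]\chi_\gamma$ as soon as $\varphi_\gamma\equiv1$ on a fixed neighbourhood of $\operatorname{supp}\chi_\gamma$; \emph{(b)} $R[H,\Lambda_j]$, $[H,\Lambda_j]R$ and $(H+\Theta)^{-1/2}[H,\Lambda_j]$ are bounded (this is precisely the computation carried out in Lemma~\ref{resolvent}: the first-order commutator ``uses up'' only half a resolvent); \emph{(c)} the localised resolvent estimates $\normsch{\chi_\delta R\,\chi_{\delta'}}\le C\eta^{-1}e^{-c\eta|\delta-\delta'|}$, from local $L^2$-integrability of the Green function in dimension two together with the first resolvent identity for the $\eta$-powers and the Combes--Thomas bound \cite{CT}, and $\trnorm{\chi_\delta R^2\chi_{\delta'}}\le C\eta^{-2}e^{-c\eta|\delta-\delta'|}$, from the trace-class property of $\scal{x_2}^{-2\nu}(H+\Theta)^{-2}\scal{x_1}^{-2\nu}$ in \cite{BoGKS} (again combined with the resolvent identity and Combes--Thomas).

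\emph{Step 1 (reduction to doubly localised blocks).} Let $(\chi_\gamma)_{\gamma\in\z^2}$ be a smooth lattice partition of unity subordinate to unit cubes. Inserting $\chi_{\{|x_j|<1\}}=\sum_{\gamma\in\mathcal G_j}\chi_\gamma$ with $\mathcal G_j=\{\gamma\in\z^2:\ |\gamma_j|\le1\}$ yields $\chi_x R^2[H,\Lambda_j]=\sum_{\gamma\in\mathcal G_j}\chi_x R^2\varphi_\gamma[H,\Lambda_j]\chi_\gamma$, and likewise $R^2[H,\Lambda_j]\chi_x=\sum_{\gamma\in\mathcal G_j}\varphi_\gamma[H,\Lambda_j]\chi_\gamma R^2\chi_x$ (here only finitely many $\gamma$ contribute) and $R[H,\Lambda_2]R\,\chi_x=\sum_{\gamma\in\mathcal G_2}R\varphi_\gamma[H,\Lambda_2]\chi_\gamma R\,\chi_x$. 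Inserting further lattice partitions of unity $1=\sum_\delta\chi_\delta$ between every two adjacent resolvents turns each operator into a sum over a lattice multi-index of ``blocks'', each block a product whose factors are localised powers of $R$, a compactly supported first-order factor $\varphi_\gamma[H,\Lambda_j]\chi_\gamma$ (kept adjacent to a resolvent at all times), and bounded localisations.

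\emph{Steps 2--3 (one block, then summation).} For a block sandwiched between cubes at $p$ and $q$ the goal is $\trnorm{\mathrm{block}}\le C\eta^{-N}e^{-c\eta|p-q|}$. One absorbs the first-order factor into a neighbouring resolvent by \emph{(b)}, which leaves a localised product of total ``order'' $-4+1=-3$, strictly below the dimension $2$; one then splits the product at an interior cube into two pieces, distributing the remaining resolvent power so that each piece carries more than one resolvent's worth of smoothing, hence is Hilbert--Schmidt by \emph{(c)} (or by its $(H+\Theta)^{-3/2}$-analogue, or by borrowing one extra integer power of $R$), and applies H\"older's inequality for Schatten norms together with the Combes--Thomas decay. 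For $R[H,\Lambda_2]R\,\chi_x$ it is cleanest to first rewrite $R[H,\Lambda_2]R=-[R,\Lambda_2]$ via the resolvent identity, which makes the two-resolvent smoothing manifest. Finally one sums the block estimates: the sums over interior cubes $\delta$ are geometric series in $e^{-c\eta|\cdot|}$ and converge, while the outer sum over $\gamma\in\mathcal G_j$ runs over a strip, so $\sum_{\gamma\in\mathcal G_j}e^{-c\eta|x-\gamma|}\le C_\eta<\infty$. This gives $\chi_x R^2[H,\Lambda_j]\in\T_1$, and similarly for the other two operators.

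The main obstacle is not conceptual but lies in the block estimate: the commutator $[H,\Lambda_j]$ is first-order with coefficients containing the magnetic vector potential, which along the strip $\{|x_j|<\tfrac12\}$ is unbounded — growing linearly for $H_B$ and only in $L^2_{\mathrm{loc}}$ for the random magnetic model — so $[H,\Lambda_j]$ must never be treated as a bounded multiplication operator and must stay adjacent to a resolvent throughout; the inequalities that make this legitimate are exactly those of Lemma~\ref{resolvent}. Everything else is a careful but routine bookkeeping of cut-off functions, $\eta$-powers and Schatten exponents.
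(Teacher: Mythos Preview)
Your block-decomposition strategy is sound and is in fact the machinery the paper deploys in Section~5.1 for the more delicate operators, but for this lemma it is considerable overkill, and the paper's proof is a three-line algebraic trick instead. The paper fixes $M<\inf\sigma(H)$ and simply inserts the identity $R^{3/2}(M)(H+M)^{3/2}$: one writes
\[
\chi_x R^2(z)[H,\Lambda_j]=\chi_x R^{3/2}(M)\cdot R^2(z)(H+M)^2\cdot R^{1/2}(M)[H,\Lambda_j],
\]
and reads off that the first factor is trace class (localised $R^{3/2}$ in dimension two), the middle factor is bounded by the spectral theorem, and the last is bounded by the half-resolvent absorption you quote as fact \emph{(b)}. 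The second operator follows by adjoints. For $R[H,\Lambda_2]R\chi_x$ the paper commutes one resolvent through the bracket,
\[
R[H,\Lambda_j]R\chi_x=[H,\Lambda_j]R^2\chi_x-R[H,[H,\Lambda_j]]R^2\chi_x,
\]
and again each term is bounded times a localised $R^{3/2}$ or $R^2$. No partitions of unity, no Combes--Thomas, no summation.

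Your route would also reach the goal, but two remarks. First, the sentence ``rewrite $R[H,\Lambda_2]R=-[R,\Lambda_2]$ \dots\ which makes the two-resolvent smoothing manifest'' is backwards: the form $[R,\Lambda_2]$ \emph{hides} the two resolvents, and $[R,\Lambda_2]\chi_x$ written as $R\Lambda_2\chi_x-\Lambda_2 R\chi_x$ is a difference of two operators that are individually only Hilbert--Schmidt; the paper's double-commutator rewriting is what actually exposes a clean $R^2\chi_x$. Second, in your Step~2 split for $R[H,\Lambda_2]R\chi_x$ you cannot simply place one $R$ on each side of the cut, because $[H,\Lambda_2]\chi_\gamma R\chi_x$ is \emph{not} Hilbert--Schmidt in dimension two (the gradient of the Green function fails to be locally square-integrable); you really do need to borrow a half-power across the commutator, which is exactly the manoeuvre the paper performs directly and globally without any spatial decomposition.
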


\begin{proof}
Let $M<\inf \sigma(H)$. We introduce the power resolvent $R^2(M)$ and we write
\begin{equation}
 \chi_x R^2(z)\com{H,\LL_j}=\chi_x R^{\frac{3}{2}}(M) R^2(z)(H+M)^2 R^{\frac{1}{2}}(M)\com{H,\LL_j}.
\end{equation}
The trace class property follows from the fact that $R^2(z)(H+M)^2$ and $R^{\frac{1}{2}}(M)\com{H,\LL_j}$ are bounded and $\chi_x R^{\frac{3}{2}}(M)$ is trace class.
In particular, $ R^2(z)\com{H,\LL_j}\chi_x$ is also trace class since an operator $T$ belongs to $\T_1$ if and only if $T^* \in\T_1$.

\bigskip
For the last operator, it suffices to see that

\begin{equation}
 R\com{H,\Lambda_j} R\ \chi_x
=\com{H,\Lambda_j} R^2 \chi_x - R\com{H,\com{H,\Lambda_j}} R^2 \chi_x,
\end{equation}
and since $ \com{H,\Lambda_j} R^{\frac{1}{2}}$ is bounded as well as $R\com{H,\com{H,\Lambda_j}} $
and both $R^{\frac{3}{2}} \chi_x $ and $R^2 \chi_x $ are trace class then the operator $ R\com{H,\Lambda_j} R\ \chi_x$ is trace class.  
 
\end{proof}


\section{Strong convergence}\label{s-cv}

\begin{lemma}\label{strong cv R}
Let $H_\omega$ and $H_{\omega,a}$ be the operator defined in the sections~\ref{Bulk} and \ref{Edge} with corresponding resolvents $R_{\omega}$ and $R_{\omega,a}$. Then 
\begin{equation}\label{cv R}
 R_{\omega,a} \xrightarrow s R_{\omega},
\end{equation}
as $a\to\infty$ and $ \mathbb{P}-\mathrm{a.e}\ \omega$.
In particular, one has
\begin{equation}\label{cv lambda}
 \LL_{1,a}^\omega(t) \xrightarrow s \LL_1^\omega(t) \ \ \mathbb{P}-\mathrm{a.e}\ \omega,
\end{equation}
for all $t\in\R$.
\end{lemma}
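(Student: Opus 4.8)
The plan is to get \eqref{cv R} from the standard core criterion for strong resolvent convergence and then read off \eqref{cv lambda} from the convergence of the unitary groups. Fix $\omega$ in the full-measure set on which $H_\omega^\bullet$ and $H_{\omega,a}^\bullet$ ($\bullet=\ee,\m$) are essentially self-adjoint on $\mathcal{C}_c^\infty(\R^2)$; in particular $\mathcal{C}_c^\infty(\R^2)$ is a common core. The abstract fact I will use is: if $D$ is a core for a self-adjoint operator $A$ and the self-adjoint operators $A_n$ satisfy $D\subset\mathcal{D}(A_n)$ with $A_n\phi\to A\phi$ for every $\phi\in D$, then $A_n\to A$ in the strong resolvent sense. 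This is elementary: since $(A+i)D$ is dense it suffices to prove $R_n(i)\chi\to R(i)\chi$ for $\chi=(A+i)\phi$ with $\phi\in D$, and there $(R_n(i)-R(i))\chi=R_n(i)(A-A_n)\phi$, whose norm is at most $\norm{(A-A_n)\phi}\to0$ because $\norm{R_n(i)}\le1$.

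So it is enough to check that for each fixed $\phi\in\mathcal{C}_c^\infty(\R^2)$ one has $H_{\omega,a}\phi=H_\omega\phi$, i.e.\ $\Gamma_{\omega,a}\phi=0$, once $a$ is large. In the electric case $\Gamma_{\omega,a}^\ee=U_a$ is multiplication by $U(x_1+a)$, which vanishes on $\{x_1\ge-a\}$ by \eqref{assump U}; taking $a$ so large that $\mathrm{supp}\,\phi\subset\{x_1>-a\}$ gives $U_a\phi=0$. In the magnetic case I would fix the Landau gauge $\A_a=(0,\beta_a)$ with $\beta_a(x_1)=\int_0^{x_1}\B(s+a)\,\dd s$; since $\B(s)=0$ for $s\ge1$ by \eqref{assump B}, the change of variables $u=s+a$ shows $\beta_a(x_1)=0$ whenever $x_1+a\ge1$, provided $a\ge1$. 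Hence $\A_a$ and $\dv\A_a$ vanish on $\{x_1>1-a\}$, and choosing $a$ large enough that $\mathrm{supp}\,\phi\subset\{x_1>1-a\}$ makes $\Gamma_{\omega,a}^\m=-2\A_a.(-i\nabla-\A_0-\A_\omega)+i\dv\A_a+|\A_a|^2$ annihilate $\phi$. (Replacing this $\A_a$ by any other admissible vector potential for $\B_a$ conjugates $H_{\omega,a}^\m$ by a gauge transformation and does not affect the edge conductance, so the choice of gauge is harmless; cf.\ \cite{T} and \cite[Proposition 4.2]{DGR1}.) In both cases $H_{\omega,a}\phi\to H_\omega\phi$, and \eqref{cv R} follows.

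For \eqref{cv lambda}, strong resolvent convergence gives $\e^{-itH_{\omega,a}}\to\e^{-itH_\omega}$ strongly for every $t\in\R$ (locally uniformly in $t$), by the standard theorem on convergence of unitary groups. Then for $\psi\in\h$ the boundedness of $\LL_1$ yields $\LL_1\e^{-itH_{\omega,a}}\psi\to\LL_1\e^{-itH_\omega}\psi$, and applying $\e^{itH_{\omega,a}}$ and using Proposition~\ref{si}(iv) — a product of uniformly bounded, strongly convergent sequences converges strongly — gives $\LL_{1,a}^\omega(t)\psi=\e^{itH_{\omega,a}}\LL_1\e^{-itH_{\omega,a}}\psi\to\e^{itH_\omega}\LL_1\e^{-itH_\omega}\psi=\LL_1^\omega(t)\psi$.

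I expect the only subtle point to be the magnetic gauge bookkeeping: one must be certain that the wall's vector potential can be arranged to vanish near any fixed compact region for all sufficiently large $a$, and that switching to such a gauge does not interfere with the constructions elsewhere in the paper. The explicit Landau-gauge computation above handles the first part and gauge invariance of the edge conductance the second; with a less explicit potential one would instead invoke \cite[Proposition 4.2]{DGR1} for a compactly adapted gauge. Everything else is routine.
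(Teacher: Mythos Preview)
Your argument is correct and in fact more economical than the paper's. The paper proves \eqref{cv R} by writing $R_{\omega,a}-R_\omega=-R_\omega\Gamma_{\omega,a}R_{\omega,a}$ and then showing $\Gamma_{\omega,a}R_{\omega,a}f\to0$ for compactly supported $f$; since $R_{\omega,a}f$ is not compactly supported, this forces a Combes--Thomas estimate to control the tail of $R_{\omega,a}f$ in the region where the wall lives. You bypass this entirely by invoking the core criterion (Reed--Simon VIII.25) and applying $\Gamma_{\omega,a}$ directly to $\phi\in\mathcal{C}_c^\infty$, so that $\Gamma_{\omega,a}\phi$ is \emph{identically zero} once $a$ is large, rather than merely small. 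Your gauge bookkeeping is exactly the content of \cite[Proposition~4.2]{DGR1}, which the paper itself cites at the same point; the explicit Landau-gauge computation you give is a clean way to see it. For \eqref{cv lambda} both you and the paper appeal to the standard passage from strong resolvent convergence to strong convergence of the unitary groups; your write-up is simply more explicit. The only thing your route requires that the paper's does not is that $\mathcal{C}_c^\infty(\R^2)$ be a core for $H_{\omega,a}$ as well as for $H_\omega$, but this follows from the Leinfelder--Simader conditions since the wall perturbations are smooth and locally bounded.
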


\begin{proof}
We have $ R_{\omega,a}-R_\omega =-R_{\omega}\Gamma_{\omega,a}R_{\omega,a} $ and since $R_\omega$ is bounded, 
and $\Gamma_{\omega,a}R_{\omega,a}$ is uniformly bounded in $a$,
it suffices to prove the strong convergence of $\Gamma_{\omega,a}R_{\omega,a}$ in $\mathcal{C}_0^\infty(\R^2)$.
Let $f\in\mathcal{C}_0^\infty(\R^2)$ such that $\mathrm{supp}f=D_{r_1,r_2}=[-r_1,r_1]\times[-r_2,r_2]$ with $r_1>a$ and $r_2>0$.
\bigskip

$\bullet$ \textit{Electric case}. In this case, one has
\begin{equation}\label{st cv elec}
 \norm{\Gamma_{\omega,a}^\ee R_{\omega,a}^\ee f}= \norm{U_a R_{\omega,a}^\ee f}\le c_1 |\im z|^{-1} \ \e^{-\tilde c_1 |\im z||a-r_1|}
\end{equation}
which goes to $0$ as $a\to+\infty$ and where we have used Combes-Thomas estimate \cite{CT,GK1}.

$\bullet$ \textit{Magnetic case}. 
Since the magnetic field $\B$ is basically generated in the region $\mathcal{P}_a:=(\infty,-a)\times\R$, 
it follows from \cite[Proposition 4.2]{DGR1} that we apply for this semi-plane $\mathcal{P}_a$, that the vector potential vanishes outside $\mathcal{P}_a$.
This means that the operator
$$\Gamma_{\omega,a}^\m=-2\A_a.(-i\nabla-\A_0-\A_\omega)+i\dv\A_a+|\A_a|^2 $$
is supported on $\mathcal{P}_a$ and the strong convergence of $\Gamma_{\omega,a}^\m R_{\omega,a}^\m f$ follows similarly to \eqref{st cv elec}.
The second point \eqref{cv lambda} is a consequence of \eqref{cv R}, \cite{RS}.
\end{proof}
\bigskip

\begin{acknowledgement}
The author is grateful to F. Germinet for his valuable support and several discussions.
\end{acknowledgement}

\bigskip

\end{document}